\documentclass[a4paper,12pt]{article}

\usepackage[margin=1in,footskip=0.25in]{geometry}
\usepackage{amsthm}
\usepackage{mathtools}
%\usepackage[english]{babel} 
%\usepackage{enumitem}
%\usepackage{csquotes}
% inference trees
\usepackage{ebproof}

%\usepackage[maxbibnames=99]{biblatex}

% fonts & symbols
\usepackage[T1]{fontenc} 
\usepackage{lmodern} 
\usepackage{amssymb} 
\usepackage{stmaryrd} 
\usepackage{cmll}
\usepackage{xcolor}
\usepackage{biblatex}

\addbibresource{biblio.bib}

% tikzd
\usepackage{tikz-cd}

    \newtheorem{fact}{Fact}[section]
\newtheorem{definition}[fact]{Definition}
\newtheorem{proposition}[fact]{Proposition}
\newtheorem{theorem}[fact]{Theorem}
\newtheorem{lemma}[fact]{Lemma}

\newtheorem{example}[fact]{Example}
\newtheorem{remark}[fact]{Remark}
%See lipics-v2021-authors-guidelines.pdf for further information.
%for A4 paper format use option "a4paper", for US-letter use option "letterpaper"
%for british hyphenation rules use option "UKenglish", for american hyphenation rules use option "USenglish"
%for section-numbered lemmas etc., use "numberwithinsect"
%for enabling cleveref support, use "cleveref"
%for enabling autoref support, use "autoref"
%for anonymousing the authors (e.g. for double-blind review), add "anonymous"
%for enabling thm-restate support, use "thm-restate"
%for enabling a two-column layout for the author/affilation part (only applicable for > 6 authors), use "authorcolumns"
%for producing a PDF according the PDF/A standard, add "pdfa"

%\pdfoutput=1 %uncomment to ensure pdflatex processing (mandatatory e.g. to submit to arXiv)
%\hideLIPIcs  %uncomment to remove references to LIPIcs series (logo, DOI, ...), e.g. when preparing a pre-final version to be uploaded to arXiv or another public repository

%\graphicspath{{./graphics/}}%helpful if your graphic files are in another directory

%\bibliographystyle{plainurl}% the mandatory bibstyle

\title{Coherence by Normalization for Linear Multicategorical Structures} %TODO Please add

\author{Federico Olimpieri}

%\titlerunning{Dummy short title} %TODO optional, please use if title is longer than one line

%text

%symbols

\newcommand*{\streq}{\equiv}

\newcommand*{\Ob}[1]{\mathrm{ob}(#1)}

\newcommand{\Mon}{\mathsf{Mon}}

\newcommand{\MON}{\mathsf{Mon}}
\newcommand{\REP}{\mathsf{rep}}
\newcommand{\mcati}{\mathsf{M}}
\newcommand{\mocati}{\mathit{M}}

\newcommand{\mcatic}{\mathsf{ClosedM}}
\newcommand{\mcatir}{\mathsf{RepM}}
\newcommand{\mcatirs}{\mathsf{RepsM}}
\newcommand{\mauto}{\mathsf{autoM}}

%sets

\newcommand*\rTerms{\mathsf{\Lambda_{\mathsf{sc}}}}

\newcommand*\repTerms{\mathsf{\Lambda_{rep}}}
\newcommand*\repsTerms{\mathsf{\Lambda_{reps}}}
\newcommand*\autTerms{\mathsf{\Lambda_{aut}}}

\newcommand*\stabi[1]{\mathsf{Stab}(#1)}

\newcommand*\stricti[1]{\mathsf{strict}(#1)}
\newcommand*\ract[2]{#1^{#2}}
\newcommand*\atm{\mathsf{At}}

\newcommand*\freesrm[1]{\mathsf{SRM}(#1)}
\newcommand*\freeaut[1]{\mathsf{AUT}(#1)}

\newcommand*\freerm[1]{\mathsf{RM}(#1)}
\newcommand*\freescm[1]{\mathsf{SCM}(#1)}

\newcommand*\subst[3]{ #1 \sub{#3}{#2}}

\newcommand*{\sub}[2]{\{#1/#2\}}

% substitution with optional argument for \big, \bigg, etc.

% n-linear (non rigid) substitution

%functions

\newcommand*{\NF}[1]{\mathsf{nf}(#1)}

\newcommand*{\fv}[1]{\mathsf{fv}(#1)}

\newcommand*{\length}[1]{\mathsf{len}(#1)}

\newcommand*{\abs}[1]{\lvert #1 \rvert}

\newcommand*{\morph}[1]{\text{arr}(#1)}

\newcommand*{\source}[1]{\mathsf{source}(#1)}
\newcommand*{\act}[2]{{#1} \cdot {#2}}

\newcommand*{\size}[1]{\mathtt{size}\left(#1\right)}

\newcommand*{\targ}[1]{\mathsf{trg}(#1)}

%constructors

%\newcommand*{\digg}[1]{\mathrm{digg}(#1)}

\newcommand*{\tens}[3]{(#1_{#2} \otimes \dots \otimes #1_{#3})  }

\newcommand*{\seq}[1]{\langle #1 \rangle}

\newcommand*{\seqdots}[3]{\seq{#1_{#2}, \dots, #1_{#3} } }

\newcommand*{\la}[1]{\lambda #1.}

%contexts
\newcommand*{\hole}[1]{[#1]}

\newcommand*{\ctx}{\mathtt{C}}
\newcommand*{\ctxl}{\mathtt{L}}
\newcommand*{\ctxe}{\mathtt{E}}

%terms

%resource terms

%\reductions

\newcommand*{\repeq}{ =_{\mathsf{rep}} }
\newcommand*{\repaut}{ =_{\mathsf{aut}} }

\newcommand*{\toaut}{\to_{\mathsf{aut}} }

\newcommand*{\torp}{\to_{\mathsf{rep}} }

%\type

\newcommand*{\ty}{a}

\newcommand*{\tyl}{\vec{a}}

%distributors

\makeatletter
\newcommand*{\doublerightarrow}[2]{\mathrel{
  \settowidth{\@tempdima}{$\scriptstyle#1$}
  \settowidth{\@tempdimb}{$\scriptstyle#2$}
  \ifdim\@tempdimb>\@tempdima \@tempdima=\@tempdimb\fi
  \mathop{\vcenter{
    \offinterlineskip\ialign{\hbox to\dimexpr\@tempdima+1em{##}\cr
    \rightarrowfill\cr\noalign{\kern.5ex}
    \rightarrowfill\cr}}}\limits^{\!#1}_{\!#2}}}
\newcommand*{\triplerightarrow}[1]{\mathrel{
  \settowidth{\@tempdima}{$\scriptstyle#1$}
  \mathop{\vcenter{
    \offinterlineskip\ialign{\hbox to\dimexpr\@tempdima+1em{##}\cr
    \rightarrowfill\cr\noalign{\kern.5ex}
    \rightarrowfill\cr\noalign{\kern.5ex}
    \rightarrowfill\cr}}}\limits^{\!#1}}}
\newcommand{\colim@}[2]{\vtop{\m@th\ialign{##\cr
    \hfil$#1\operator@font lim$\hfil\cr
    \noalign{\nointerlineskip\kern1.5\ex@}#2\cr
    \noalign{\nointerlineskip\kern-\ex@}\cr}}}
\newcommand{\colim}{%
  \mathop{\mathpalette\colim@{\rightarrowfill@\textstyle}}\nmlimits@
}
\makeatother

\begin{document}

\maketitle

%TODO mandatory: add short abstract of the document
\begin{abstract} 
We establish a formal correspondence between resource calculi an appropriate linear multicategories. We consider the cases of (symmetric) representable, symmetric closed and autonomous multicategories. For all these structures, we prove that morphisms of the corresponding free constructions can be presented by means of typed resource terms, up to a reduction relation and a structural equivalence. Thanks to the linearity of the calculi, we can prove strong normalization of the reduction by combinatorial methods, defining appropriate decreasing measures. From this, we achieve a general coherence result: morphisms that live in the free multicategorical structures are the same whenever the normal forms of the associated terms are equal. As further application, we obtain syntactic proofs of Mac Lane's coherence theorems for (symmetric) monoidal categories.
\end{abstract}

\section{Introduction} 

The basis of the celebrated \emph{Curry-Howard-Lambek correspondence} is that  logical systems, typed $\lambda$-calculi and appropriate categorical constructions are different presentations of the same mathematical structure. An important consequence of the correspondence is that we can give \emph{syntactical presentations} of categories, that can be exploited to prove general results by means of elementary methods, such as induction. At the same time, we can use categorical methods to obtain a more modular and clean design of programming languages. The classic example is given by simply typed $ \lambda$-calculi and cartesian closed categories \cite{lam:hocl}. The idea is well-known: morphisms in free cartesian closed categories over sets are identified with equivalence classes of $\lambda $-terms up to $\beta \eta  $-equality. Another important setting is the \emph{linear} one, where we consider \emph{monoidal} categories instead of cartesian ones. In this case, \emph{linear logic} \cite{gir:ll} enters the scene: symmetric monoidal closed categories correspond to \emph{linear} $\lambda $-calculi. Computationally, this is a huge restriction, since linear terms can neither copy nor delete their inputs during computation.  A refinement of this picture can be obtained by switching from categories to \emph{multicategories} \cite{lam:multi}. These structures were indeed first introduced by Lambek to achieve a categorical framework formally closer to typed calculi/proof systems. Morphisms of multicategories can have multiple sources $ f : \ty_1, \dots, \ty_n \to \ty $, recalling the structure of a \emph{type judgment} $ x_1 : \ty_1, \dots, x_n : \ty_n \vdash f : \ty.  $ 

We are interested in establishing a Curry-Howard-Lambek style correspondence for appropriate linear multicategories and then employ it to obtain \emph{coherence results}. When we deal with complex structures such as tensor products, it becomes crucial to have a \emph{decision process} to establish whether two arrows are equal. This is called a \emph{coherence} problem. The main example is Mac Lane's original result \cite{mac:coh}, which states that \emph{all structural diagrams} in monoidal categories commute. If one considers more complex structures, the class of commutative diagrams is normally more restrictive. In the case of closed monoidal categories, Kelly and Mac Lane \cite{kellymac:graph} associated  \emph{graphs} to structural morphisms, obtaining the following coherence result: two structural arrows between appropriate objects\footnote{A restriction on the type of morphisms is needed due to the presence of the monoidal unit.} are equal whenever their graph is the same. We aim to achieve coherence results for linear multicategories, building on Lambek's and Mints \cite{mints:closed} intuition that coherence problems can be rephrased in the language of proof theory and obtained by exploiting appropriate notions of \emph{normalization} for proofs/terms \cite{lam:multi}. We do so by establishing  a formal connection between \emph{resource} calculi and \emph{linear} multicategorical structures. 

\subparagraph{Main Results} We study free multicategorical constructions for (symmetric) representable and closed structures. Representability consists of the multicategorical monoidal structure \cite{her:rep}. We prove that free linear multicategories built on appropriate signatures can be presented by means of typed resource calculi, where morphisms correspond to equivalence classes of terms up to a certain equivalence. We handle the tensor product \textit{via} \emph{pattern-matching}, presented as a syntactic \emph{explicit substitution}. The definition of our type systems is given in \emph{natural deduction} style: we have introduction and elimination rules for each type constructor. Our work is conceptually inspired by an `\emph{adjoint functors} point-of-view'. A basic fact of the classic Curry-Howard-Lambek correspondence is that $\beta \eta $-equality can be expressed by means of the \emph{unit} ($ \eta$) and the \emph{counit} ($ \beta $) of the adjunction between products and arrow types. We generalize this observation to the multicategorical setting, thus introducing an appropriate reduction relation that corresponds to the representable structure.  Indeed, a fundamental aspect of our work consists of the in depth study of resource terms rewriting. We introduce confluent and strongly normalizing reductions, that express the appropriate equalities. In order to do so, we exploit \emph{action-at-distance} to define our operational semantics, that has proven to be a successful approach to calculi with explicit substitution \cite{kes:expl, ben:sub, ben:stand}. An important feature of this approach is to distinguish the operational semantics, defined by action-at-distance, from a notion of \emph{structural equivalence}, that deals with commutations of explicit substitution with the other syntactic constructors. This approach overcomes the classic difficulties of rewriting systems with explicit substitution, allowing us to obtain confluence and strong normalization in an elegant way. In this way, we get a general \emph{coherence result}: two structural morphisms of linear multicategories are equal whenever the \emph{normal forms} of their associated terms are equal. In the context of (symmetric) representable multicategories, we apply this result to obtain a \emph{syntactic proof} of stronger coherence theorems, that can be seen as multicategorical versions of the classic MacLane coherence theorems for (symmetric) monoidal categories \cite{mac:coh}. The coherence theorem for representable multicategories was already proved in \cite{her:rep}. We give an alternative type-theoretic proof for it. To our knowledge, the other coherence results that we present are new. Moreover, exploiting the equivalence between monoidal categories and representable multicategories established by Hermida \cite{her:rep}, we are able to obtain the original Mac Lane's results as corollaries of our coherence theorems.

%This result can be refined \textit{via} \emph{categorification}. If we jump to the second dimension, we have that cartesian closed bicategories can be also presented by the means of typed $\lambda $-calculi with an explicit substitution. Moreover, in that context also the rewriting theory of $\lambda $-calculus has a categorical counterpart, given by structural \emph{2-cells} \cite{}.

\subparagraph{Related Work} Building on Lambek's original ideas, several researchers have advocated the use of multicategories to model computational structures. Hyland \cite{hyland:classical} proposed to rebuild the theory of pure $\lambda $-calculus by means of \emph{cartesian operads}, that is one-object cartesian multicategories. The idea of seeing resource calculi as multicategories was first employed by Mazza \emph{et al.} \cite{mazza:pol, mazza:hdr}. We build on their approach, showing that these calculi correspond to appropriate \emph{universal constructions}, namely free linear multicategories. The first resource calculus has been introduced by Boudol \cite{boud:res}. A similar construction was also independently considered by Kfoury \cite{kfo:res}. Resource terms have gained special interest thanks to the definition by Ehrhard and Regnier of the \emph{Taylor expansion} for $\lambda $-terms \cite{er:tay}. From this perspective, the resource calculus is a \emph{theory of approximation of programs} and has been successfully exploited to study the computational properties of $\lambda $-terms \cite{barbaro:tay, vaux:tay, mazza:pol, ol:thesis}. Our syntax is very close to the one of \emph{polyadic calculi} or \emph{rigid resource calculi} \cite{mazza:pol, tao:gen}. We need to extend the standard operational semantics, adding an $\eta $-reduction and a reduction for explicit substitution. Our $\eta $-reduction is built from an expansion rule instead of a contraction, since $ \eta$-expansion naturally fits the adjoint point-of-view, corresponding the the unit of the considered adjunction. In dealing with the technical rewriting issues, we follow \cite{mints:closed, jay:eta, kes:fix}, obtaining a terminating $\eta $-reduction. As already discussed, we handle the explicit substitution following Accattoli and Kesner methodology \cite{kes:expl, ben:stand, ben:sub}. 

 The calculi we present are also strongly related to \emph{intuitionistic linear logic} \cite{depaiva:ill}. It is well-known that resource calculi can be seen as fragments of ILL \cite{mazza:pol, mazza:hdr}. While ILL is presented \textit{via} sequent calculus, we chose a natural deduction setting, this latter being directly connected to the `adjoint functors' point-of-view. Accattoli and Kesner approach to explicit substitution allows us to bypass the cumbersome commutation rules needed for ILL
 rewriting. Moreover, resource calculi are closer to the multicategorical definitions (their constructors being \emph{unbiased} \cite{lein:high}, \textit{i.e.}, $k $-ary). Our handling of symmetries is also more canonical and explicit. We use the properties of \emph{shuffle permutations}, in a way similar to Hasegawa \cite{has:mell} and Shulman \cite{shu:tt}, also inspired by our ongoing work on \emph{bicategorical semantics} \cite{ol:hom}. In this way, the type system is \emph{syntax directed} and we are able to prove that, given a term, there exists at most one type derivation for it. The pioneering work of Mints \cite{mints:closed} is very close to our perspective. Mints introduced a linear $ \lambda$-calculus to study the coherence problem of closed category by the means of normalization. We build on that approach, extending it to several different structures and to the multicategorical setting.
 
Shulman's type theory for (symmetric) monoidal categories \cite{shu:tt} does not employ explicit substitutions, being able to handle tensors in way similar to what happens with standard product types. Our proposal differs considerably from Shulman's, both in purpose and in implementation. While Shulman's goal is to start from the categorical structure and define a `practical' type theory to make computations, ours consists of establishing a formal correspondence between two \emph{independent} worlds: resource calculi and linear multicategories and then employ it to prove results about the categorical structure. 

%The explicit substitution models the dynamics that lies behind the curtain of representability and for the purpose of our work should not be discarded.
 
\emph{Graphical approaches} to monoidal structures \cite{sel:graph} have been widely developed. Particularly interesting for our work are the Kelly-Mac Lane graphs \cite{kellymac:graph}, This approach has been extended \textit{via} linear logic, thanks to the notion of \emph{proof-net} \cite{blute:coh, hughes:star}. However, the handling of monoidal units needs extra care from this perspective, while the terms calculi approach can account for them without any particular complication.  

%\subparagraph{Outline} The paper starts we a section of prelimiaries (Section \ref{sec:prel}). We then introduce and study the (symmetric) representable (Section \ref{sec:rep}) and symmetric closed resource calculi. 

\section{Preliminaries}\label{sec:prel} We introduce some concepts, notations and conventions that we will use in the rest of the paper.
\subparagraph{Integers, Permutations and Lists} For $  n \in \mathbb{N} ,$ we set $ [n] = \{1, \dots, n \}$ and we denote by $ S_n $ the symmetric group of order $ n . $ The elements of $ S_n $ are permutations, that we identify with bijections $ [n] \cong [n] . $ Given $ \sigma, \tau \in S_n ,$ we denote by $ \sigma \circ \tau $ their composition. Given $ \sigma \in S_n , \tau \in S_m$ we denote by $ \sigma \oplus \tau : [n + m] \cong [n + m]  $ the evident induced permutation. We now introduce the notion of shuffle permutation, that is crucial to obtain canonical type derivations for resource terms with permutations (Proposition \ref{prop:symcano}).
\begin{definition}[Shuffles]
Let $ n_1, \dots, n_k \in \mathbb{N}  $ with $ n = \sum_{i = 1}^{k} n_i .$ A $ (n_1, \dots, n_k) $-\emph{shuffle} is a bijection $ \sigma : \sum_{i = 1}^{k} [n_i] \cong [n] $ such that the composite $   [n_i] \hookrightarrow   \sum_{i = 1}^{k} [n_i] \cong [n] $ is monotone for all $ i \in [k] . $ We denote the set of all $ (n_1, \dots, n_k) $-shuffles as $ \mathsf{shu}(n_1, \dots, n_k). $
\end{definition}

%\begin{lemma}\label{shuin} Let $ \sigma, \tau \in \mathsf{shu}(n_1, \dots, n_k)$ with $ n = \sum_{i = 1}^{k} n_i .$ If $ \sigma ([n]) = \tau ([n]) $ then $ \sigma = \tau . $  \end{lemma}

The relevant result on shuffles is the following, that induces canonical decomposition of arbitrary permutations over sums of integers.
\begin{lemma}    \label{shucan}
Every permutation $ \sigma \in S_{\sum_{i = 1}^{k} n_i} $ can be  canonically decomposed as $ \tau_0 \circ ( \bigoplus_{i = 1}^{k} \tau_i ) $ with $ \tau_0 \in  \mathsf{shu}(n_1, \dots, n_k)$ and $ \tau_i \in S_{n_i} $ for $ i \in [k] . $
\end{lemma}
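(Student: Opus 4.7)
The plan is to observe that shuffles form a canonical system of coset representatives for the subgroup $S_{n_1} \times \cdots \times S_{n_k}$ embedded into $S_n$ via $\oplus$; the decomposition then falls out by splitting off the ``interleaving'' part of $\sigma$ (a shuffle) from the ``within-block'' part (a direct sum of permutations of each $[n_i]$).

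I would fix notation by writing $[n] = \bigsqcup_{i=1}^{k} B_i$, where $B_i$ is the image of the canonical inclusion $[n_i] \hookrightarrow \sum_{j=1}^{k}[n_j] \cong [n]$, so $B_i$ is a contiguous interval of size $n_i$. Given $\sigma \in S_n$, set $S_i \eqdef \sigma(B_i)$, which is an arbitrary subset of $[n]$ of cardinality $n_i$, and the $S_i$ partition $[n]$.

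Next I would construct $\tau_0$ explicitly: let $\tau_0$ be the unique permutation that, for each $i \in [k]$, sends the $j$-th element of $B_i$ to the $j$-th smallest element of $S_i$. By construction $\tau_0$ is monotone on every block $B_i$, hence $\tau_0 \in \mathsf{shu}(n_1, \dots, n_k)$. Now consider $\rho \eqdef \tau_0^{-1} \circ \sigma$. Since both $\tau_0$ and $\sigma$ send $B_i$ onto $S_i$, the composite $\rho$ preserves each $B_i$ setwise, so its restriction to $B_i$ yields a permutation $\tau_i \in S_{n_i}$. By the very definition of the direct sum of permutations one has $\rho = \bigoplus_{i=1}^{k} \tau_i$, and therefore $\sigma = \tau_0 \circ \bigoplus_{i=1}^{k} \tau_i$ as required.

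Finally, I would verify canonicity. Suppose $\sigma = \tau_0' \circ \bigoplus_{i=1}^{k} \tau_i'$ is any such decomposition. Since $\bigoplus_i \tau_i'$ preserves each $B_i$ setwise, we must have $\tau_0'(B_i) = \sigma(B_i) = S_i$ for every $i$. But a shuffle is entirely determined by the sets $\tau_0'(B_i)$, because the monotonicity condition on each block forces the $j$-th element of $B_i$ to go to the $j$-th smallest element of $S_i$; so $\tau_0' = \tau_0$, and then $\bigoplus_i \tau_i' = \tau_0^{-1} \circ \sigma = \bigoplus_i \tau_i$ forces $\tau_i' = \tau_i$ for each $i$.

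The main obstacle is purely bookkeeping around the identifications $\sum_j [n_j] \cong [n]$ and the behaviour of $\oplus$; the genuine combinatorial content is the one-line observation that a shuffle is uniquely reconstructed from the sets it assigns to each block, which is immediate from monotonicity. No deeper argument is needed.
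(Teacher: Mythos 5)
Your proof is correct and complete: the paper states Lemma \ref{shucan} without proof (it is the standard fact that $(n_1,\dots,n_k)$-shuffles form a canonical system of coset representatives for the subgroup $S_{n_1}\times\cdots\times S_{n_k}$ of $S_n$), and your argument --- peel off the shuffle $\tau_0$ determined by the image sets $\sigma(B_i)$ via monotonicity, then observe that $\tau_0^{-1}\circ\sigma$ is block-preserving --- is exactly the expected one. Your uniqueness step, that a shuffle is determined by the sets it assigns to the blocks, is also the right justification of the word ``canonically'' and is what the paper later needs for uniqueness of type derivations.
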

Given a set $A $ and a list of its elements $ \gamma = \ty_1, \dots, \ty_k $ and $ \sigma \in S_k $ we set $ \act{\gamma}{\sigma} = a_{\sigma(1)}, \dots, \ty_{\sigma(k)} $ for the symmetric group right action. We write $\length{\gamma} $ for its length. We denote the \emph{stabilisers} for this action as $ \stabi{\gamma} = \{  \sigma \in S_k \mid   \act{\gamma}{\sigma}= \gamma  \} . $ Given lists $\gamma_1, \dots, \gamma_k ,$ we set $ \mathsf{shu}(\gamma_1, \dots, \gamma_k) = \mathsf{shu}(\length{\gamma_1}, \dots, \length{\gamma_k}).$

%\subparagraph{Multigraphs} 

%\begin{definition}A \emph{morphism} of multigraphs $ \mathcal{F} : \mathcal{G} \to \mathcal{H}  $ is the collection of the following data:\begin{itemize}\item A function $ \mathcal{F}_0 : \mathcal{G}_0 \to \mathcal{H}_0 .$\item For every $ a_1, \dots, a_n, b \in \mathcal{G}_0,  $ a family of maps \[        \mathcal{F}_{\ty_1, \dots, \ty_n, b} : \mathcal{G}(\ty_1, \dots, \ty_n, b) \to \mathcal{H}(\mathcal{F}_0 (\ty_1) , \dots, \mathcal{F}_0(\ty_n),  \mathcal{F}_0 (b))                \]\end{itemize}\end{definition}

%We shall make a small abuse of notation and just wrote $ \mathcal{F}(A), \mathcal{F}(s) $ for the action of a morphisms over nodes and edges. We shall write $ s : \ty_1, \dots, \ty_n \to b , \source{s} = \ty_1, \dots, \ty_n, \targ{s} = b$ meaning that $ s \in G(\ty_1, \dots, \ty_n, b) . $ 

\subparagraph{Multicategories} Multicategories constitute the main object of our work. A multicategory is a multigraph that comes equipped with an appropriate composition operation.

\begin{definition}A \emph{multigraph} $\mathcal{G}  $ is given by the following data:\begin{itemize}\item A collection of \emph{nodes} $ \mathcal{G}_0  \ni a, b, c \dots  $\item For every $ a_1, \dots, a_n, b \in \mathcal{G}_0 ,$ a collection of \emph{multiarrows} $ \mathcal{G}(\ty_1, \dots, \ty_n; b) \ni s, t, u \dots $\end{itemize} We denote by $\morph{\mathcal{G}} $ the set of all multiarrows of $\mathcal{G}. $
 \end{definition}
 
\begin{definition}
A \emph{multicategory} is a multigraph $ \mathcal{G} $ equipped with the following additional structure: 
\begin{itemize}
\item  A \emph{composition operation} $- \circ \seq{-, \dots, -} :\mathcal{G}(\ty_1, \dots, \ty_n; b) \times \prod_{i = 1}^{n} \mathcal{G}(\gamma_i, \ty_i) \to \mathcal{G}(\gamma_1, \dots, \gamma_n; \ty) .  $

\item \emph{identities}  $ id_a \in \mathcal{G}(\ty, \ty) . $

\end{itemize}

The former data is subjected to evident associativity and identity axioms. We call \emph{objects} the nodes of $ \mathcal{G} $ and \emph{morphisms} its multiarrows. 

\end{definition}

A multicategory can be equipped with structure. We now introduce the notions of \emph{symmetric}, \emph{closed} and \emph{representable} multicategories.

\begin{definition} A multicategory $ \mathcal{M} $ is \emph{symmetric} if, for $ \sigma \in S_k $ we have a family of bijections $   \act{-}{\sigma}  : \mathcal{M}(\gamma, \ty_1, \dots, \ty_k; \ty) \cong \mathcal{M}(\gamma, \ty_{\sigma(1)}, \dots, \ty_{\sigma (k)}; \ty)    $ that satisfies additional axioms \cite{lein:high}.\end{definition}

\begin{definition}A (right) \emph{closed} structure for a multicategory $ \mathcal{M} $ is given by a family of objects $ \tens{\ty}{1}{k} \multimap \ty \in \mathcal{M}$ and arrows $ ev_{\ty_1, \dots, \ty_k, \ty} : \ty_1,\dots, \ty_k, \tens{\ty}{1}{k}\multimap \ty \to \ty $ , for $ \ty_1, \dots, \ty_k, \ty \in \mathcal{M}, $ such that the maps
\[  ev \circ \seq{-, id_{a_1}, \dots, id_{a_k}} :  \mathcal{M}(\gamma ; \tens{\ty}{1}{k} \multimap \ty) \to \mathcal{M}(\gamma, \ty_1, \dots, \ty_k ; \ty)              \]induce a bijection, multinatural in $\gamma $ and natural in $\ty . $ We write $\lambda (-) $ to denote the inverses to these maps.  \end{definition}

\begin{definition}
A \emph{representable} structure for a multicategory $ \mathcal{M} $ is given by a family of objects $ \tens{\ty}{1}{k} \in \mathcal{M} $ and arrows $ \mathsf{re}_{\ty_1, \dots, \ty_k} : \ty_1,\dots, \ty_k \to \tens{\ty}{1}{k} $, for $ \ty_1, \dots, \ty_k\in \mathcal{M}, $ such that he maps\[ - \circ \seq{id_\gamma, \mathsf{re}, id_\delta
} :  \mathcal{M}(\gamma, \tens{\ty}{1}{k} , \delta ; \ty) \to \mathcal{M}(\gamma, \ty_1, \dots, \ty_k, \delta ; \ty)              \]induce a bijection, multinatural in $\gamma, \delta $ and natural in $\ty .$ We write $\mathsf{let}(-) $ to denote the inverses to these maps.\end{definition}

We use the name \emph{autonomous multicategories} to denote symmetric representable closed multicategories. We have categories of representable multicategories ($ \mcatir $), symmetric representable multicategories ($ \mcatirs $), closed multicategories ($ \mcatic $) and autonomous multicategories ($\mauto $), whose morphisms are functors that preserve the structure on the nose.

\subparagraph{Signatures} We introduce signatures for the structures we consider.

\begin{definition}
A \emph{representable signature} is a pair $\seq{\atm, \mathcal{R}} $ where  $\atm $ is a set of atoms $\atm $ and $\mathcal{R} $ is a multigraph with nodes generated by the following inductive grammar: 
\[  \mathcal{R}_0 \ni \ty ::= o \in \atm \mid \tens{\ty}{1}{k}  \qquad (k \in \mathbb{N}) .    \]
\end{definition}

\begin{definition}
A \emph{closed  signature} $ \mathcal{L} $ is a pair $\seq{\atm, \mathcal{L}} $ where  $\atm $ is a set of atoms $\atm $ and $\mathcal{L} $ is a multigraph with with nodes generated by the following inductive grammar: 
\[  \mathcal{L}_0 \ni \ty ::= o \in \atm \mid \tens{\ty}{1}{k} \multimap \ty \qquad (k \in \mathbb{N}).      \]
%A map of closed signatures $ \mathit{T} : \mathcal{L} \to \mathcal{L'} $ is a map of graphs such that $ \mathit{T}(\tens{\ty}{1}{k} \multimap \ty) = (\mathit{T}(\ty_1) \otimes \dots \otimes \mathit{T}(\ty_k) ) \multimap \mathit{T}(\ty) . $
\end{definition}

\begin{definition}
An \emph{autonomous signature} is a pair $\seq{\atm, \mathcal{H}} $ where  $\atm $ is a set of atoms $\atm $ and $\mathcal{H} $ is a multigraph with nodes generated by the following inductive grammar: 
\[  \mathcal{H}_0 \ni \ty ::= o \in \atm \mid \tens{\ty}{1}{k} \mid \tens{\ty}{1}{k} \multimap \ty  \qquad (k \in \mathbb{N}) .    \]
\end{definition}

We shall often identify a signature with its graph. There are categories $ \mathsf{ClosedSig} , \mathsf{RepSig}$ and $ \mathsf{AutoSig} $ for, respectively, closed, representable and autonomous signatures. We have forgetful functors from the categories $ \mathsf{ClosedM}, \mathsf{RepM}  $ and $ \mauto,$ which we denote by $ (\overline{-}) . $ One of the main goals of this paper is to build the left adjoints to those functors \textit{via} appropriate resource calculi.

\subparagraph{Monoidal Categories vs Representable Multicategories}  In order to transport coherence results from (symmetric) representable multicategories to ordinary (symmetric) monoidal categories, we shall employ an equivalence result due to Hermida \cite[Theorem 9.8]{her:rep}. Let $ \Mon $ be the category of monoidal categories and lax monoidal functors.
\begin{theorem}[\cite{her:rep}]\label{eqmonrep}There is an equivalence of categories 
    $
\begin{tikzcd}[row sep=0pt, column sep=10pt]
\mathsf{RepM}  \ar[rr,bend left=20,"{\mathsf{rep} (-)}"]
  &
  \simeq
  &
  \Mon
  \ar[ll,bend left=20,"\mathsf{mon}(-)"].
\end{tikzcd}  $
\end{theorem}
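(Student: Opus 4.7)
The plan is to construct the two functors explicitly and then exhibit natural isomorphisms $\mathsf{rep} \circ \mathsf{mon} \cong \mathrm{id}_{\Mon}$ and $\mathsf{mon} \circ \mathsf{rep} \cong \mathrm{id}_{\mathsf{RepM}}$. I would start with the easier direction, $\mathsf{mon}(-): \Mon \to \mathsf{RepM}$: given a monoidal category $\mathcal{C}$, set $\mathsf{mon}(\mathcal{C})_0 = \mathcal{C}_0$ and define multiarrows by $\mathsf{mon}(\mathcal{C})(a_1, \dots, a_n; b) = \mathcal{C}(a_1 \otimes \cdots \otimes a_n, b)$ under some fixed bracketing (say, left-associated) of the $n$-ary tensor. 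Multicategorical composition is reduced to ordinary composition in $\mathcal{C}$ after tensoring morphisms and inserting associators. The witnessing arrows $\mathsf{re}_{a_1, \dots, a_k} : a_1, \dots, a_k \to a_1 \otimes \cdots \otimes a_k$ are taken to be (the morphisms corresponding to) identities, and the universal property is precisely the definition of the hom-set.

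Next I would construct $\mathsf{rep}(-): \mathsf{RepM} \to \Mon$: take $\mathsf{rep}(\mathcal{M})_0 = \mathcal{M}_0$ and $\mathsf{rep}(\mathcal{M})(a, b) = \mathcal{M}(a; b)$, with tensor on objects given by the binary representing object $a \otimes b \eqdef (a, b)$ and unit $I \eqdef ()$ (the nullary representing object). The tensor of morphisms $f \otimes g$ is obtained from $\mathsf{re} \circ \seq{f, g}$ by applying $\mathsf{let}$ at the appropriate positions. The associator $\alpha_{a,b,c} : (a \otimes b) \otimes c \to a \otimes (b \otimes c)$ and unitors are extracted from the uniqueness part of the universal property: both sides of their defining equations are morphisms $(a, b, c) \to (a, b, c)$ in $\mathcal{M}$ that become equal after precomposition with $\mathsf{re}$, hence are equal by bijectivity.

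The pentagon and triangle then reduce to the uniqueness clause of representability, applied to multiarrows out of a four-fold tensor, respectively a tensor containing the unit. I would then verify that the constructions are functorial on morphisms of monoidal/representable multicategories, preserving the on-the-nose structure by construction. For the equivalence, one checks $\mathsf{rep}(\mathsf{mon}(\mathcal{C})) \cong \mathcal{C}$ essentially by unfolding, noting that the representing object $(a,b)$ in $\mathsf{mon}(\mathcal{C})$ is $a \otimes b$ up to the chosen bracketing, and the coherences agree; and $\mathsf{mon}(\mathsf{rep}(\mathcal{M})) \cong \mathcal{M}$ by using the bijection $\mathcal{M}(a_1, \dots, a_n; b) \cong \mathcal{M}(a_1 \otimes \cdots \otimes a_n; b)$ induced by iterating $\mathsf{let}(-)$.

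The main obstacle is bookkeeping, not conceptual: bracketings must be chosen coherently so that multicategorical associativity translates to the monoidal pentagon, and so that the natural isomorphisms of the equivalence respect units (the $()$-representing object is an easy source of mistakes, as witnessed by Kelly--Mac Lane's restriction in the closed case). Everything substantive is driven by the universal property of $\mathsf{re}$, which supplies both existence and uniqueness of the required structural maps, so once the data is laid out the coherence diagrams commute by a single application of the bijection in each case.
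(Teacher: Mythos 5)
Your outline follows essentially the same route as the paper, which does not reprove this result but imports it from Hermida \cite{her:rep} and, in its appendix, sketches exactly your two constructions: from a monoidal category $\mathbb{M}$, the multicategory with multiarrows $\mathbb{M}((\dots(a_1\otimes a_2)\otimes\dots)\otimes a_n, b)$ and composition obtained by tensoring morphisms and inserting a chosen structural isomorphism; and, from a representable multicategory, the monoidal category with $a\otimes b$ the binary representing object, $f\otimes g=\mathsf{re}\circ\seq{f,g}$, and associators produced by transposing along $\mathsf{let}(-)$, with pentagon and triangle discharged by the uniqueness clause of the representability bijection. (Be aware that your naming is swapped relative to the paper's appendix, where $\mathsf{rep}(-)$ goes from monoidal categories to multicategories and $\mathsf{mon}(-)$ back; the paper later writes $\mathsf{mon}(\freerm{\mathcal{R}})$ for the free \emph{monoidal} category, so your convention will collide with that usage.)

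Two caveats on your closing claim that ``everything substantive is driven by the universal property of $\mathsf{re}$.'' That is accurate only for the direction from representable multicategories to monoidal categories. In the other direction there is no universal property available yet: you must verify associativity and unitality of the multicategorical composition you define ``by inserting associators,'' and this is precisely the point the paper flags in the footnote after Theorem \ref{eqmonrep} --- the composition requires a \emph{choice} of structural isomorphisms (Hermida's Definition 9.2), and the axioms must be checked from the pentagon, triangle and naturality for that canonical choice, \emph{not} by appealing to Mac Lane's coherence theorem, since the paper uses this equivalence to deduce Mac Lane coherence in Sections \ref{sec:rep} and \ref{sec:sr}; a proof that quietly invokes ``all structural diagrams commute'' at this step would be circular for the paper's purposes. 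Secondly, ``functorial\dots preserving the on-the-nose structure by construction'' glosses the morphism level: with $\Mon$ taken to consist of lax monoidal functors, the induced multicategory functors use the lax comparison maps and do not preserve representing objects or $\mathsf{re}$ strictly, so you need to state precisely which classes of morphisms your two functors match up (this is where Hermida's 2-categorical formulation does real work), rather than asserting strict preservation by construction.
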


%The definition of $ \mathsf{mon}(\mathcal{M}) $ is straightforward, and consists of the category whose morphisms are the $1 $-ary morphisms in $ \mathcal{M} $ and the tensor product $ \ty \otimes b $ is given by the binary representable tensors. The unit is then the $0 $-ary tensor product. 
The representable structure of a monoidal category $(\mathsf{\mathbb{M}}, \otimes_{\mathbb{M}}, 1) $ is given by $ (\ty_1 \otimes_{\mathsf{\mathbb{M}}} \dots  \otimes_\mathsf{\mathbb{M}} \ty_k) = (\ty_1) \otimes_{\mathbb{M}} ( \ty_2 \otimes_{\mathbb{M}} ( \dots \otimes_{\mathbb{M}} \ty_k) \dots ). $ Then composition needs a choice of structural isomorphisms of $ \mathbb{M}  $ to be properly defined \cite[Definition 9.2]{her:rep}\footnote{If we assume Mac Lane's Coherence Theorem, the choice is unique. However, we shall not do so, since we are going to exploit Theorem \ref{eqmonrep} to \emph{transport} an appropriate coherence theorem on representable multicategories to ordinary monoidal categories, thus obtaining the Mac Lane's result as corollary.}. The former equivalence can be extended to the symmetric case in the natural way.
% by defining symmetries of   $\mathsf{symon}(\mathcal{M})$ \textit{via} the right action of permutations on representable maps, \textit{i.e.} $\sigma_{\ty, b} :b \otimes \ty \to \ty \otimes b := \act{\mathsf{let}(\mathsf{re}_{\ty,b} )}{\sigma}.$

%We recall two classic result on term rewriting systems that we shall use to prove strong normalization and confluence results. 

%We shall use the following classic results on term rewriting (see, \textit{e.g.}, \cite{kes:fix}).

%\begin{definition}Given two reduction relations $ \to_1, \to_2 \subseteq A \times A, $ a \emph{translation} of $ \to_1 $ to $\to_2$ consists of a function $ (-)^o : A \to A   $ s.t. if  $ M \to_1 N  $ then $  M^o \twoheadrightarrow_2  N^o . $ \end{definition}

%\begin{proposition}\label{normbyt}Let $ \to_{1} $ be a strongly normalizing and confluent reduction and $ \to_2$ be a strongly normalizing reduction. If we have a translation of any reduction step $ s \to_2 t $ to a a reduction path $  \NF{s}_1 \twoheadrightarrow_2 \NF{t}_1 $ then $ \to_1 \cup \to_2 $ is strongly normalizing.\end{proposition}

%\begin{proposition}Let $ \to_1 , \to_2 $ be two reduction with a translation of $ \to_1 \cup \to_2 $ to $\to_2 $ that is the identity on $ \to_2$-normal forms. If $ \to_2 $ is confluent and $\to_1 $ is normalizing, then $ \to_1 \cup \to_2 $ is confluent.\end{proposition}

\subparagraph{Notations and Conventions} Given a set of terms $A $ and a reduction relation $ \to_{\epsilon} \subseteq A \times A , $ we denote respectively as $\twoheadrightarrow_\epsilon $ and $  \to^{\ast}_{\epsilon} $ its transitive closure and its transitive and reflexive closure. We denote by $ =_{\epsilon} \subseteq A \times A$  the smallest equivalence relation generated by $  \to_{\epsilon} .$ For a confluent reduction, we denote by $ \NF{s}_\epsilon $ the normal form of $ s ,$ if it exists. Given an equivalence relation $ {\mathsf{e}} \subseteq {A \times A}, $ and $ s\in A, $ we denote by $ [s]_{\mathsf{e}} $ the corresponding equivalence class. We will often drop the annotation and just write $ [s].$ We fix a countable set of variables $\mathcal{V} ,$ that we will use to define each calculi. Terms are always considered up to renaming of bound variables. Given terms $ s, t_1, \dots, t_k $ and variables $ x_1, \dots, x_k $ we write $ \subst{s}{x_1, \dots, x_k}{t_1, \dots, t_k} $ to denote capture-avoiding substitutions. We often use the abbreviation $\subst{s}{\vec{x}}{\vec{t}} .$ To define reduction relations, we rely on appropriate notions of \emph{contexts with one hole}. Given a context with hole $\ctx $ and a term $s $ we write $ \ctx [s] $ for the capture-allowing substitution of the holes of $ \ctx$ by $ s$. The \emph{size} of a term $ \size{s}$ is the number of syntactic constructors appearing in its body.  The calculi we shall introduce are typed \emph{à la Church}, but we will constantly keep the typing implicit, to improve readability. Given $ \gamma \vdash s : \ty$ we write $ \ctx[\delta \vdash p : b] = s $ meaning that $\ctx[p] = s $ and the type derivation of $ \gamma \vdash p : b $ contains a subderivation with conclusion $ \delta \vdash p : b .$ Given a typing judgment $ x_1 : \ty_1, \dots, x_n : \ty_n \vdash s : \ty  $ we shall consider variables appearing in the typing context as bound and we will work up to renaming of those variables. We write $ {\pi} \triangleright {\gamma \vdash s : \ty} $ meaning that $\pi $ is a type derivation of conclusion $\gamma \vdash s : \ty .$   For any typing rule with multiple typing contexts, we assume those contexts to be disjoint. 

\section{A Resource Calculus for Representable Multicategories}\label{sec:rep}

We present our calculus for representable multicategories. We begin by introducing its syntax and typing, then we discuss its operational semantics. We prove confluence and strong normalization for its reduction. We show that equivalence classes of terms modulo reduction and a notion of \emph{structural equivalence} define the morphisms of free representable multicategories over a signature. As an application of this result, we give a proof of the coherence theorem for representable multicategories.

\begin{figure}[!t]		 \begin{gather*}  
			\begin{prooftree}  \hypo{f \in \mathcal{R}(\ty_1, \dots, \ty_n; b)  } \hypo{ \gamma_{1} \vdash s_{1} : \ty_{1} \dots \gamma_{n} \vdash s_{n} : \ty_{n}  }
\infer2{  { ( \gamma_{1}, \dots, \gamma_{n})} \vdash    f(s_1, \dots, s_n) : b }  \end{prooftree}  \qquad 
\begin{prooftree}  \hypo{ \gamma_1 \vdash s_1 : \ty_1 \dots \gamma_k \vdash s_k : \ty_k  }  \infer1{\gamma_1, \dots, \gamma_k \vdash {\seqdots{s}{1}{k}}: \tens{\ty}{1}{k} }  \end{prooftree} \\[1em] \begin{prooftree} \hypo{\ty \in \mathcal{R}_0 }\infer1{ x : \ty \vdash x : \ty }  \end{prooftree} \qquad  \begin{prooftree} \hypo{\gamma \vdash s : \tens{\ty}{1}{k}  } \hypo{  \delta, x_1 : \ty_1, \dots, x_k : \ty_k , \delta' \vdash t : b  }\infer2{  \delta, \gamma, \delta' \vdash {t [x_1^{\ty_1}, \dots, x_k^{\ty_k} := s] }: b  }  \end{prooftree}  \end{gather*} \hrulefill
\begin{align*}
&	\ctx ::= \hole{\cdot} \mid  \seq{s_1, \dots, \ctx, \dots, s_k}   \mid   \ctx [\vec{x} := t] \mid s [\vec{x} := \ctx]   \mid f (s_1, \dots, \ctx, \dots, s_k).
			\\
&	\ctxe ::= \hole{\cdot} \mid   \seq{s_1, \dots, \ctxe, \dots, s_k} \mid \ctxe [\vec{x} := s] \mid s [\vec{x} := \ctxe] \quad (\ctxe \neq \hole{\cdot} )  \mid f (s_1, \dots, \ctxe, \dots, s_k).
\\
&			 \ctxl ::= \hole{\cdot} \mid \ctxl [\vec{x} := t].
			\end{align*}
\hrulefill
	\caption{\small Representable Type System on a signature $\mathcal{R}$ and contexts with one hole. Types are the elements of $\mathcal{R}_0 . $ }
	\label{fig:rep-calc}
\end{figure}

\subparagraph{Representable Terms} Let $ \mathcal{R} $ be a representable signature. The \emph{representable resource terms} over $ \mathcal{R} $ are defined by the following inductive grammar: \[ \repTerms (\mathcal{R}) \ni s,t ::= x \in \mathcal{V}  \mid \seqdots{s}{1}{k}    \mid s[x^{\ty_1}_1, \dots, x^{\ty_k}_k := t] \mid f (s_1, \dots, s_k) \]  for $  k \in \mathbb{N}  $ and $ f \in \morph{\mathcal{R}} , \ty_i \in \mathcal{R}. $ A term of the shape $ \seqdots{s}{1}{k} $ is called a \emph{list}. A term of the shape $s[x_1, \dots, x_k := t]   $ is called an \emph{(explicit) substitution}.  Variables under the scope of an explicit substitution are bound. Given a term $ s,$ we denote by $\mathsf{ST}(s)$ the set of its \emph{subterms} defined in the natural way. 
\begin{remark}
Our calculus follows the linear logic tradition of modelling the tensor product structure by means of a $ \mathsf{let} $ constructor \cite{depaiva:ill}. We opted for the syntactic choice of an explicit substitution $s[x_1, \dots, x_k := t] , $ which stands for the more verbose $ \mathsf{let} $ expression, $ \mathsf{ let } \ \seq{x,_1, \dots, x_k} := t \ \mathsf{ in } \ s . $ Terms of the shape $ f (s_1, \dots, s_k)$ are needed to capture the multiarrows induced by the signature $ \mathcal{R}.  $
\end{remark}

Typing and contexts with hole for representable terms is defined in Figure \ref{fig:rep-calc}. A context is \emph{atomic} when it contains just atomic types. We define the following subset of terms $\mathsf{LT} = \{   \ctxl[\seqdots{s}{1}{k}] \mid \text{ for some context } \ctxl \text{ and terms } s_i        \}  .$

\begin{remark} The condition about disjoint contexts grants \emph{linearity}.  A term is \emph{linear} when each variable appears at most once in its body. It is easy to check that, by construction, all typed terms are linear. Moreover, given $ \gamma \vdash s : \ty , $ the context $\gamma $ is \emph{relevant}, meaning that it contains just the free variables of $ s.$ 
\end{remark}

 A type of the shape $ \tens{\ty}{1}{k}   $ is called a $ k$-ary \emph{tensor product}. We use a vector notation to refer to arbitrary tensors, \textit{eg.}, $ \tyl, \vec{b} \dots $ If $ k = 0 ,$ the type $ () $ is also called the \emph{unit}. We set $ \repTerms(\mathcal{R})(\ty_1, \dots, \ty_n; \ty) = \{ s \mid x_1 : \ty_1 , \dots, x_n : \ty_n \vdash_{\mathsf{rep}} s : \ty \text{ for some  } x_i \in \fv{s} .      \} . $ We observe that, given a representable term $ \gamma \vdash s : \ty  ,$ there exists a unique type derivation for it.

 \subparagraph{Terms Under Reduction} We now introduce the reduction relation for representable terms. This relation consists of the union of two different subreductions: $\beta $ and $ \eta$ reductions, defined in Figure \ref{fig:torp}. The \emph{structural equivalence} on terms is defined as the smallest congruence on terms generated by the rule of Figure \ref{fig:torp}. We assume that the context $ \ctx  $ does not bind any variable of $ t .$ 
 
 \begin{remark}
 Our $ \eta$-reduction consists of a restricted version of the standard notion of $\eta $-expansion, the restriction is needed to achieve strong normalization. We build on a well-established tradition in term rewriting \cite{mints:closed, jay:eta, kes:fix}. Unrestricted $\eta  $ is trivially non-terminating. Indeed, for $ x : (\ty \otimes b) \vdash x : (\ty \otimes b) $ we have the non-terminating chain $  x \to_\eta \seq{x,y}[x,y := z] \to_{\eta} \seq{x,y}[x,y := \seq{v,w}[v,w := z]] \to_\eta \dots   $ Hence, we need to forbid $ \eta$-reduction on the right side of a substitution term, that is exactly what the restricted $\eta $-contexts do. Moreover, there is also a problem of interaction between $\eta $ and $\beta . $  Consider $ s = \seq{x,y} $ well-typed, then we can produce the non-terminating chain $ s \to_\eta \seq{v, w} [v, w := \seq{x,y} ] \to_\beta s \to_\eta \dots   $ For this reason, the root-step of $\eta $ has to be restricted too. The presence of a substitution context in the $ \beta$-rule is an action-at-distance \cite{ben:stand}, that allows to `free' possible blocked redexes, as the following one $   x [  x := ( \seq{y} [ y := z ] )    ]     .   $ In this way, we can bypass traditional commutation rules and retrieve good rewriting properties. Structural equivalence intuitively says that explicit substitutions can `freely travel' in the body of a term.
 \end{remark}
 
 We prove that typings are preserved under reduction and structural equivalence.
 
 \begin{proposition}[Subject Reduction and Equivalence]\label{subredrep}
Let $ s \torp s' $ or $s \streq s' $ with $ \gamma \vdash s : \ty . $ then $ \gamma \vdash s' : \ty . $  \end{proposition}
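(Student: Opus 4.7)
The plan is to prove subject reduction by the standard recipe: first establish a substitution lemma, then verify preservation on the root rules ($\beta$, $\eta$ and the structural commutation), then close under arbitrary contexts by induction. Subject equivalence will fall out from the same kind of analysis of the single structural rule.

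First I would prove the Substitution Lemma: given $\pi \triangleright \gamma, x_1 : \ty_1, \dots, x_k : \ty_k, \gamma' \vdash t : b$ and derivations $\pi_i \triangleright \delta_i \vdash s_i : \ty_i$ with all contexts pairwise disjoint, then $\gamma, \delta_1, \dots, \delta_k, \gamma' \vdash \subst{t}{\vec{x}}{\vec{s}} : b$. This proceeds by induction on $\pi$; the linearity of typed terms ensures that each $x_i$ is consumed exactly once, so in every inductive case the variable $x_i$ lives in exactly one premise and the substitution is localized there. The only care needed is the case of a substitution constructor $t = t_1[\vec{y} := t_2]$, where $\vec{x}$ may split between the scope of the substitution and its body, using the disjointness hypothesis on typing contexts.

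Next I would handle the root steps. For $\beta$, whose root rule (due to action-at-distance) has essentially the shape $t[\vec{x} := \ctxl[\seqdots{s}{1}{k}]] \rootbeta \ctxl[\subst{t}{\vec{x}}{\vec{s}}]$, I invert the typing derivation: the outer substitution forces $\seqdots{s}{1}{k}$ under $\ctxl$ to have tensor type $\tens{\ty}{1}{k}$, hence each $s_i$ is typable at $\ty_i$. Peeling $\ctxl$ off amounts to pulling out a sequence of outer explicit substitutions whose bound variables can be safely propagated, and then the Substitution Lemma yields the conclusion. For $\eta$, whose rule expands $s$ of tensor type $\tens{\ty}{1}{k}$ into $\seqdots{y}{1}{k}[\vec{y} := s]$ (with the restriction on the $\ctxe$-contexts preventing non-termination), preservation is an immediate direct construction of the target derivation using fresh variables $y_i : \ty_i$.

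For structural equivalence, the single axiom permutes an outer substitution past another syntactic constructor of the form $\ctx$ (provided $\ctx$ does not bind free variables of the inner term $t$). I would verify by case analysis on $\ctx$ that inverting the typing of the left-hand side gives exactly the premises needed to retype the right-hand side; the disjoint-context convention and the side condition on free variables make the two typings interchangeable, and the symmetry of the equivalence gives the converse direction.

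Finally, I would lift both results to arbitrary reduction and equivalence contexts by a straightforward induction on $\ctx$, closing each case by reapplying the corresponding typing rule to the recursively obtained premise. The main obstacle is the $\beta$ case: handling the substitution context $\ctxl$ at distance while keeping variable scopes, freshness assumptions, and the disjointness of typing contexts all consistent. Once the Substitution Lemma is stated with enough generality to absorb the extra layers of $\ctxl$, everything else is bookkeeping.
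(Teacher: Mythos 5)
Your proposal is correct and follows essentially the same route as the paper: the paper's proof (sketched in Section 3 and detailed for the encompassing autonomous system in the appendix, Lemma \ref{aut:subjsubst} and Theorem \ref{aut:subjred}) is precisely an induction on the reduction/equivalence judgment whose $\beta$ root case is discharged by the same substitution lemma you state, with the $\eta$ and structural cases handled by direct retyping and the contextual cases by the induction hypothesis. Your additional remarks on peeling off the substitution context $\ctxl$ in the at-a-distance $\beta$ rule are exactly the bookkeeping the paper leaves implicit.
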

\begin{proof}
The proof is by induction on $ s \torp s' $ and $s \streq s' $ and exploits an appropriate substitution lemma.
\end{proof}
 
 We now prove that the structural equivalence is a strong bisimulation for the reduction $ \torp . $ Intuitively, this means that the equivalence does not affect terms rewriting.

 \begin{proposition}
 If $ s' \streq s  $ and $ s \toaut t $ there exists a term $ t'$ s.t.  $ t' \streq t $ and  $ s' \torp t' . $
 \end{proposition}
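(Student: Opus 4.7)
The plan is to prove the strong bisimulation by a local analysis, first reducing the premise $s' \streq s$ to an atomic form. Since $\streq$ is the congruence closure of a single generating axiom (commutation of explicit substitutions with the other syntactic constructors), it suffices to handle the case of one axiomatic rewrite $s' \streq s$ placed under an arbitrary one-hole context, and then iterate along the chain witnessing a general structural equivalence, using the symmetry of the axiom in both directions. Fixing such a one-step equivalence and a single reduction $s \toaut t$ firing at some redex position, I would proceed by induction on the pair consisting of the depth of the redex and the size of the context $\ctx$ locating the equivalence site in $s$.

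The case analysis splits on the relative position of the equivalence site and the redex inside $s$. Three scenarios arise: (i) the two sites lie in disjoint subterms of $s$, so exactly the same rule fires at exactly the same position in $s'$ and the equivalence is preserved on the two targets; (ii) one site is strictly nested inside the other without touching the pattern of the rule, handled by applying the induction hypothesis to the inner subterm and recomposing with the outer context; and (iii) the equivalence axiom displaces an explicit substitution across the boundary of the redex pattern, which is the only case where the two sites genuinely interact.

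Case (iii) is the crux, and this is where the action-at-distance design of the reduction pays off. For the $\beta$-rule, which already fires modulo an arbitrary substitution context $\ctxl$, an explicit substitution that the axiom moves into or out of the head of the redex is simply absorbed into or stripped from $\ctxl$, so a single matching representable step from $s'$ yields a term $t'$ structurally equivalent to $t$. For the restricted $\eta$-rule, I would verify that the axiom never moves a substitution into or out of a position forbidden by the $\ctxe$-restriction (precisely what prevents non-termination), so the corresponding $\eta$-expansion from $s'$ remains admissible. An auxiliary commutation lemma between the meta-substitution triggered by a $\beta$-firing and $\streq$ is required to identify the two targets up to structural equivalence.

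The main technical obstacle is ensuring that the $\toaut$-step from $s$ is matched by exactly one $\torp$-step from $s'$, as strong bisimulation requires, rather than by several smaller ones or by no step at all. This is precisely the content of case (iii): I need to rule out the possibility that the equivalence axiom either creates a fresh redex in $s'$ that was absent in $s$, or destroys the redex fired in $s$. Both phenomena are excluded by the shape of the generating axiom, which only relocates explicit substitutions and never introduces or removes occurrences of other syntactic constructors; together with the action-at-distance design of $\beta$ and the contextual restriction of $\eta$, this guarantees a strict one-to-one matching of reduction steps on either side of $\streq$.
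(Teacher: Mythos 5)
Your proposal is correct and matches the paper's proof in all essentials: the paper likewise proceeds by induction on the derivation of $s' \streq s$ with a case analysis on where the reduction fires, resolves the genuine overlap in the $\beta$ case by absorbing the displaced explicit substitution into the action-at-distance context $\ctxl$, checks that the $\eta$ side conditions are stable under the equivalence, and invokes exactly the auxiliary lemma you identify, namely that meta-substitution commutes with relocation of explicit substitutions up to $\streq$.
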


 We show that we can associate appropriate measures to terms that decrease under reduction. For $\beta $, we just consider the size of terms. For $ \eta$, we build on Mints approach \cite{mints:closed}. We define the \emph{size} of a type by induction: $\size{o} = 0, \size{\seqdots{\ty}{1}{k}} = 1 + \sum \size{\ty_i}. $  Given $\gamma \vdash s  : \ty $ we define a set of typed subterms of $ s$: $\mathsf{EST} (s) = \{\delta \vdash p : 	\ty \mid p \in  \mathsf{ST}(s) \setminus \mathsf{LT}  \text{ s.t. } \ctxe[\delta \vdash p : \ty] = s  \text{ for some context } \ctxe  \} .$ We set $ \eta (s) =  \sum_{\delta \vdash p : \ty \in \mathsf{EST}(s)} \size{\ty}  .  $
 
 % $\sum_{ \seq{\delta, p,\ty}  \in \mathsf{EST} (s)} \size{\ty}  .$  

 \begin{remark} The size of terms decreases under $\beta$-reduction as a consequence of \emph{linearity}. Redexes cannot be copied nor deleted under reduction, since the substitution is linear. This fact is trivially false for standard $\lambda $-calculi, where the size of terms can possibly grow during computation.  The intuition behind the $ \eta$ measure is that we are counting all subterms of $s $ on which we could perform the $\eta $-reduction. The restrictions on the shape of $ p \in \mathsf{EST}(s)$ is indeed directly derived from the ones on $\eta $-reduction. 
 \end{remark}

 \begin{proposition} \label{repdec} The following statements hold.
If $ s \to_{\beta} s' $ then $ \size {s'} < \size s $;  if $  s \to_{\eta} s'$ then $ \eta (s') < \eta (s) .$
 \end{proposition}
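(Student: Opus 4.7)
The plan is to prove both clauses by induction on the context in which the root redex is fired. Both measures are compositional: $\size{\ctx[p]} = \size{\ctx} + \size{p}$ by direct counting of constructors, and for $\eta(\cdot)$ the contributions from positions strictly above the redex are unchanged because subject reduction preserves types above the redex and because the predicates ``belongs to $\mathsf{LT}$'' and ``position is reachable by a restricted context'' depend only on the outer spine, not on the body of a deeply nested subterm. So the content lies in the two root cases.

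For the $\beta$ root step, the rule $s[\vec x := \ctxl[\seq{\vec t}]] \mapsto_\beta \ctxl[s\{\vec t/\vec x\}]$ discards two constructors that are not re-created on the right-hand side, namely the explicit substitution binder and the outer tuple $\seq{\cdot}$, while performing a \emph{linear} substitution that, by the linearity remark after the type system, just replaces the unique occurrence of each $x_i$ in $s$ by $t_i$. Consequently $\size{\text{LHS}} - \size{\text{RHS}} \geq 2 > 0$, giving the strict decrease.

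For the $\eta$ root step, the rule expands a subterm $p$ of type $\tens{\ty}{1}{k}$, sitting in a restricted-context position and with $p \notin \mathsf{LT}$, into $p' = \seq{x_1, \dots, x_k}[\vec x := p]$. I will inspect how the contributions to $\eta$ change at this position. Before the step, $p$ contributes $\size{\tens{\ty}{1}{k}} = 1 + \sum_i \size{\ty_i}$. After the step: both $p'$ and its body $\seq{x_1, \dots, x_k}$ belong to $\mathsf{LT}$ and therefore contribute nothing; each fresh variable $x_i$ is reached by a restricted context (the list slot and the left slot of a substitution are both allowed in $\ctxe$), contributing $\size{\ty_i}$ each; the demoted copy of $p$ sits directly on the right of a substitution, a position forbidden by the clause $\ctxe \neq \hole{\cdot}$ in the definition of $\ctxe$, so it contributes nothing; and every proper subterm $q$ of $p$ is still reachable via a restricted context of the form $\ctxe[\seq{\vec x}[\vec x := C']]$ with $C' \neq \hole{\cdot}$, so its contribution is preserved. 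The net change is $\sum_i \size{\ty_i} - (1 + \sum_i \size{\ty_i}) = -1 < 0$.

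The main obstacle is precisely this accounting in the $\eta$ case: one must check in one stroke that the fresh variables genuinely contribute, that the expanded root position is absorbed by $\mathsf{LT}$, that the demoted copy of $p$ is quarantined by the restriction $\ctxe \neq \hole{\cdot}$ on the right of substitutions, and that every proper subterm of $p$ retains exactly the same contribution. The definition of the restricted contexts $\ctxe$ and the exclusion $\mathsf{ST}(s) \setminus \mathsf{LT}$ in $\mathsf{EST}$ are tailored so that this calculation gives exactly $-1$; relaxing either condition would resurrect the non-terminating chains described in the remark.
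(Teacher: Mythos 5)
Your proof is correct and follows essentially the same route as the paper's (appendix) argument: a root-case count for $\beta$ --- where the paper packages your linearity count into the substitution-size lemma $\size{\subst{s}{\vec{x}}{\vec{t}}} = \size{s} + \sum \size{t_i} - k$ --- plus an $\mathsf{EST}$-based accounting for $\eta$, and your net $-1$ computation is in fact more careful than the paper's, which states a set identity neglecting the fresh variables' contributions. One minor imprecision: contributions of positions strictly above an $\eta$-redex are not always \emph{unchanged} --- when the redex is the body of an explicit substitution, that substitution term enters $\mathsf{LT}$ after the step and loses its contribution --- but since such contributions can only decrease, the strict inequality is unaffected.
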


\begin{proposition}\label{sepsn}
The reductions $ \to_{\beta} $ and $\to_{\eta} $  are separately strongly normalizing and confluent.
\end{proposition}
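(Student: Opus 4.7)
The plan is to handle strong normalization and confluence separately, and for each reduction in turn. Strong normalization is immediate from Proposition \ref{repdec}: an infinite $\beta$-reduction would yield an infinite strictly decreasing chain of natural numbers via $\size{-}$, and an infinite $\eta$-reduction would likewise strictly decrease $\eta(-)$; well-foundedness of $\mathbb{N}$ rules both out.

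For confluence, strong normalization combined with Newman's Lemma reduces the task to local confluence for each reduction. I would proceed by inspecting critical pairs in the standard way. The key structural fact is \emph{linearity}: by the Remark following the typing rules, each bound variable occurs exactly once and each typing context is relevant, so firing one redex neither duplicates nor erases any other redex elsewhere in the term. For $\to_\beta$, disjoint peaks commute in a single step on each side, while a nested peak has the inner redex transported faithfully by the linear substitution performed by the outer step; the action-at-distance through $\ctxl$ does not create obstructions, because a linear substitution distributes cleanly past a list context. This distributivity is what an appropriate substitution lemma, analogous to the one used in Proposition \ref{subredrep}, would record.

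For $\to_\eta$, one must additionally verify that the syntactic restrictions on the rule — no expansion on the right of a substitution and no expansion of a subterm already in $\mathsf{LT}$ — are stable under firing other $\eta$-steps. Expanding a subterm $p$ of tensor type into $\seq{\vec{y}}[\vec{y}:=p]$ places $p$ into a forbidden (right-of-substitution) position, thereby creating no new $\eta$-redex targeting $p$ itself, and leaves the remaining members of $\mathsf{EST}$ untouched up to bookkeeping renaming. Disjoint and nested $\eta$-peaks then close in one step on each side by the same linearity argument as for $\beta$.

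The main obstacle I anticipate lies in the $\beta$ case: spelling out the residual relation when a second $\beta$-redex occurs inside the list $\seqdots{t}{1}{k}$ being consumed or inside the receiving body, and handling the action-at-distance cleanly so that the two possible reductions genuinely meet. Once a substitution lemma organizing this bookkeeping is in place, local confluence — and hence, via Newman's Lemma, confluence — for both reductions follows by routine case analysis on the shape of overlapping redexes.
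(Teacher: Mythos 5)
Your proposal is correct and follows essentially the same route as the paper: strong normalization falls out of the decreasing measures of Proposition \ref{repdec}, and confluence is obtained by establishing local confluence (by case analysis on overlapping redexes, using linearity and a substitution lemma) and then invoking Newman's Lemma. The extra discussion of the $\eta$-restrictions and the action-at-distance in the $\beta$ case is a reasonable elaboration of what the paper leaves as ``local confluence by induction.''
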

\begin{proof}
Strong normalization is a corollary of the former proposition. For confluence, first one proves local confluence by induction and then apply Newman's Lemma.
\end{proof}

We want to extend the result of separate strong normalization and confluence to the whole $\torp  $-reduction. To do so, we prove that $ \beta$ and $\eta $ suitably \emph{commutes}.

\begin{proposition}\label{rep:commuta}
If $ s \to_{\beta}^\ast t \to_{\eta}^{\ast} t' $ there exists $ s'  $ s.t. $ s \to_{\eta}^\ast s' $ and $ s' \to_{\beta}^\ast t' . $ 
\end{proposition}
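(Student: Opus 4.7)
The plan is to reduce the claim to a one-step postponement lemma: if $s \to_\beta t \to_\eta t'$, then there exists $s'$ with $s \to_\eta^\ast s' \to_\beta^\ast t'$. From this local version the proposition follows by a standard double induction, first on the length of $t \to_\eta^\ast t'$ (iterating the local lemma) and then on the length of $s \to_\beta^\ast t$. No subtlety arises in this inductive step: the multistep version is purely diagrammatic once the one-step version is established.

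To prove the one-step postponement, I would perform a case analysis on the position of the contracted $\eta$-redex $R_\eta$ in $t$ relative to the contractum of the $\beta$-step $s \to_\beta t$. Write the $\beta$-redex (with action-at-distance) as $r[\vec{x} := \ctxl[\seqdots{t}{1}{k}]]$, reducing to $\ctxl[r\{\vec{t}/\vec{x}\}]$. Three cases arise. First, if $R_\eta$ is at a position disjoint from that modified by the $\beta$-step, then it is already present in $s$, and firing it there yields $s'$ from which the same $\beta$-step leads to $t'$. Second, if $R_\eta$ lies inside one of the substituted subterms $t_i$, then by linearity $t_i$ appears exactly once in $t$, so $R_\eta$ corresponds to a unique $\eta$-redex already present in $s$ inside $t_i$; firing it first in $s$ yields $s'$, and the same $\beta$-step takes $s'$ to $t'$.

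The delicate case is \emph{creation}: the $\beta$-step produces an $\eta$-redex that was not present in $s$ because a subterm of the contractum $r\{\vec{t}/\vec{x}\}$ now has a shape on which $\eta$ can fire. Here the syntactic restrictions on $\eta$-reduction discussed in the remark become crucial: $\eta$ cannot fire on a list $\seqdots{s}{1}{k}$, nor on the right-hand side of an explicit substitution, nor on the output of another $\eta$-expansion. I would verify, case by case on the structure of the context surrounding the variable $x_i$ substituted, that these restrictions rule out the pathological creations; in each remaining subcase, one finds a term $s'$ obtained by performing one or more $\eta$-expansions in $s$ at positions anticipating the creation, from which a short $\beta^\ast$-sequence (reproducing the original $\beta$-step and possibly collapsing introduced substitutions) reaches $t'$.

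The main obstacle is this enumeration of creation patterns, which must be exhaustive and respect the action-at-distance shape of $\beta$-redexes. This is in spirit the same analysis carried out by Mints \cite{mints:closed} for closed categories and by the Accattoli--Kesner approach to explicit substitution \cite{ben:stand} cited in the paper, and the restrictions on $\eta$ have been designed precisely so that this argument goes through cleanly in the present linear setting.
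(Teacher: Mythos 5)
Your overall route --- a local postponement step followed by iteration, with the local step proved by tracking the position of the $\eta$-redex relative to the $\beta$-step (disjoint; inside a substituted subterm, using linearity; or created, and then anticipated on $s$) --- is essentially the paper's. The paper packages the first two cases into an anti-substitution lemma (if $\subst{s}{\vec{x}}{\vec{t}} \to_{\eta} p$ then the $\eta$-step can be traced to $s$ or to $\vec{t}$, with $p$ again a substitution instance) and handles creation in the contextual cases of an induction on the $\beta$-step, by performing the same root expansion directly on $s$; this is legitimate because subject reduction preserves the type and the shape conditions ($\notin \mathsf{LT}$, resp.\ $\notin \mathsf{AT}$) are stable under the internal $\beta$-step. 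So your case analysis, including the treatment of creation by anticipation, matches the intended proof.

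The genuine gap is in the lifting to multisteps. Your local lemma has the form $\to_{\beta} \cdot \to_{\eta} \subseteq \to_{\eta}^{\ast} \cdot \to_{\beta}^{\ast}$, with ``one or more'' $\eta$-expansions in the creation case, and you assert that the multistep statement then follows ``purely diagrammatically'' with ``no subtlety''. That implication does not hold in general: with stars on both sides of the local square, the tiling need not terminate (the same phenomenon as local commutation failing to imply commutation in the absence of a termination argument), so the double induction you invoke has no obvious decreasing measure once the number of $\eta$-steps produced by a swap is unbounded. The paper's local lemma is strictly stronger --- if $s \to_{\beta} t \to_{\eta} t'$ then $s \to_{\eta} s'$ in \emph{exactly one} step and $s' \to_{\beta}^{\ast} t'$ --- and it is this single-$\eta$ form that makes the induction on the length of the $\beta$-sequence go through (first $\to_{\beta}^{\ast} \cdot \to_{\eta} \subseteq \to_{\eta} \cdot \to_{\beta}^{\ast}$, then iterate over the $\eta$-steps). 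To repair your argument, either sharpen the case analysis so that a single $\eta$-expansion on $s$ always suffices (linearity gives a unique residual position in the disjoint and substitution cases, and one root expansion suffices in the creation case), or justify termination of the tiling by an external measure such as strong normalization of $\eta$ (Proposition \ref{sepsn}). Without one of these, the passage from your local lemma to Proposition \ref{rep:commuta} is unproved.
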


\begin{theorem}\label{aut:snc}
The reduction $\torp $ is confluent and strongly normalizing. 
\end{theorem}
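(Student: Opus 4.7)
The plan is to deduce both properties from the separate results of Proposition~\ref{sepsn} together with the commutation of Proposition~\ref{rep:commuta}.

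For confluence, the cleanest route is the Hindley--Rosen lemma. Proposition~\ref{sepsn} gives confluence of each of $\to_\beta$ and $\to_\eta$, and Proposition~\ref{rep:commuta} is exactly the commutation condition $\to_\beta^{\ast}\cdot\to_\eta^{\ast}\subseteq\to_\eta^{\ast}\cdot\to_\beta^{\ast}$. Hindley--Rosen then yields confluence of the union $\torp$.

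For strong normalization, the plan is to use the lexicographic measure $\mu(s)=(\eta(s),\size{s})\in\N\times\N$ with the well-founded lex order. Proposition~\ref{repdec} already provides that $\to_\eta$ strictly decreases $\eta(-)$ and that $\to_\beta$ strictly decreases $\size{-}$. For $\mu$ to strictly decrease on every $\torp$-step it thus remains to prove the auxiliary lemma that \emph{$\to_\beta$ does not increase $\eta(-)$}. Granted this, an $\eta$-step strictly drops the first coordinate while a $\beta$-step keeps it equal (or drops it) and strictly drops the second, so $\mu$ decreases lexicographically in the well-founded order on $\N^{2}$, and $\torp$ is SN. As a sanity check that the detour through lex measures is necessary, note that commutation plus separate SN is not by itself enough: iterating Proposition~\ref{rep:commuta} on the finite prefixes of a hypothetical infinite $\torp$-chain only confines the chain to a finite set of reachable terms (the $\eta$-reducts of $s_{0}$ form a finite set by SN plus finite branching, and each of their $\beta$-reducts is then also a finite set), which isolates the problem as that of ruling out a $\torp$-cycle $s^{\ast}\torp^{+}s^{\ast}$; by commutation such a cycle factors as $s^{\ast}\to_\eta^{+}u\to_\beta^{+}s^{\ast}$ with $u\neq s^{\ast}$, and the contradiction $\eta(u)<\eta(s^{\ast})\leq\eta(u)$ again needs the same auxiliary lemma.

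The main obstacle is therefore the non-increase of $\eta(-)$ under $\to_\beta$. I would prove it by induction on the context of the $\beta$-redex, reducing to the root-step case, in which a redex of the shape $p[\vec{x}:=\ctxl[\seqdots{t}{1}{k}]]$ rewrites to $\ctxl[p\sub{\vec{t}}{\vec{x}}]$. Linearity is the key: each variable $x_{i}:\ty_{i}$ occurs exactly once in $p$, contributing $\size{\ty_{i}}$ to $\eta(-)$ through its unique occurrence in $\mathsf{EST}$; after the reduction that position is occupied by $t_{i}$, which by subject reduction (Proposition~\ref{subredrep}) still has type $\ty_{i}$, so the contribution from the replaced positions is preserved. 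The outer list and substitution constructors are then deleted, together with whatever contribution their positions gave to $\eta(-)$ on the left, and the $\mathsf{LT}$/$\ctxe$ restrictions built into the definition of $\eta(-)$ ensure that no subterm on the right acquires a type-size contribution that was absent on the left (in particular, the whole list $\seqdots{t}{1}{k}$ sat in substitution-RHS position and was already excluded, and its components $t_{i}$ sat in list position, hence in $\mathsf{LT}$ whenever themselves lists). Thus $\eta(-)$ cannot grow under $\beta$, and the lex-order argument above concludes.
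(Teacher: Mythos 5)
Both halves of your argument have gaps, and neither matches the paper's route.

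For confluence, Proposition~\ref{rep:commuta} is not the hypothesis of Hindley--Rosen. What it gives is a \emph{postponement} property of composed sequences, $\to_\beta^\ast\cdot\to_\eta^\ast\subseteq\to_\eta^\ast\cdot\to_\beta^\ast$; Hindley--Rosen instead needs the two reductions to \emph{commute} in the peak sense: whenever $s\to_\beta^\ast t_1$ and $s\to_\eta^\ast t_2$ there must exist $u$ with $t_1\to_\eta^\ast u$ and $t_2\to_\beta^\ast u$. That joinability statement is not among the results you cite and does not follow from postponement, so this part does not go through as written. The paper proceeds differently: it establishes strong normalization first and then obtains confluence from local confluence of $\torp$ plus Newman's Lemma.

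For strong normalization, the auxiliary lemma your lexicographic measure rests on --- that $\to_\beta$ does not increase $\eta(-)$ --- is false. Take $v:(o\otimes o)$, $y_1:(),\ y_2:()$ and the well-typed redex $T=v[-:=\seq{}[-:=y_1][-:=y_2]]$ (a $\beta$ root step with $k=0$ and $\ctxl=\hole{\cdot}[-:=y_1][-:=y_2]$), so $T\to_\beta T'=v[-:=y_1][-:=y_2]$. In $T$ the only contributing elements of $\mathsf{EST}$ are $T$ itself and $v$: the spine subterms $\seq{}[-:=y_1][-:=y_2]$, $\seq{}[-:=y_1]$, $\seq{}$ lie in $\mathsf{LT}$, and $y_1,y_2$ sit exactly at the right-hand side of a substitution, excluded by the $\ctxe$-grammar; hence $\eta(T)=2\cdot\size{(o\otimes o)}=2$. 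In $T'$ the subterms $T'$, $v[-:=y_1]$ and $v$ all contribute, so $\eta(T')=3$. This is precisely the case your induction misses: the intermediate terms of the substitution spine $\ctxl$ have a list at their core before the step (hence are in $\mathsf{LT}$ and not counted), but after the step their core is $\subst{s}{\vec{x}}{\vec{t}}$, so they can become genuinely new potential $\eta$-redexes --- $v[-:=y_1]$ is one. Consequently $(\eta(s),\size{s})$ is not lexicographically decreasing, and your cycle-elimination remark leans on the same false lemma. The paper needs no such measure: since $\to_\beta$ is strongly normalizing, an infinite $\torp$-chain must contain infinitely many $\eta$-steps, and the commutation --- crucially in its single-step form proved in the appendix, $s\to_\beta t\to_\eta t'$ implies $s\to_\eta s'\to_\beta^\ast t'$, which keeps the $\eta$-step in front --- lets one promote those steps and extract arbitrarily long $\eta$-reductions from the initial term, contradicting strong normalization of $\eta$. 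Your observation that the starred postponement statement plus separate SN is not by itself sufficient is correct, but the remedy is the stepwise commutation, not a combined measure.
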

\begin{proof}
Strong normalization is achieved by observing that any infinite reduction chain of $ \torp$ would trigger, by Proposition \ref{rep:commuta}, an infinite reduction chain for $\eta $, that is strongly normalizing. Confluence is achieved by first proving local confluence and then by applying Newman's Lemma.
\end{proof}

 Given $ s \in  \repsTerms(\mathcal{R})(\gamma; \ty),$ we denote by $\NF{s} $ its unique normal form. As a corollary of subject reduction, we get that $\NF{s} \in  \repsTerms(\mathcal{R})(\gamma ; \ty).$ We shall now present an inductive characterization of $ \torp$-normal terms for the case where  $\mathcal{R}$ is a \emph{discrete} signature.

\begin{definition}\label{chosnorm}
Consider the following set, inductively defined:\[     \NF{\mathsf{\repTerms} (\mathcal{R})} \ni s   ::= v  [\vec{x}_1 := x_1] \dots [\vec{x}_n := x_n]  \qquad          v ::=     \seqdots{v}{1}{k}  \mid x                  \]
where $k,n \in \mathbb{N}, \gamma \vdash p : o $ with $ o $ being an atomic type and  $\delta \vdash v : \ty  $ with $\delta$ being atomic.
\end{definition}

\begin{proposition}\label{repnormc}
A term $ s \in \repTerms (\mathcal{R}) $ is a normal form for $ \torp$ iff there exists $ s' \in    \NF{\repTerms (\mathcal{R})}  $ s.t. $s \streq s' . $
\end{proposition}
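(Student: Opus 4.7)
My plan is to prove the biconditional in two steps, using structural induction for the forward direction and the bisimulation property of $\streq$ with respect to $\torp$ (established just before) for the backward direction.

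For the forward direction, I would proceed by strong induction on $\size{s}$, splitting on the top-level constructor. If $s = x$ is a variable, normality forces $x$ to have atomic type, since otherwise $s$ is itself an $\eta$-redex at the root; then $s$ fits the grammar with $n = 0$ and $v = x$. If $s = \seqdots{s}{1}{k}$, each $s_i$ inherits $\torp$-normality because it lies inside an $\ctxe$-context. Applying the \ih gives $s_i \streq v_i[\vec{x}_{i,1} := x_{i,1}] \cdots [\vec{x}_{i,n_i} := x_{i,n_i}]$ with each $v_i$ having atomic context; after suitable $\alpha$-renaming to avoid clashes, I repeatedly apply structural equivalence to pull every substitution outside the list, obtaining $s \streq \seqdots{v}{1}{k}[\ldots]$ of the required shape. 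If $s = s_1[\vec{x} := s_2]$, the subterm $s_1$ is $\torp$-normal (any redex there would propagate to $s$), and $s_2$ has no internal redex; the crucial observation is that $s_2$ cannot have the shape $\ctxl[\seqdots{t}{1}{k}]$, else $s$ would contain a $\beta$-redex, so the \ih applied modulo $\streq$ forces the inner $v_2$ in the decomposition of $s_2$ to be a variable. Combining this with the decomposition of $s_1$ and pushing all substitutions outward via $\streq$ produces a term in the grammar shape.

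For the backward direction, by the bisimulation property it suffices to verify that an arbitrary $s' \in \NF{\repTerms}$ is itself $\torp$-normal. The $\beta$-normality is immediate: each outer substitution has a variable on the right, which never matches the pattern $\ctxl[\seqdots{t}{1}{k}]$, and $v$ contains no substitution at all. The $\eta$-normality follows from the atomicity of the context $\delta$ of $v$, which rules out tensor-typed variables inside $v$, together with the fact that any tensor-typed subterm of $v$ is syntactically a list and therefore excluded from the $\eta$-rule by the restriction.

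The main obstacle lies in the induction for the substitution case, because $s_2$ need not be $\torp$-normal in isolation: the root-$\eta$ restriction on the right-hand side of a substitution blocks a redex only in context, while the same subterm standing alone might admit an $\eta$-step at its root. I would therefore either strengthen the \ih to speak about subterms-in-context, or carry out the bookkeeping entirely up to $\streq$ so that root-$\eta$ is never fired spuriously; care is also needed to verify that after pulling substitutions outward the final inner $v$ really has an atomic typing context, which forces the unpackings to cascade until every remaining free variable has atomic type.
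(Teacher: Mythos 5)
Your proposal is correct and follows essentially the same route as the paper: structural induction on the term, pulling explicit substitutions outward by $\streq$ in the list and substitution cases for the forward direction, and, for the backward direction, checking that terms of the grammar of Definition \ref{chosnorm} are $\torp$-normal (atomicity of the context blocking $\eta$, variables on the right of substitutions blocking $\beta$) and transferring normality along $\streq$ via the strong-bisimulation property, a direction the paper leaves implicit. The obstacle you flag in the substitution case is resolved in the paper exactly along the lines of your second remedy: since $s$ is $\beta$-normal the right-hand side $q$ is not in $\mathsf{LT}$ and has no redex fireable inside $s$, hence it is hereditarily a variable-headed chain of explicit substitutions, which are then pushed out from left to right up to $\streq$.
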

\begin{proof}
$ (\Rightarrow ) $ By induction on $s. $ If $ s = x $ then $s \in \NF{\repTerms (\mathcal{R})} . $ If $  \seqdots{s}{1}{k} $ then $s_i $ are normal form. Then we apply the IH and get $ s'_i \in \NF{\repTerms (\mathcal{R})} $  s.t. $s_i \streq s'_i  $. By definition $ s'_i = v_i [\vec{x}_{i, 1} := x_{i,1}] \dots [\vec{x}_{i, 1} := x_{i,n_i}].$ We then set $ s' = \seqdots{v}{1}{k} [\vec{x}_{1, 1} := x_{1,n_1}]  \dots [\vec{x}_{k, 1} := x_{k,n_k}] .$ If $ s = p[\vec{x} := q] $ we have that $ p  $ is a normal form and $ q$ is a $ \beta$-normal form. We reason by cases on $ q .$ If $ q$ does not have $ \eta$-redexes, we apply the IH and conclude in a way similar to the list case. If $ q$ has $ \eta$-redexes, since $ s$  is $ \beta$ normal we have that $ q \notin \mathsf{LT} . $ We can prove that $ q = x [\vec{x}_1 := q_1] \dots [\vec{x}_1 := q_n]$ with $q_i $ hereditarely of the same shape. Hence we conclude by pushing out all the substitutions from left to right.
\end{proof}

%The reduction $ \torp $ is strongly normalizing and confluent. This result is a corollary of the embedding given by Proposition \ref{} and Theorem \ref{}, exploiting Proposition \ref{repdec}.

\begin{figure}[!t]
			\begin{align*} 
			  & \beta \ \text{\emph{Root step}:} \quad  s [x^{\ty_1}_1, \dots, x^{\ty_k}_k := \ctxl[\seqdots{t}{1}{k}]] \to_{\beta}  \ctxl[\subst{s}{x_1, \dots, x_k}{t_1, \dots, t_k}           ]. &
			\\
	&\eta \	\text{\emph{Root step}:} \quad	 s \to_{\eta}  \vec{x} [\vec{x}^{\tyl} := s]                              \qquad \text{where } \vec{x} \text{ fresh }, \ \gamma \vdash s : \tyl , \ s \notin \mathsf{LT}. & \\
& \text{\emph{Contextual extensions}:}  \qquad \begin{prooftree} \hypo{  s \to_\beta s'  }\infer1{  \ctx[s] \to_\beta \ctx[s']   } \end{prooftree} \qquad \begin{prooftree} \hypo{  s \to_\eta s'  }\infer1{  \ctxe[s] \to_\eta \ctxe[s']   } \end{prooftree} \quad ( {\torp} = {\to_{\beta} \cup \to_{\eta}}) .& \\
		 & \text{   \emph{Structural equivalence}:} \quad \ctx[s [\vec{x} := t]  ] \streq \ctx[s] [\vec{x} := t] \qquad \vec{x} \notin \fv{\ctx}    .
		\end{align*}	
\caption{Representable reduction relations and structural equivalence.} \hrulefill
	\label{fig:torp}
\end{figure}

%&	\text{$\beta$-\emph{reduction}:} \quad  s \to_{\beta} t \Leftrightarrow \exists  \ctx, \exists s', t'  \text{ s.t. } \ctx[s'] = s, \ctx[t'] = t  \text{ and }  s' \to_{\beta} t'. & \\		&	\text{$\eta$-\emph{reduction}:} \quad  s \to_{\eta} t \Leftrightarrow \exists  \ctxe, \exists s', t' \text{ s.t. } \ctxe[s'] = s, \ctxe[t'] = t \text{ and } s' \to_{\eta} t'. & \\

 \subparagraph{Free Representable Multicategories}
  Let $ \mathcal{R}  $ be a representable signature. First, we define a  multicategory $ \freerm{\mathcal{R}} $ by setting 
$\Ob{\freerm{\mathcal{R}}} = \mathcal{R}_0$ and $ \freerm{\mathcal{R}} ( \gamma;  \ty) = \repTerms (\mathcal{R})( \gamma; \ty) {/} {\sim} $ where $ {\sim} = {(\streq \cup \repeq )} .$
 Composition is given by substitution, identities are given by variables.  The operation is well-defined on equivalence classes and satisfies associativity, identity axioms. We also have that if $s \sim s', $ then $\NF{s} \streq \NF{s'}. $ We denote by $\eta_{\mathcal{R}} : \mathcal{R} \to \overline{\freerm{\mathcal{R}}}$ the evident inclusion.

%\begin{lemma}\label{contextens}Let $ \gamma, x : \tens{\ty}{1}{k}, \delta \vdash s : \ty . $ Then $ s \repeq \subst{s}{x}{\vec{x}} [\vec{x} := x ] . $\end{lemma}

\begin{proposition}[Representability]\label{prop:repr}
   We have a bijection
   $  \freerm{\mathcal{R}}(\gamma,  \tens{\ty}{1}{k}, \delta ; \ty) \cong \freerm{\mathcal{R}}( \gamma, \ty_{1}, \dots,  \ty_{k}, \delta ; \ty  )                $ multinatural in $ \gamma, \delta $ and natural in $ \ty ,  $
  induced by the map  $         [s]  \mapsto [ \subst{s}{x}{\seqdots{x}{1}{k}} ]      . $
\end{proposition}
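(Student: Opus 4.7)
The plan is to exhibit an explicit inverse to the map $\phi([s]) = [\subst{s}{x}{\seqdots{x}{1}{k}}]$, namely $\psi([t]) = [t[x_1,\dots,x_k := x]]$ with $x$ a fresh variable of type $\tens{\ty}{1}{k}$, then verify mutual invertibility modulo $\sim$ and finally dispatch the naturality conditions. Well-definedness of $\phi$ and $\psi$ on equivalence classes reduces to a standard substitution lemma: $\torp$ and $\streq$ are preserved by both capture-avoiding substitution of a term for a free variable (used for $\phi$) and by plugging into an explicit-substitution context (used for $\psi$). Both statements follow by routine induction on the rules of Figure \ref{fig:torp}, much as in Proposition \ref{subredrep}.

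For $\phi \circ \psi = \mathrm{id}$, unfolding gives
\[
\phi(\psi([t])) = [\subst{t[x_1,\dots,x_k := x]}{x}{\seqdots{x}{1}{k}}] = [t[x_1,\dots,x_k := \seqdots{x}{1}{k}]],
\]
whose right-hand side is a $\beta$-redex with trivial substitution context $\ctxl = \hole{\cdot}$: a single root $\beta$-step contracts it to $\subst{t}{\vec{x}}{\vec{x}} = t$.

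For $\psi \circ \phi = \mathrm{id}$ we must show $[s] = [\subst{s}{x}{\seqdots{x}{1}{k}}[\vec{x} := x]]$, and this is the heart of the proof. Since $x$ occurs exactly once in $s$ by linearity, I would argue by induction on the structure of $s$. The base case $s = x$ is precisely the $\eta$-root step $x \to_\eta \seqdots{x}{1}{k}[\vec{x} := x]$, which applies since $x \notin \mathsf{LT}$. In each inductive case one identifies the unique immediate subterm containing $x$, applies the inductive hypothesis there, and uses $\streq$ to lift the newly created substitution $[\vec{x} := x]$ out to the top; freshness of $\vec{x}$ supplies the side condition $\vec{x} \notin \fv{\ctx}$ of the structural rule at every layer. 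Multinaturality in $\gamma, \delta$ and naturality in $\ty$ are then a routine check: pre- and post-composition in $\freerm{\mathcal{R}}$ is substitution, which commutes with both $\phi$ and $\psi$ up to $\streq$ applied to swap adjacent substitutions on disjoint bound variables.

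The main obstacle is the freshness book-keeping throughout the induction: the $\vec{x}$ introduced by the $\eta$-expansion at the base case must remain fresh for all outer contexts encountered along the way up, so that the $\streq$-rule may legitimately float the new substitution outward at each layer. Being a single binding of fresh variables, this discipline is easy to enforce, but it is the one place where the interaction between $\eta$ (constrained to $\ctxe$-contexts) and $\streq$ (constrained by variable-freshness) has to be checked carefully. Once that is done, the remainder of the proof is mechanical.
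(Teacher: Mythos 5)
Your overall strategy is the paper's: the inverse is the explicit-substitution map $\psi([t])=[t[\vec x := x]]$, and the composite $\phi\circ\psi$ is killed by a single root $\beta$-step, exactly as you compute. The gap lies in the other composite and in the preservation claims you invoke for it. You assert that $\torp$ is preserved by capture-avoiding substitution and by plugging into explicit-substitution contexts, ``by routine induction \dots much as in Proposition~\ref{subredrep}''. For $\to_\eta$ this is false, and not by accident: the side conditions $s\notin\mathsf{LT}$ and $\ctxe\neq\hole{\cdot}$ on the right of a substitution are there precisely to destroy these closure properties, since they are what makes $\eta$ terminate. Concretely, $x\to_\eta \seqdots{x}{1}{k}[\vec x := x]$ is a legal step, but substituting a list for $x$ turns it into a forbidden expansion (the contractum's body now lies in $\mathsf{LT}$) --- and this is exactly the situation created by $\phi$, which substitutes the list $\seqdots{x}{1}{k}$; likewise, a root $\eta$-step cannot be transported through the context $p[\vec y := \hole{\cdot}]$. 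The second failure hits your induction for $\psi\circ\phi=\mathrm{id}$ head on: when the unique occurrence of $x$ in $s$ is the immediate right-hand side of an explicit substitution, i.e.\ $s$ contains a subterm $p[\vec y := x]$, your uniform step ``apply the IH to the subterm, then float the new substitution out with $\streq$'' is not licensed, because applying the base-case $\eta$-expansion under $p[\vec y := \hole{\cdot}]$ is exactly what the restricted contexts forbid; the delicate point is not the freshness bookkeeping you flag, but that $\sim$-compatibility with this context is not given by contextual closure.

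The repair is short but it is a different move: in every blocked situation the required equation is witnessed by $\beta$ rather than $\eta$. For the case above, $p[\vec y := \seqdots{x}{1}{k}][\vec x := x]\to_\beta \subst{p}{\vec y}{x_1,\dots,x_k}[\vec x := x]$, which is $\alpha$-equivalent to $p[\vec y := x]$ (in any surrounding context). More systematically, first prove the small lemma that $u\sim \seqdots{z}{1}{k}[\vec z := u]$ for every $u$ of tensor type --- by the $\eta$ root step when $u\notin\mathsf{LT}$, and by a backwards root $\beta$-step when $u\in\mathsf{LT}$ --- and that $\sim$ is compatible with the context $p[\vec y := \hole{\cdot}]$ via one $\streq$-step followed by a $\beta$-step. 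With that in hand, your induction and the well-definedness of $\phi$ go through, and the resulting argument is the one the paper leaves implicit when it simply exhibits $(-)[\vec x := x]$ as the inverse.
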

\begin{proof}

Naturality follows from basic properties of substitution. Inverses are given by the maps $ (-)[\vec{x} := x] :    \freerm{\mathcal{R}}( \gamma, \ty_{1}, \dots,  \ty_{k}, \delta ; \ty  )   \to   \freerm{\mathcal{R}}(\gamma,  \tens{\ty}{1}{k}, \delta ; \ty)  . $ \end{proof}
%Given $ s \in \freerm{\mathcal{R}}(\gamma,  \tens{\ty}{1}{k}, \delta; \ty) , $ then$      s    =_{\mathsf{rep}} \subst{s}{x}{\vec{x}} [\vec{x}:=x] $ by Lemma \ref{contextens}. For $ t \in  \freerm{A}( \gamma, \ty_{1}, \dots,  \ty_{k}, \ty  ), t =_{\beta} t [\vec{x} := \vec{x}] = \subst{t[\vec{x} := x]}{x}{\vec{x}} .  $

\begin{definition}\label{unirep}
 Let $ \mathcal{R}$ be a representable signature and $ \mathsf{S} $ be a representable multicategory. Let $ i : \mathcal{R} \to \overline{\mathsf{S}} $ be a map of representable signatures. We define a family of maps $ \mathsf{RT}(i)_{\gamma, \ty} : \repTerms(\mathcal{R}) (\gamma ; \ty)  \to \mathsf{S}(i(\gamma) ; i(\ty)) $ by induction as follows:
 
\scalebox{0.9}{\parbox{1.05\linewidth}{\[  \mathsf{RT}(i)_{\ty, \ty}(x) = id_{i(\ty)}  \qquad    \mathsf{RT}(i)_{\gamma_1, \dots, \gamma_k, \tens{\ty}{1}{k}}(\seqdots{s}{1}{k} ) = \bigotimes_{i =1}^{k}  \mathsf{RT}(i)_{\gamma_i, \ty_i}(s_i) \]
\[    \mathsf{RT}(i)_{\delta_1, \gamma, \delta_2, \ty}(s[x_1, \dots, x_k := t]) =  \mathsf{let} (\mathsf{RT}(i)_{\delta_1 , \ty_1, \dots, \ty_k, \delta_2, \ty}(s)) \circ \seq{id_{\delta_1},\mathsf{RT}(i)_{\gamma, \tens{\ty}{1}{k}}(t), id_{\delta_2}}  \]
 \[  \mathsf{RT}(i)_{\gamma_1, \dots, \gamma_n, \ty}(f (s_1, \dots, s_n)) =   i(f) \circ \seq{\mathsf{RT}(i)(s_1), \dots, \mathsf{RT}(i)(s_n)  } . \]}}
\end{definition}

%\begin{proposition}\label{functrep}The following statements hold.\begin{enumerate}\item $ \mathsf{RT}_{\vec{\gamma}, \ty}(\subst{s}{\vec{x}}{\vec{t}}) = \mathsf{RT}_{\gamma, \ty}(s) \circ \seq{\mathsf{RT}_{\gamma_i, \ty_i}(t_i)}_{i \in \length{\gamma}} .$\item If $ s \torp s' $ then  $  \mathsf{RT}_{\gamma, \ty}(s) = \mathsf{RT}_{\gamma, \ty}(s') .$\item If $s \repeq s' \text{ or } s \streq s' $ then $  \mathsf{RT}_{\gamma, \ty}(s) = \mathsf{RT}_{\gamma, \ty}(s') .$
%\end{enumerate}\end{proposition}

\begin{theorem}[Free Construction]
Let $ \mathsf{S} $ be a a representable multicategory and $ i :  \mathcal{R} \to \overline{\mathsf{S}}  $ a map of representable signatures. There exists a unique representable functor $ i^{\ast} : \freerm{\mathcal{R}} \to \mathsf{S}$ such that $ i = \overline{i^\ast} \circ \eta_{\mathcal{R}} . $\end{theorem}
\begin{proof}
The functor is defined exploiting Definition \ref{unirep}.\end{proof}

\subparagraph{Coherence Result} We fix a \emph{discrete} representable signature $\mathcal{R} .$ We show that if $  s, t \in   \freerm{\mathcal{R}}(\gamma ; \ty)   $, then $ s = t . $ Our proof strongly relies on the characterization of normal forms given in Proposition \ref{repnormc}.
 
\begin{lemma}
Let $ \gamma, \gamma'  $ be atomic contexts. If there exists a type $ \ty $ and normal terms $ s, s' $ such that $s, s' \in \repTerms(\mathcal{R})(\gamma; \ty) $ then $ \gamma = \gamma'  $ and $ s \streq s' . $
\end{lemma}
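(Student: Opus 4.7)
The statement as written contains a small typo: the second normal term $s'$ should live in $\repTerms(\mathcal{R})(\gamma'; \ty)$. With that reading, my plan is as follows.

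The key input is Proposition \ref{repnormc}, which tells us that every $\torp$-normal term is $\streq$-equivalent to a canonical form $v[\vec{x}_1 := x_1] \dots [\vec{x}_n := x_n]$ in which $v$ is built purely from variables and tuples $\seqdots{v}{1}{k}$, while each substitution destructures a variable of tensor type into variables of its component types. My first move is to argue that when the overall typing context $\gamma$ is atomic, the substitution-prefix must be empty, i.e.\ $n = 0$. Indeed, the variable $x_1$ on the right-hand side of the innermost substitution must either be free in $s$ (hence occur in $\gamma$) or be bound by an earlier substitution; but $\gamma$ contains only atomic types, and the variables introduced by substitutions are also of atomic or smaller tensor shape than their parent, so an easy induction on $n$ rules out any tensor-typed right-hand side. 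Consequently $s \streq v$ and $s' \streq v'$ for pure tuple-trees $v, v'$.

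Next I prove by induction on the type $\ty$ that the shape of $v$, and the list $\gamma$ of atomic types at its leaves (read left-to-right), are determined by $\ty$ alone. If $\ty = o \in \atm$, then by the grammar $v$ must be a single variable of type $o$, so $\gamma = x : o$. If $\ty = \tens{\ty}{1}{k}$, then $v$ can only be of the form $\seqdots{v}{1}{k}$ with $\gamma_i \vdash v_i : \ty_i$ in atomic contexts $\gamma_i$; applying the induction hypothesis to each $v_i$ gives the uniqueness of the $v_i$ and of $\gamma_i$ (as sequences of atomic types up to renaming), and by linearity $\gamma = \gamma_1, \dots, \gamma_k$. The same reasoning applied to $v'$ and $\gamma'$ yields $\gamma = \gamma'$ and $v = v'$ modulo the renaming of context variables, which the convention of treating context variables as bound identifies.

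Combining the two steps: $s \streq v$, $v$ coincides with $v'$ up to alpha-renaming of the variables in $\gamma$, and $v' \streq s'$, whence $\gamma = \gamma'$ and $s \streq s'$. The only delicate point is the opening observation that atomic $\gamma$ forces $n = 0$ in the canonical form; once this is secured, everything else is a straightforward structural induction on the type, exploiting linearity to read off $\gamma$ uniquely from the leaves of the tuple-tree. This is where I expect the only real care to be required, since one must track precisely how variables introduced by successive substitutions relate to the tensor/atomic distinction of their types.
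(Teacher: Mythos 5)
Your proposal is correct and follows essentially the same route as the paper: the paper's appendix proof also first observes that a normal term in an atomic context cannot be an explicit substitution (your $n=0$ step, most cleanly seen by noting the \emph{outermost} substitution's right-hand variable would be a free, tensor-typed variable in $\gamma$) and then proceeds by induction on the type $\ty$, splitting into the variable case for atomic $\ty$ and the list case for $\tens{\ty}{1}{k}$. Your extra bookkeeping through Proposition \ref{repnormc} and structural equivalence just accounts for the fact that the main-text statement speaks of arbitrary normal terms up to $\streq$, which is exactly how the paper uses the lemma inside Theorem \ref{cohrepal}.
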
 
%\begin{proof}by induction of $ \ty . $ If $ \ty = o , $ since $s, s' $ are normal, then $ s = x = s' . $ If $ \ty =  \seqdots{\ty}{1}{k}$, since $ \gamma , \gamma'$ are atomic, by definition  $ s = \seqdots{s}{1}{k}, s' = \seqdots{s'}{1}{k} $. By definition, we have $ \gamma_i \vdash s_i : \ty_i $ and $ \gamma'_i \vdash s'_i : \ty_i $ for $ i \in [k] . $ Moreover, $ \gamma = \gamma_1, \dots, \gamma_k $ and $ \gamma' = \gamma'_1, \dots, \gamma'_k . $ By IH, $ \gamma_i = \gamma'_i $ and $s_i = s'_i . $ We can then conclude.\end{proof}
 
\begin{theorem}\label{cohrepal}
 Let $  s, s'$ be normal terms s.t. $ s, s'\in   \repTerms (\mathcal{R}) (\gamma; \ty) ,$ then $ s \streq s' . $
 \end{theorem}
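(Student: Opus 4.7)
The plan is to induct on the total tensor complexity of the context, measured by $\size{\gamma} := \sum_{x : \tytwo \in \gamma} \size{\tytwo}$. By Proposition \ref{repnormc}, I can first replace $s$ and $s'$ up to $\streq$ by representatives in the canonical normal-form shape $v[\vec{x}_1 := x_1] \dots [\vec{x}_n := x_n]$ of Definition \ref{chosnorm}. A key observation here is that, because the body $v$ is typed in an atomic context, the RHS variables $x_1, \dots, x_n$ of the outermost explicit substitutions are exactly the non-atomic variables of $\gamma$, and each $\vec{x}_i$ provides their atomic unpacking. Since $\mathcal{R}$ is discrete, no $f(\cdots)$ constructors appear, so there is nothing else to consider.

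In the base case $\size{\gamma} = 0$, the context $\gamma$ is atomic, no substitutions appear in the canonical form, and we have $s = v$, $s' = v'$ with atomic context and atomic body. The previous lemma then yields $s \streq s'$ immediately. For the inductive step, suppose $\gamma$ contains a non-atomic variable $x : \vec{a}$. By linearity, $x$ must appear in $s$ (and in $s'$), and inside the canonical form this occurrence can only be as the RHS of some substitution $[\vec{y} := x]$. I would use the structural equivalence $\ctx[s[\vec{x} := t]] \streq \ctx[s][\vec{x} := t]$ (valid when $\vec{x} \notin \fv{\ctx}$) to permute this substitution outward past the others. Each adjacent swap is legal because the RHSs are pairwise distinct variables and the bound names $\vec{x}_i$ can be chosen fresh, so the side-condition is automatic.

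After reordering, $s \streq s''[\vec{x} := x]$ and, by the same manoeuvre together with $\alpha$-renaming, $s' \streq s'''[\vec{x} := x]$ for the same $\vec{x}$. Both $s''$ and $s'''$ remain canonical normal forms, and they live in $\repTerms(\mathcal{R})(\gamma'; \ty)$ where $\gamma'$ is obtained from $\gamma$ by replacing $x : \vec{a}$ with $\vec{x} : \vec{a}$. Since $\size{\vec{a}} = 1 + \sum_i \size{\ty_i} > \sum_i \size{\ty_i}$, we have $\size{\gamma'} < \size{\gamma}$, so the inductive hypothesis applies and gives $s'' \streq s'''$. Congruence of $\streq$ then yields $s \streq s''[\vec{x} := x] \streq s'''[\vec{x} := x] \streq s'$.

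The main delicate point will be the orchestration of the structural-equivalence moves: specifically, checking that the chosen substitution $[\vec{y} := x]$ can genuinely be hoisted all the way out of the canonical form without violating any capture condition, and that the resulting inner term is again in the canonical shape (rather than only $\streq$-equivalent to one). Both follow from the grammar of Definition \ref{chosnorm}, where all substitutions sit at the outermost layer and their bound variables are chosen fresh, so no clash ever arises. Everything else is routine bookkeeping once the induction measure and the reduction-to-atomic-context strategy are in place.
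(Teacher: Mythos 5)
Your overall strategy is essentially the paper's own, up to an inessential change of induction parameter: the paper also reduces both terms to the canonical forms of Proposition \ref{repnormc}, settles the base case with the atomic-context lemma, and in the inductive step uses structural equivalence to hoist the substitution of $t'$ matching the outermost one of $t$ before invoking the induction hypothesis; it inducts on the number of outer explicit substitutions rather than on $\size{\gamma}$, and hoists in one term rather than both, but these differences are cosmetic.

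The one real problem is your ``key observation'', which is false as stated and should be weakened. The right-hand sides of the outermost substitutions of a canonical form are \emph{not} exactly the non-atomic variables of $\gamma$, and the $\vec{x}_i$ are \emph{not} atomic unpackings: with nested tensors the unpacking proceeds one layer at a time, and the intermediate layers are named by bound variables. For instance, for $x : ((o \otimes o) \otimes o)$ the canonical normal form is $\seq{\seq{w_1,w_2},z_2}[w_1,w_2 := z_1][z_1,z_2 := x]$, where the right-hand side $z_1$ of the inner substitution is non-atomic and is bound by the outer substitution, not free in $\gamma$. What your argument actually needs is weaker and true: (i) every non-atomic variable of $\gamma$ occurs as the right-hand side of some substitution, because the body $v$ is typed in an atomic context; (ii) the substitution $[\vec{y} := x]$ you hoist has a \emph{free} right-hand side, so no outer substitution binds $x$, and by the freshness convention no outer right-hand side lies in $\vec{y}$ (binding dependencies between the stacked substitutions only point inwards), so each adjacent swap satisfies the side conditions of the structural rule; and (iii) for your base case, the outermost right-hand side is bound by nothing, hence is a free variable of tensor type, so an atomic $\gamma$ forces the canonical form to have no substitutions at all --- this, and not the ``exactly the non-atomic variables'' claim, is the correct justification. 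With these repairs your induction on $\size{\gamma}$ goes through and coincides in substance with the proof in the paper.
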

 \begin{proof}
 By Proposition \ref{repnormc}, $ s \streq t = (v [\vec{x}_1 := x_1] \dots [\vec{x}_p := x_p]) $ and $ s' \streq t' =  (v' [\vec{y}_1 := x'_1] \dots [\vec{y}_p := x'_{p'}] ).$ We prove that $ t \streq t' $ by induction on $ p \in \mathbb{N} . $ If $ p = 0  $ then $ t $ is either a list or a variable. We proceed by cases. If $ t = x $ then $ \gamma = o $ and $ \ty = o $ for some atomic type $ o . $ By the former lemma we have that $ t \streq t' . $ If $ t = \seqdots{v}{1}{k} $ the result is again a corollary of the former lemma since, by Definition \ref{chosnorm}, $\gamma $ is atomic. If $ p = n + 1 $ then $ t = v [\vec{x}_1 := x_1] \dots [\vec{x}_{n + 1} := x_{n+1}] $ and, by definition of typing we have 
\scalebox{0.9}{\parbox{1.05\linewidth}{ \[               \begin{prooftree}     \hypo{x_{n+1}  : \tyl \vdash x_{n+1}  : \tyl  }\hypo{  \delta_1, \vec{x}_{n + 1} : \tyl, \delta_2   \vdash  v [\vec{x}_1 := x_1] \dots [\vec{x}_n := x_{n}] : \ty }\infer2{ \delta_1, x_{n+1}  : \tyl ,  \delta_2  \vdash s : \ty}    \end{prooftree}      \]}}

 with $ \gamma = \delta_1, x_{n + 1} : \tyl , \delta_2 .$ Since $ t' \in \NF{ \repTerms } (\gamma; \ty),  $ there exists $ i \in \mathbb{N} $ such that $ t' = v' [\vec{y}_1 := x'_1] \dots [\vec{y}_i := x'_{i}] \dots  [\vec{y}_p := x'_{p'}] $ and $ x'_{i} = x_{n+1} . $ By \emph{structural equivalence} we have that $ t' \streq v' [\vec{y}_1 := x'_1]  \dots  [\vec{y}_p := x'_{p'}] \dots [\vec{y}_i := x_{i}] .$ By definition of typing and by the hypothesis we have that 
 
 \scalebox{0.9}{\parbox{1.05\linewidth}{ \[               \begin{prooftree}     \hypo{x'_{i}  : \tyl  \vdash x'_{i}  : \tyl }\hypo{  \delta_1,\vec{x}_i : \tyl, \delta_2   \vdash  v' [\vec{y}_1 := x'_1] \dots [\vec{y}_{p'} := x'_{p'}] : \ty }\infer2{ \delta_1, x_{n+1}  : \tyl ,  \delta_2  \vdash s' : \ty}    \end{prooftree}      \]}}
  
  By IH we have that 
$ v [\vec{x}_1 := x_1] \dots [\vec{x}_p := x_{n}] \streq   v' [\vec{y}_1 := x'_1]  \dots  [\vec{y}_p := x'_{p'}]$ . We can then conclude that $ t \streq t' ,$ by structural equivalence.
 \end{proof}
 
 \begin{theorem}[Coherence for Representable Multicategories] \label{cohrep}
 Let $ [s], [t] \in  \freerm{\mathcal{R}}(\gamma; \ty).$ Then $ [s] = [t] .$
 \end{theorem}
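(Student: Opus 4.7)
The plan is to reduce this statement to the already-established coherence theorem for normal forms (Theorem~\ref{cohrepal}), using strong normalization and confluence to move from arbitrary representatives to their normal forms.

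First I would pick arbitrary representatives $s \in [s]$ and $t \in [t]$, both living in $\repTerms(\mathcal{R})(\gamma;\ty)$. By Theorem~\ref{aut:snc}, the reduction $\torp$ is strongly normalizing and confluent, so $\NF{s}$ and $\NF{t}$ are well-defined $\torp$-normal terms. By subject reduction (Proposition~\ref{subredrep}), both $\NF{s}$ and $\NF{t}$ still inhabit $\repTerms(\mathcal{R})(\gamma;\ty)$. Since $s \twoheadrightarrow_{\mathsf{rep}} \NF{s}$ and $t \twoheadrightarrow_{\mathsf{rep}} \NF{t}$, we have $s \repeq \NF{s}$ and $t \repeq \NF{t}$, and hence $[s] = [\NF{s}]$ and $[t] = [\NF{t}]$ in $\freerm{\mathcal{R}}(\gamma;\ty)$.

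Now both $\NF{s}$ and $\NF{t}$ are normal terms in $\repTerms(\mathcal{R})(\gamma;\ty)$, so Theorem~\ref{cohrepal} applies directly and gives $\NF{s} \streq \NF{t}$. Since $\streq$ is contained in $\sim$, this yields $[\NF{s}] = [\NF{t}]$, and combining with the previous step we conclude $[s] = [t]$, as required.

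There is essentially no obstacle left at this stage: all the hard work has already been carried out, namely the termination and confluence of $\torp$ (which in turn rests on the decreasing measures of Proposition~\ref{repdec} and the commutation Proposition~\ref{rep:commuta}) and the inductive characterization of normal forms together with the structural coherence argument in Theorem~\ref{cohrepal}. The present theorem is a clean corollary that packages normalization together with coherence on normal forms, exploiting the fact that $\repeq$ is, by construction, already a sub-relation of the quotienting equivalence defining $\freerm{\mathcal{R}}$.
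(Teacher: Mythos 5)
Your proposal is correct and follows exactly the route the paper intends: Theorem~\ref{cohrep} is a corollary of Theorem~\ref{cohrepal} obtained by normalizing representatives (using strong normalization, confluence and subject reduction of $\torp$) and observing that both $\repeq$ and $\streq$ are contained in the quotienting relation $\sim$ defining $\freerm{\mathcal{R}}$. The only implicit point worth keeping in mind is that, as in the paper's surrounding text, the signature $\mathcal{R}$ is assumed discrete, since Theorem~\ref{cohrepal} is proved under that hypothesis.
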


\begin{theorem}[Coherence for Monoidal Categories]
All diagrams in the free monoidal category on a set commute.
\end{theorem}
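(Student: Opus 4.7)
The plan is to derive this theorem as a direct corollary of Theorem \ref{cohrep} by transporting across the equivalence $\mathsf{RepM} \simeq \Mon$ of Theorem \ref{eqmonrep}. Fix a set $X$ and let $\mathcal{R}_X$ denote the discrete representable signature with atom set $X$ and no generating multiarrows. The central claim to establish is that $\mathsf{mon}(\freerm{\mathcal{R}_X})$ \emph{is} (up to isomorphism) the free monoidal category on $X$.

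For this identification I would compose adjunctions: by the Free Construction theorem, $\freerm{-}$ is left adjoint to the forgetful functor $\mathsf{RepM} \to \mathsf{RepSig}$, and restricting to discrete signatures $X \mapsto \mathcal{R}_X$ produces a left adjoint to the atom-set functor $\mathsf{RepM} \to \mathsf{Set}$. Post-composing with the equivalence $\mathsf{mon}(-)$ yields a left adjoint to the corresponding underlying-set functor on $\Mon$, whose value at $X$ is therefore the free monoidal category on $X$. With this identification secured, any two parallel arrows $f, g$ in the free monoidal category correspond, under $\mathsf{rep}(-)$, to parallel morphisms in $\freerm{\mathcal{R}_X}(\gamma; \ty)$ for suitable $\gamma, \ty$; Theorem \ref{cohrep} applied to the discrete signature $\mathcal{R}_X$ yields $f = g$ at once, and hence every diagram commutes.

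The main obstacle, and really the only non-routine step, is the identification of $\mathsf{mon}(\freerm{\mathcal{R}_X})$ with the free monoidal category on $X$. One must check that the underlying-set functor on $\Mon$ matches, across the equivalence, the atom-set functor on $\mathsf{RepM}$, and that the tensor expressions generated by the grammar of $\mathcal{R}_{X,0}$ correspond correctly to the iterated binary tensors in the monoidal image. Since $\mathcal{R}_{X,0}$ freely generates formal tensor trees over $X$ and $\mathsf{mon}(-)$ merely reinterprets $k$-ary tensors as right-nested binary ones via a chosen bracketing, this matching is a bookkeeping check rather than a substantive argument. All the genuine coherence content has already been absorbed by Theorem \ref{cohrep}, whose proof in turn rests on the strong normalization and structural equivalence results for $\torp$.
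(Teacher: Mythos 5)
Your proposal is correct and is essentially the paper's own argument: the theorem is derived as a corollary of Theorem \ref{cohrep} and the equivalence of Theorem \ref{eqmonrep}, with the key step being precisely the identification of $\mathsf{mon}(\freerm{\mathcal{R}})$ (for a discrete signature) with the free monoidal category on the atom set. The only, inessential, difference is how that identification is justified: the paper verifies the universal property directly, building the comparison functor out of Definition \ref{unirep}, whereas you obtain it by composing the free-construction adjunction with the equivalence and matching the underlying-set functors --- both arguments rest on the same Free Construction theorem.
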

\begin{proof}
Corollary of the former theorem and Theorem \ref{eqmonrep}, by noticing that $ \mathsf{mon}(\freerm{\mathcal{R}}) $ is the free monoidal category on the underlying set of $ \mathcal{R}. $
\end{proof}
 
 \section{A Resource Calculus for Symmetric Representable Multicategories}\label{sec:sr}The symmetric representable terms have exactly the same syntax and operational semantics as the representable ones. We first extend the type system in order to account for symmetries. We then study the free constructions establishing an appropriate coherence result.

 The typing is defined in Figure \ref{fig:reps-calc}. It is easy to see that the representable type system consists of a fragment of the symmetric one, where we just consider identity permutations. We write $ \gamma \vdash_{\mathsf{srep}} s : \ty $ when we need to specify that the type judgment refers to the symmetric representable type system. We set $  \repsTerms (\mathcal{R})( \ty_1, \dots, \ty_n ; \ty) = \{ s \mid x_1 : \ty_1 , \dots, x_n : \ty_n \vdash_{\mathsf{srep}} s : \ty \text{ for some } x_i \in \fv{s}. 	\} . $

  \begin{remark}
The role of permutations in the type system of Figure \ref{fig:reps-calc} deserves some commentary. Instead of having an independent permutation rule, variables in contexts can be permuted only when contexts have to be merged. In this way, the system is \emph{syntax directed}. The limitation to the choice of \emph{shuffle permutation} is needed to get uniqueness of type derivations for terms. Indeed, consider $ s = \seq{\seq{x,y}, z } . $ If we allow the choice of arbitrary permutations, we could build the following derivations:

\scalebox{0.8}{\parbox{1.05\linewidth}{\[   \begin{prooftree}  \hypo{x : \ty \vdash x : \ty \quad y : b \vdash y : b}\hypo{ \sigma    }\infer2{y : b, x : \ty \vdash \seq{x,y} : (\ty \otimes b)} \hypo{z : \ty \vdash z : \ty}\hypo{id}\infer3{y : b, x : \ty, z : \ty \vdash s :  ( (\ty \otimes b) \otimes \ty ) }   \end{prooftree}            \qquad    \begin{prooftree}  \hypo{x : \ty \vdash x : \ty \quad y : b \vdash y : b}\hypo{ id    }\infer2{ x : \ty, y : b \vdash \seq{x,y} : (\ty \otimes b)} \hypo{z : \ty \vdash z : \ty}\hypo{\sigma \oplus id}\infer3{y : b, x : \ty, z : \ty \vdash s :  ( (\ty \otimes b) \otimes \ty ) }  \end{prooftree} \]}}

where $ \sigma $ is the swap. Thanks to the shuffle limitation, only the one on the left is allowed.
\end{remark} 
  
 \begin{proposition}[Canonicity of Typing]\label{prop:symcano}
If $ \pi \triangleright \gamma \vdash s : \ty $ and $ \pi' \triangleright \gamma \vdash s : \ty' $ then $\ty = \ty' $ and $  \pi = \pi' .$
\end{proposition}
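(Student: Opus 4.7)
The plan is to proceed by structural induction on $s$, leveraging Lemma \ref{shucan} to single out the shuffle component of any merging permutation that appears in the last rule of the derivation. The crucial observation is that, thanks to linearity, every variable occurs at most once in a typing context, so a shuffle into a prescribed target list is \emph{uniquely determined} once the source blocks are fixed.

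First I would handle the base case $s = x$: the only applicable rule is the axiom, which forces $\gamma = x : \ty$ and determines both $\pi$ and $\ty$ uniquely. For the inductive step, I would treat each term constructor in turn. For $s = f(s_1,\dots,s_n)$ with $f \in \mathcal{R}(\ty_1, \dots, \ty_n; b)$, the last rule of any derivation of $\gamma \vdash s : \ty'$ must be the $f$-rule, so $\ty' = b$ is fixed by $f$. By the induction hypothesis each subderivation $\pi_i \triangleright \gamma_i \vdash s_i : \ty_i$ is uniquely determined (so are the $\gamma_i$). The merging of $\gamma_1, \dots, \gamma_n$ into $\gamma$ is required to be a shuffle; since the multiset of variables in $\gamma$ coincides with the disjoint union of the variables in the $\gamma_i$ and all variables are distinct by linearity, there is at most one shuffle $\tau_0 \in \mathsf{shu}(\gamma_1,\dots,\gamma_n)$ realising the reordering, so the outer step of $\pi$ is uniquely determined. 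The cases $s = \seqdots{s}{1}{k}$ and $s = s_1[\vec{x} := s_2]$ are analogous: the types $\tens{\ty}{1}{k}$ respectively $\ty$ are forced by the induction hypothesis together with the shape of the rule, and the only freedom left lies in the shuffle merging the subcontexts, which is again pinned down by the target context $\gamma$.

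The main subtlety, and the reason the system had to be restricted to \emph{shuffle} permutations rather than arbitrary ones, is precisely what makes this step work: by Lemma \ref{shucan} an arbitrary permutation $\sigma \in S_{\sum n_i}$ factors canonically as $\tau_0 \circ \bigoplus_i \tau_i$ with $\tau_0$ a shuffle and the $\tau_i$ internal. The internal $\tau_i$ are absorbed inside the subderivations (and fixed by the induction hypothesis together with the relevance of contexts), so after restricting the last step to a shuffle the only remaining degree of freedom, $\tau_0$, is forced by the target list $\gamma$. Hence the annotation of the last rule, the subderivations, and the conclusion type all coincide in $\pi$ and $\pi'$, giving $\pi = \pi'$ and $\ty = \ty'$. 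The delicate point I expect to require the most care is the explicit-substitution case, where one must check that the splitting of $\gamma$ into the three zones $\delta, \gamma', \delta'$ around the position of the inserted variables is itself uniquely determined; this follows from the fact that $\fv{s_1[\vec{x} := s_2]} = (\fv{s_1}\setminus \vec{x}) \sqcup \fv{s_2}$ and from the relevance of typing contexts, so the subcontexts read off from $\gamma$ are canonical.
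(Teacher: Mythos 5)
Your overall route is the same as the paper's (induction on $s$, with uniqueness of the merging shuffle coming from the fact that contexts have no repeated variables, and Lemma \ref{shucan} in the background), but there is a genuine gap at the crucial step. You write that ``by the induction hypothesis each subderivation $\pi_i \triangleright \gamma_i \vdash s_i : \ty_i$ is uniquely determined (so are the $\gamma_i$)'', and later that the internal permutations are ``fixed by the induction hypothesis together with the relevance of contexts''. This is circular: the statement being proved, hence the IH, asserts uniqueness of the derivation \emph{for a given context}; it cannot tell you which context the two derivations of $\gamma \vdash s : \ty$ assign to $s_i$. Relevance fixes only the underlying \emph{set} $\fv{s_i}$, not its order, and in the symmetric system a subterm genuinely admits typings under many orderings of its free variables: $\seq{x,y}$ is typable both under $x : \ty, y : \tytwo$ and under $y : \tytwo, x : \ty$, since for singleton blocks every permutation is a shuffle. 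So a priori $\pi$ and $\pi'$ could use different subcontexts $\gamma_i \neq \gamma'_i$, compensated by different shuffles, to reach the same merged $\gamma$. The missing argument --- which is exactly the content of the paper's proof --- is that a shuffle is monotone on each block, so the order of $\gamma_i$ must be the order that $\gamma$ induces on $\fv{s_i}$; the paper establishes $\gamma_i = \gamma'_i$ by deriving a contradiction from two variables whose relative order would differ, then concludes $\sigma = \sigma'$ because the action of permutations on repetition-free contexts is free, and \emph{only then} applies the IH to the now-identical judgments for the $s_i$.

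Your treatment of the explicit-substitution case is also too quick, and it is the one place where more than free-variable bookkeeping is needed. The bound block $\vec{x}$ occupies a position inside the premise context $\delta, x_1 : \ty_1, \dots, x_k : \ty_k, \delta'$ of $t$ that leaves no trace in the conclusion context (the $x_i$ do not occur there), so relevance determines the concatenation $\delta, \delta'$ but not where it is cut. Concretely, with $t = \seq{x,y,z}$ and $s = \seq{w}$ one reaches the conclusion context $y : \tytwo, w : \ty, z : \tythree$ both with the cut $\delta = (y), \delta' = (z)$ and the identity shuffle, and with the cut $\delta = (), \delta' = (y,z)$ and the shuffle interleaving $w$ between $y$ and $z$; so the cut is genuinely not pinned down by the conclusion judgment, and your appeal to $\fv{t[\vec{x}:=s]}$ and relevance does not close this case. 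Whatever argument handles it (this is also the case the paper's own proof does not spell out, detailing only the list rule) must analyse how the choice of cut interacts with the choice of shuffle and with the subderivation of $t$, rather than follow from relevance alone.
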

\begin{proof}
By induction on $s . $   In the cases where a merging of type contexts happens, such as the list case, we rely on the properties of shuffle permutations and on the fact that type contexts are repetitions-free. Hence, the action of permutations on contexts is always fixedpoint-free.
\end{proof}

 %Permumations $ \sigma \in S_{\length{\gamma}}$ acts from the right on terms $ (-) \cdot \sigma : \repsTerms (\mathcal{R})(\gamma ; \ty) \to \repsTerms (\mathcal{R})(\act{\gamma}{\sigma}; \ty) $ as described in Figure \ref{fig:ra}. The definition of action exploits Lemma \ref{shucan}.
 \begin{figure}[!t]
	\centering
	\begin{gather*}		
     \begin{prooftree} \hypo{\ty \in \mathcal{R}_0}\infer1{ x : \ty \vdash x : \ty }  \end{prooftree} \qquad 
\begin{prooftree}  \hypo{ \gamma_1 \vdash s_1 : \ty_1 \dots \gamma_k \vdash s_k : \ty_k  }\hypo{\sigma \in\mathsf{shu}(\gamma_1,\dots, \gamma_k)}  \infer2{ \act{(\gamma_1, \dots, \gamma_k)}{\sigma}  \vdash {\seqdots{s}{1}{k}}: \tens{\ty}{1}{k} }  \end{prooftree} \\[1em] \qquad  
 \begin{prooftree} 
 \hypo{\gamma \vdash s : \tens{\ty}{1}{k}  }
  \hypo{ \delta, x_1 : \ty_1, \dots, x_k : \ty_k, \delta' \vdash t : b  }\hypo{\sigma \in \mathsf{shu}(\delta,\gamma', \delta')}\infer3{ \act{(\delta, \gamma, \delta')}{\sigma} \vdash {t [x_1^{\ty_1}, \dots, x_k^{\ty_k} := s] }: b  }  \end{prooftree}
  \end{gather*}
  \caption{\small Symmetric Representable Type System on a signature $\mathcal{R}$. We omit the case $ f(\vec{s}). $}
  \hrulefill
	\label{fig:reps-calc} 
\end{figure} 
 
 %\\\qquad \qquad \qquad   \begin{prooftree}  \hypo{f \in \mathcal{R}(\ty_1, \dots, \ty_n ; b)  } \hypo{ \gamma_{1} \vdash s_{1} : \ty_{1} \dots \gamma_{n} \vdash s_{n} : \ty_{n}  } \hypo{ \sigma \in \mathsf{shu}(\gamma_1, \dots, \gamma_n) }\infer3{  { \act{( \gamma_{1}, \dots, \gamma_{n})}{\sigma}} \vdash    f(s_1, \dots, s_n) : b }  \end{prooftree}

\begin{proposition}\label{permrule}
The following rule is admissible:
$ \begin{prooftree} \hypo{\gamma \vdash s : \ty}\hypo{\sigma \in S_k}\infer2{ \act{\gamma}{\sigma} \vdash s : \ty }  \end{prooftree}. $
\end{proposition}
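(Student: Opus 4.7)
The plan is to prove the statement by induction on the derivation $\pi \triangleright \gamma \vdash s : \ty$, with $\sigma \in S_{\length{\gamma}}$ arbitrary. The engine of the induction is Lemma \ref{shucan}, which decomposes any permutation on a concatenation of blocks as a shuffle composed with a direct sum of intra-block permutations. Two bookkeeping identities about the right action will be used throughout: first, $\act{\act{\gamma}{\tau}}{\rho} = \act{\gamma}{\tau \circ \rho}$; and second, $\act{(\gamma_1,\dots,\gamma_k)}{\bigoplus_i \tau_i} = (\act{\gamma_1}{\tau_1}, \dots, \act{\gamma_k}{\tau_k})$ whenever $\tau_i \in S_{\length{\gamma_i}}$.

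The variable case is immediate since $S_1$ is trivial. For the list case, $\pi$ concludes with premises $\gamma_i \vdash s_i : \ty_i$ and a shuffle $\tau \in \mathsf{shu}(\gamma_1,\dots,\gamma_k)$, so $\gamma = \act{(\gamma_1,\dots,\gamma_k)}{\tau}$ and the target context $\act{\gamma}{\sigma}$ equals $\act{(\gamma_1,\dots,\gamma_k)}{\tau \circ \sigma}$. Decompose $\tau \circ \sigma = \tau_0 \circ (\bigoplus_i \tau_i)$ using Lemma \ref{shucan}, with $\tau_0$ a shuffle and $\tau_i \in S_{\length{\gamma_i}}$. The IH applied to each premise with $\tau_i$ yields $\act{\gamma_i}{\tau_i} \vdash s_i : \ty_i$, and a fresh application of the list rule with shuffle $\tau_0$ produces a derivation whose concluding context, after rewriting via the two identities above, is exactly $\act{\gamma}{\sigma}$. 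The substitution case is analogous but must leave the block of bound variables undisturbed: decompose $\tau \circ \sigma = \tau_0 \circ (\tau_1 \oplus \tau_2 \oplus \tau_3)$ into a shuffle plus permutations of the three blocks $\delta$, $\gamma'$, $\delta'$ (where $\gamma'$ is the context of the premise typing $s$), apply the IH to $\gamma' \vdash s$ with $\tau_2$ and to the inner derivation $\delta, \vec{x}:\tyl, \delta' \vdash t : b$ with the composite $\tau_1 \oplus id \oplus \tau_3$ which fixes the bound-variable block, and conclude via the substitution rule with shuffle $\tau_0$. The multiarrow case is a direct generalisation to $n$ premises.

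The main obstacle is purely administrative: verifying that the shuffle-plus-block decomposition of $\tau \circ \sigma$ interacts cleanly with the shuffle already present at the concluding rule of $\pi$, so that after applying the IH and rebuilding the rule the two action identities collapse the final context to $\act{\gamma}{\sigma}$. No combinatorial ingredient beyond Lemma \ref{shucan} should be required, and no substitution or structural lemma about terms is touched, since the term $s$ itself is unchanged by the argument.
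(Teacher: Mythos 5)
Your proof is correct and takes essentially the same route as the paper, whose entire argument is an induction on the structure of $s$ (equivalently, on its unique derivation) exploiting Lemma \ref{shucan} to split the composed permutation into a shuffle plus blockwise permutations that are discharged by the induction hypothesis and reassembled with the typing rule. The one delicate point you flag as ``administrative'' --- whether the factorization should be read as $\tau_0 \circ (\bigoplus_i \tau_i)$ or $(\bigoplus_i \tau_i) \circ \tau_0$ relative to the right-action convention $\act{\act{\gamma}{\tau}}{\rho} = \act{\gamma}{\tau\circ\rho}$ --- is indeed only a convention-level bookkeeping matter that the paper's own one-line proof likewise leaves implicit.
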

\begin{proof}
Easy induction on the structure of $ s ,$ exploiting Lemma \ref{shucan}.  
\end{proof}

The reduction relation is the same as the representable one, that we know to be strongly normalizing and confluent. We also have preservation of typing under reduction and structural equivalence. Given $ s \in  \repsTerms(\mathcal{R})(\gamma; \ty),$ we denote by $\NF{s} $ its unique normal form. As a corollary of subject reduction, we get that $\NF{s} \in  \repsTerms(\mathcal{R})(\gamma ; \ty).$

\subparagraph{Free Symmetric Representable Multicategories}

We now characterize the free symmetric representable construction. Given a representable signature $\mathcal{R}, $ we define a multicategory by setting $\Ob{\freesrm{\mathcal{R}}} = \mathcal{R}_0 $ and $\freesrm{\mathcal{R}} ( \gamma;  \ty) = \repsTerms (\mathcal{R})( \gamma; \ty) {/} { ( \streq \cup \repeq )}.$ Composition is given by substitution, identities are given by variables. The operation is well-defined on equivalence classes and satisfies associativity, identity axioms. One can prove that $ \freesrm{\mathcal{R}} $ is representable, by repeating the argument given for Proposition \ref{prop:repr}. The proof that $ \freesrm{\mathcal{R}} $ is symmetric is a direct corollary of Proposition \ref{permrule}:

\begin{proposition}[Symmetry]\label{prop:sym}  We have that $ \mathcal{M}(\gamma, \ty_1, \dots, \ty_k; \ty) = \mathcal{M}(\gamma; \ty_{\sigma(1)}, \dots, \ty_{\sigma (k)}; \ty)                 . $
\end{proposition}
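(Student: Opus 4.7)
The plan is to derive the symmetric structure directly from the admissibility of the permutation rule (Proposition \ref{permrule}). The idea is that a permutation $\sigma \in S_k$ acts on the typing context of a derivation without altering the underlying term, so the action $\act{-}{\sigma}$ should simply be the identity on the (equivalence classes of) terms, viewed as morphisms with a permuted source.

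First I would define the candidate family of maps. Given $[s] \in \freesrm{\mathcal{R}}(\gamma, \ty_1, \dots, \ty_k; \ty)$, pick a representative $s$ with derivation $\pi \triangleright \gamma, \ty_1, \dots, \ty_k \vdash s : \ty$. By Proposition \ref{permrule} applied to the evident permutation fixing $\gamma$ pointwise and sending $(\ty_1, \dots, \ty_k)$ to $(\ty_{\sigma(1)}, \dots, \ty_{\sigma(k)})$, there is a derivation of $\gamma, \ty_{\sigma(1)}, \dots, \ty_{\sigma(k)} \vdash s : \ty$. Set $\act{[s]}{\sigma} \eqdef [s]$ as an element of $\freesrm{\mathcal{R}}(\gamma, \ty_{\sigma(1)}, \dots, \ty_{\sigma(k)}; \ty)$. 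To check well-definedness, observe that $\torp$ and $\streq$ only rearrange or contract/expand subterms and do not touch the typing context; hence if $s \sim s'$ the two lie in the same equivalence class on both sides. The inverse map is obtained by applying Proposition \ref{permrule} with $\sigma^{-1}$, and the composite in either order is clearly the identity.

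Next I would verify the axioms of a symmetric multicategory as listed in \cite{lein:high}: functoriality in the two group coordinates (identity of $S_k$ acts trivially, and $\act{-}{\sigma \circ \tau} = \act{(\act{-}{\tau})}{\sigma}$), and the equivariance of composition with respect to the block-and-shuffle action on multi-inputs. All three reduce to combinatorial identities on permutations applied to contexts, which hold because the action on contexts is just the standard right action recalled in Section \ref{sec:prel}, and composition in $\freesrm{\mathcal{R}}$ is performed by capture-avoiding substitution, which commutes with renaming/permutation of the list of free variables in the obvious way.

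I do not expect serious obstacles here. The potentially delicate point is the bookkeeping behind the equivariance of composition: one has to check that substituting a family of terms into $s$ and then permuting the outer context yields the same equivalence class as permuting first and then substituting along the corresponding block permutation. This is a routine verification that uses linearity (each variable occurs exactly once, so there is no ambiguity in how substitution interacts with the permutation) together with Proposition \ref{permrule} applied at each sub-derivation. Once this is in place, the statement follows immediately, and in particular the equality displayed in the proposition holds literally because both hom-sets are defined as equivalence classes of the same syntactic terms, differing only in the admissible presentation of their typing context.
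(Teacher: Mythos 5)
Your proposal is correct and follows essentially the same route as the paper, which simply observes that the symmetry of $\freesrm{\mathcal{R}}$ is a direct corollary of the admissible permutation rule (Proposition \ref{permrule}): the same terms inhabit both hom-sets, so the symmetric action is the identity on equivalence classes. Your additional checks of well-definedness under $\sim$ and of the equivariance axioms are exactly the routine verifications the paper leaves implicit.
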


 \begin{example}
An interesting example of structural equivalence is the following. Let $  s =  \seq{}[-:= x][-:=y] \text{ and } s' = \seq{}[-:= y][-:=x] , $ with $ s, s' \in \repsTerms (\mathcal{R})( (),(); ()  ) .$ We have that $    \seq{}[-:= x][-:=y] \streq  \seq{}[-:= y][-:=x]  , $ with $ x : (), y : () \vdash s : ()$ and $ y : (), x : () \vdash s' : ().$ This is the way our syntax validates the fact that permutations of the unit type \emph{collapse} to the identity permutation, since $ s$ corresponds to the identity permutation, while $ s'$ to the swapping of $ x $ with $y $.  \end{example}

\begin{definition}\label{unireps}
 Let $ \mathcal{R}$ be a representable signature and $ \mathsf{S} $ be a symmetric representable multicategory. Let $ i : \mathcal{R} \to \overline{\mathsf{S}} $ be a map of representable signatures. We define a family of maps $ \mathsf{RT}(i)_{\gamma, \ty} : \repsTerms(\mathcal{R}) (\gamma ; \ty)  \to \mathsf{S}(i(\gamma) ; i(\ty)) $ by induction as follows: 
 
 \scalebox{0.9}{\parbox{1.05\linewidth}{\[  \mathsf{RT}(i)_{\ty, \ty}(x) = id_{i(\ty)}  \qquad    \mathsf{RT}(i)_{\act{(\gamma_1, \dots, \gamma_k)}{\sigma}, \seqdots{\ty}{1}{k}}(\seqdots{s}{1}{k} ) = \left(  \bigotimes_{j =1}^k \mathsf{RT}(i)_{\gamma_j, \ty_j}(s_j) \right) \circ \sigma \]
\[    \mathsf{RT}(i)_{\act{\delta_1, \gamma, \delta_2}{\sigma}, \ty}(s[ x_1^{\ty_1}, \dots, x_k^{\ty_k} := t]) =   {((\mathsf{RT}(i)_{\delta_1 , \ty_1, \dots, \ty_k, \ty}(s))^\ast \circ \seq{id_{\delta_1},\mathsf{RT}(i)_{\gamma, \tyl}(t), id_{\delta_2}} )}  \circ  {\sigma}. \]}}
\end{definition}

\begin{theorem}[Free Construction]
Let $ \mathsf{S} $ be a a symmetric representable multicategory and $ i :  \mathcal{R} \to \overline{\mathsf{S}}  $ a map of representable signatures. There exists a unique symmetric representable functor $ i^{\ast}$ such that  $  i = \overline{i^\ast} \circ \eta_{\mathcal{R}} . $ \end{theorem}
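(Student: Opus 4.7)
The plan is to repeat the strategy used in the representable case, but now keeping careful track of the symmetric action. On objects we set $i^{\ast}(o) = i(o)$ for $o \in \atm$ and $i^{\ast}(\tens{\ty}{1}{k}) = \tens{i^{\ast}(\ty)}{1}{k}$ in $\mathsf{S}$; this is forced by the requirement that $i^\ast$ preserve the representable structure on the nose. On morphisms we use Definition \ref{unireps} to send each representative $s \in \repsTerms(\mathcal{R})(\gamma;\ty)$ to $\mathsf{RT}(i)_{\gamma,\ty}(s) \in \mathsf{S}(i(\gamma);i(\ty))$. The canonicity of typing (Proposition \ref{prop:symcano}) guarantees that the recursion has no ambiguity in the decomposition of a term into its unique type derivation, so the shuffle permutation appearing in each clause of Definition \ref{unireps} is uniquely determined.

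The crucial step is to show that $\mathsf{RT}(i)$ descends to the quotient $\repsTerms(\mathcal{R})/{\sim}$. I would verify this in two stages. First, invariance under structural equivalence: each instance of the rule $\ctx[s[\vec{x} := t]] \streq \ctx[s][\vec{x} := t]$ translates, under $\mathsf{RT}(i)$, to an equation that follows from the multinaturality in $\gamma,\delta$ of the representable bijection and from associativity of multicategorical composition. Second, invariance under $\torp$: the $\beta$-root step corresponds to the triangular identity $\mathsf{let}(f) \circ \seq{\mathsf{id},\mathsf{re},\mathsf{id}} = f$ (inverse pair of Proposition \ref{prop:repr}), while the $\eta$-root step corresponds to the other triangular identity $\mathsf{let}(\mathsf{re} \circ f) = f$; both hold by representability of $\mathsf{S}$. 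The contextual closures are handled by induction on the context, using multinaturality. By Theorem \ref{aut:snc} and the fact that $\streq$ is a strong bisimulation, it suffices to check these root cases.

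Once $i^{\ast}$ is well-defined on equivalence classes, its functoriality (preservation of composition and identities) is an induction on term structure: identities $[x]$ are mapped to $id_{i(\ty)}$ by definition, and composition is interpreted via substitution, which $\mathsf{RT}(i)$ sends to the corresponding categorical composition by the same naturality arguments. Preservation of the representable structure on the nose is built into the clause for lists, and preservation of the symmetric structure follows from Proposition \ref{permrule} together with the clause for lists and explicit substitutions, where the shuffle $\sigma$ is sent to the corresponding symmetry in $\mathsf{S}$.

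Uniqueness is the easiest part: any symmetric representable functor $F : \freesrm{\mathcal{R}} \to \mathsf{S}$ with $\overline{F} \circ \eta_{\mathcal{R}} = i$ must agree with $i^{\ast}$ on atoms and on the image of $\mathcal{R}$, and must commute with the structural operations (tensor, $\mathsf{let}$, symmetries). An induction on the unique type derivation of a term then forces $F([s]) = i^{\ast}([s])$. The main obstacle I anticipate is bookkeeping around the shuffle permutations: one needs Lemma \ref{shucan} to reconcile the canonical shuffle decomposition appearing in the typing judgments of a substituted term with the one appearing in the codomain of $\mathsf{RT}(i)$, so that the symmetric actions composed along the recursion really do match.
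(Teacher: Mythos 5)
Your proposal is correct and follows essentially the same route as the paper: interpret terms by the inductive clauses of Definition \ref{unireps}, show invariance under $\streq$ and under the $\beta$/$\eta$ steps via the representability bijection and (multi)naturality, and conclude functoriality, structure preservation and pointwise uniqueness from preservation of the representable and symmetric structure. The only presentational difference is that the paper packages the $\beta$-case through an explicit substitution lemma ($\mathsf{RT}(i)$ sends linear substitution to multicategorical composition, cf.\ Proposition \ref{functrep}), which you state in your functoriality paragraph rather than in the $\beta$-invariance step, but the content is the same.
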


\subparagraph{Coherence Result} Fix a discrete signature $ \mathcal{R}.$ We shall prove that morphisms in $\freesrm{\mathcal{R}} $ can by characterized by means of appropriate permutations of their typing context. This will lead the following coherence result for symmetric representable multicategories: two morphisms in $ \freesrm{\mathcal{R}} $ are equal whenever their \emph{underlying permutations} are the same. 

We start by defining the \emph{strictification} of a representable type $ \stricti{\ty}$, by induction as follows: $\mathsf{strict}(o) = o , \ \mathsf{strict}(\tens{\ty}{1}{k}) = \stricti{\ty_1}, \dots, \stricti{\ty_k} . $ $\mathsf{strict}(\ty)$ is the list of atoms that appear in the type $ \ty $. We extend the strictification to contexts in the natural way. Let $s \in \NF{\repsTerms (\mathcal{R})}(\gamma, \ty) $ and $ \sigma \in \stabi{\stricti{\gamma}} . $ We define the \emph{right action} of $\sigma $ on $s $, $\ract{s}{\sigma} $ by induction as follows: \[      \ract{x}{id} = x \qquad \ract{\seqdots{s}{1}{k}}{\sigma \circ (\bigoplus_{i =1}^k \sigma_i)} =                                                \act{\seq{ \ract{s_1}{\sigma_1}, \dots, \ract{s_k}{\sigma_k}   }}{\sigma} \] \[      \ract{(s[\vec{x}_1:= x_1] \dots [\vec{x}_n:= x_n])}{\sigma} = (\ract{s}{\sigma})  [\sigma(\vec{x}_1):= x_1] \dots [\sigma (\vec{x}_n) := x_n]      \] where $\sigma (x_1, \dots, x_k) $ stands for the image of $ x_1, \dots, x_k$ along the permutation $\sigma . $  

\begin{theorem}
Let $ s \in \NF{\repsTerms (\mathcal{R})}(\gamma, \ty) .  $ There exists a unique $\sigma \in \stabi{(\stricti{\gamma}} $ and a unique non-symmetric representable normal term $t $ such that $ s = \ract{t}{\sigma} . $
\end{theorem}
\begin{proof}
By induction on $s , $ exploiting Proposition \ref{prop:symcano}. \end{proof}
%If $s = x $ the result is immediate. If $ s = \seqdots{s}{1}{k} $ with $ \gamma = \act{(\gamma_1, \dots, \gamma_k)}{\sigma} \vdash \seqdots{s}{1}{k} : \tens{\ty}{1}{k} $ and $\gamma  $ being atomic, by IH we have unique $ \sigma_1, \dots, \sigma_k \in  \mathsf{St}(\stricti{\gamma_i})$ and $ t_1, \dots, t_k \in \NF{\repTerms(A)} $ s.t. $ s_i = \ract{t_i}{\sigma_i}  $ for $ i \in [k] . $ Then, by definition, $ s = \ract{\seqdots{t}{1}{k}}{\sigma \circ (\sigma_1 \otimes \dots \otimes \sigma_k)} .$ Uniqueness derives by Proposition \dots . If $ s = p [\vec{x}_1 := x_1] \dots [\vec{x}_n := x_n], $ By IH there exists unique $ \sigma  $ and $ t$ s.t. $ p = \ract{t}{\sigma} . $ Then we can conclude by the fact that the action of $\sigma $ on variables is fixedpoint-free. 
 
 Let $ s \in \NF{\repsTerms (\mathcal{R})}(\gamma; \ty) . $ We denote by $ \mathsf{sym}(s) $ the unique permutation given by the former theorem. Given $ s \in \repsTerms{A}(\gamma; \ty) $ we set $\mathsf{sym}(s) = \mathsf{sym}(\NF{s}) .  $ This definition is clearly coherent with the quotient on terms performed in the free construction.
 
 \begin{theorem}
Let $ s, s' \in \NF{\repsTerms (\mathcal{R})}(\gamma; \ty) .$ If $ \mathsf{sym}(s) = \mathsf{sym}(s') $ then $ s \streq s' .   $
\end{theorem}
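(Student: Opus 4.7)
The plan is to reduce this symmetric coherence statement to the non-symmetric coherence result already established in Theorem \ref{cohrepal}, by factoring out the ``permutation part'' of each normal form.

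First, I would apply the preceding theorem (uniqueness of the decomposition of normal forms) to write $s = \ract{t}{\sigma}$ and $s' = \ract{t'}{\tau}$, where $t, t'$ are non-symmetric representable normal terms and $\sigma, \tau \in \stabi{\stricti{\gamma}}$. By definition of $\mathsf{sym}(-)$, the hypothesis $\mathsf{sym}(s) = \mathsf{sym}(s')$ gives $\sigma = \tau$. Next, I would show that $t$ and $t'$ are typed in a common context $\gamma_0$ (the canonical context inherited from the decomposition, i.e.\ a list whose underlying strictification is $\stricti{\gamma}$) and with the same output type $\ty$; this uses Proposition \ref{prop:symcano} together with the explicit inductive definition of $\ract{-}{\sigma}$, since $\sigma$ acts only by permuting variable labels while leaving the types in $\stricti{\gamma}$ fixed. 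In particular, both $t$ and $t'$ lie in $\repTerms(\mathcal{R})(\gamma_0; \ty)$ and are normal forms of the non-symmetric calculus (they have already been identified as ``non-symmetric'' by the previous theorem).

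At this point Theorem \ref{cohrepal} applies directly and yields $t \streq t'$. It then remains to transport this equivalence back along the action, i.e.\ to verify the compatibility statement: if $t \streq t'$ then $\ract{t}{\sigma} \streq \ract{t'}{\sigma}$. This I would prove by a straightforward induction on the derivation of $t \streq t'$, inspecting the defining clause
\[
\ctx[p[\vec{x} := q]] \streq \ctx[p][\vec{x} := q]
\]
and observing that the inductive definition of $\ract{-}{\sigma}$ distributes over list, substitution and signature constructors while only relabelling the binders and the free variables. Hence the commutation of an explicit substitution with its surrounding context survives the action, giving $\ract{t}{\sigma} \streq \ract{t'}{\sigma}$. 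Chaining the equivalences yields $s = \ract{t}{\sigma} \streq \ract{t'}{\sigma} = s'$.

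The main technical obstacle I anticipate is Step~3, namely the bookkeeping required to match up the contexts of $t$ and $t'$. A priori the two decompositions could produce different canonical contexts $\gamma_0, \gamma_0'$, so one must argue that the uniqueness built into the previous theorem forces these to coincide whenever $\sigma = \tau$; this should follow from the fact that $\stabi{\stricti{\gamma}}$ acts fixed-point freely on the distinct variables of $\gamma$, as already exploited in the proof of Proposition \ref{prop:symcano}. Once this is settled, the compatibility of $\ract{-}{\sigma}$ with $\streq$ is routine because structural equivalence is purely about placement of explicit substitutions, which is orthogonal to the relabelling performed by the action.
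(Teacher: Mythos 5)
Your proof is correct and is essentially the argument the paper intends (the theorem is stated there without proof, immediately after the unique-decomposition theorem, precisely so that one writes $s=\ract{t}{\sigma}$, $s'=\ract{t'}{\tau}$, concludes $\sigma=\tau$ from $\mathsf{sym}(s)=\mathsf{sym}(s')$, applies Theorem \ref{cohrepal} to the non-symmetric normal parts $t,t'$ living in a common context, and transports $t\streq t'$ along the action). The only point worth tightening is that $\ract{-}{\sigma}$ is officially defined only on terms of the canonical shape of Definition \ref{chosnorm}, so the compatibility step ``$t\streq t'$ implies $\ract{t}{\sigma}\streq\ract{t'}{\sigma}$'' is best checked on the specific $\streq$-chains produced by Theorem \ref{cohrepal} (permutations of the trailing explicit substitutions), which is routine and exactly the relabelling-commutes-with-placement observation you make.
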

%\begin{proof}Corollary of the former theorem and Theorem \ref{cohrep}.\end{proof}

\begin{theorem}[Coherence]\label{cohsm}
Let $ [s], [s'] \in \freesrm{\mathcal{A}}(\gamma; \ty) . $ If $ \mathsf{sym}([s]) = \mathsf{sym}[s']  $ then $ [s] = [s'] . $ 
\end{theorem}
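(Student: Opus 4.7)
The plan is to reduce the statement to the previous theorem (on normal forms) by exploiting confluence, strong normalization, and subject reduction, all of which have already been established for $\torp$ in the symmetric setting.

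First I would pick arbitrary representatives $s \in [s]$ and $s' \in [s']$ with $s, s' \in \repsTerms(\mathcal{R})(\gamma;\ty)$. By Theorem \ref{aut:snc} (applied in the symmetric setting, as the reduction and the proof of strong normalization/confluence are inherited verbatim), the normal forms $\NF{s}$ and $\NF{s'}$ exist and are unique. Subject reduction (Proposition \ref{subredrep}) together with its symmetric analogue guarantee that $\NF{s}, \NF{s'} \in \NF{\repsTerms(\mathcal{R})}(\gamma;\ty)$, so both normal forms live in the same hom-set and we can legitimately compare them.

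Next I would unfold the hypothesis. By definition $\mathsf{sym}([s]) = \mathsf{sym}(\NF{s})$ and $\mathsf{sym}([s']) = \mathsf{sym}(\NF{s'})$, and this definition is sound because $\mathsf{sym}$ does not depend on the chosen representative: the free construction quotients by $(\streq \cup \repeq)$, and any two $(\streq \cup \repeq)$-equivalent terms reach structurally-equivalent normal forms, on which $\mathsf{sym}$ agrees by the uniqueness clause of the theorem preceding the statement (this is exactly the remark ``This definition is clearly coherent with the quotient on terms performed in the free construction''). Hence the assumption $\mathsf{sym}([s]) = \mathsf{sym}([s'])$ yields $\mathsf{sym}(\NF{s}) = \mathsf{sym}(\NF{s'})$. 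Now I apply the immediately preceding theorem to the pair of normal forms $\NF{s}, \NF{s'}$ to conclude $\NF{s} \streq \NF{s'}$.

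Finally I would chain equivalences: $s \repeq \NF{s} \streq \NF{s'} \repeq s'$, where $\repeq$ denotes the equivalence generated by $\torp$. Therefore $s$ and $s'$ lie in the same class of $(\streq \cup \repeq)$, which is exactly the quotient defining $\freesrm{\mathcal{R}}$, so $[s] = [s']$ as morphisms in $\freesrm{\mathcal{R}}(\gamma;\ty)$. There is no genuinely hard step here once the prior machinery is in place; the only subtle point to verify carefully is that $\mathsf{sym}$ is well-defined on equivalence classes, which amounts to checking that structural equivalence preserves normal forms up to $\streq$ and that $\streq$-equivalent normal forms share the same underlying permutation---both are immediate from Proposition \ref{prop:symcano} and the definition of $\ract{(-)}{\sigma}$.
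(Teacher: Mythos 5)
Your proposal is correct and matches the paper's intended argument: the theorem is meant as a direct corollary of the preceding result on normal forms, obtained by passing to $\NF{s}$, $\NF{s'}$ (which exist and stay well-typed by strong normalization, confluence, and subject reduction), using that $\mathsf{sym}$ is defined via normal forms and is invariant on equivalence classes, and then chaining $s =_{\mathsf{rep}} \NF{s} \streq \NF{s'} =_{\mathsf{rep}} s'$ to conclude $[s]=[s']$ in the quotient defining $\freesrm{\mathcal{R}}$. No substantive difference from the paper's route.
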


\begin{theorem}[Coherence for Symmetric Monoidal Categories] Two morphisms in the free symmetric monoidal categories are equal if their underlying permutations are equal. \end{theorem}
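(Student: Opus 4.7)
The plan is to deduce the statement as a corollary of Theorem \ref{cohsm} by means of a symmetric variant of Hermida's equivalence of Theorem \ref{eqmonrep}. First I would observe that the equivalence $\mathsf{RepM} \simeq \mathsf{Mon}$ extends in the natural way to an equivalence $\mathsf{RepsM} \simeq \mathrm{SMon}$ between symmetric representable multicategories and symmetric monoidal categories with lax (symmetric) monoidal functors, with $\mathsf{mon}(-)$ and $\mathsf{rep}(-)$ defined exactly as in the non-symmetric case. As in the proof of the monoidal coherence theorem, this equivalence sends free constructions to free constructions, so for a discrete representable signature $\mathcal{R}$ with underlying set of atoms $X$, the symmetric monoidal category $\mathsf{mon}(\freesrm{\mathcal{R}})$ is the free symmetric monoidal category on $X$.

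The second step is to identify the notion of \emph{underlying permutation} on both sides. On the monoidal side, any morphism in the free symmetric monoidal category on $X$ between two objects built from the same multiset of atoms carries a canonical permutation, obtained by tracking atoms through structural isomorphisms and the symmetry $\gamma_{A,B}$. On the multicategory side, Proposition \ref{prop:symcano} and the strictification construction yield the permutation $\mathsf{sym}([s]) \in \stabi{\stricti{\gamma}}$. I would verify, by induction on morphisms generated by identities, tensors, associators, unitors, symmetries and composition, that under the equivalence $\mathsf{mon}(-)$ the monoidal underlying permutation is mapped to $\mathsf{sym}(-)$. Both constructions are determined by how atoms are routed across the structural maps, so the two definitions match on each generator and are compatible with composition and tensor.

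With this dictionary in place, the proof becomes immediate. Given two morphisms $f, g$ in the free symmetric monoidal category on $X$ with equal underlying permutations, pull them back along the equivalence to equivalence classes $[s], [s'] \in \freesrm{\mathcal{R}}(\gamma;\ty)$ with $\mathsf{sym}([s]) = \mathsf{sym}([s'])$. By Theorem \ref{cohsm} we obtain $[s] = [s']$, and pushing forward again through $\mathsf{mon}(-)$ yields $f = g$.

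The main obstacle is the second step: carefully checking that the permutation extracted by $\mathsf{sym}$ from a normal resource term agrees with the permutation one reads off the underlying graph of a structural morphism in the symmetric monoidal category. This requires tracing the interpretation of the associators, unitors and symmetries through Definition \ref{unireps} and the equivalence $\mathsf{mon}(-)$, and it rests crucially on the canonicity of typing (Proposition \ref{prop:symcano}), which ensures that the permutation assigned to a term is intrinsic and not an artefact of a particular derivation.
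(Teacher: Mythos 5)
Your proposal is correct and follows essentially the same route as the paper, which proves this statement as a corollary of the symmetric extension of Theorem \ref{eqmonrep} and Theorem \ref{cohsm}; your additional care in matching the monoidal notion of underlying permutation with $\mathsf{sym}(-)$ simply makes explicit a step the paper leaves implicit.
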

\begin{proof}
Corollary of Theorems \ref{eqmonrep} and \ref{cohsm}.
\end{proof}

\section{A Resource Calculus for Symmetric Closed Multicategories}\label{sec:sc}

We consider the case of symmetric closed multicategories, which is orthogonal to the representable structures we introduced in the previous sections. This calculus corresponds to the resource version of linear $ \lambda$-calculus, where we have unbiased $ k $-ary $\lambda $-abstraction and (linear) application. We begin by defining the terms and their typings, then proceed to introducing their operational semantics. We conclude by characterizing the free construction \textit{via} well-typed equivalence classes of terms.

\subparagraph{Symmetric Closed Resource Terms} Let $\mathcal{L} $ be a closed signature. The \emph{symmetric closed resource terms} on $\mathcal{L} $ are defined by the following inductive grammar:
\[     \rTerms (\mathcal{L}) \ni s ::= x \in \mathcal{V} \mid \la{\seq{x^{\ty_1}_1, \dots, x^{\ty_k}_k}} s \mid s \seqdots{s}{1}{k}  \mid f (s_1, \dots, s_k)  \]
for $ k \in \mathbb{N} $ and $ f \in \mathsf{arr}(\mathcal{L}), \ty_i \in \mathcal{L} . $ A term of the shape $s \seqdots{s}{1}{k} $ is called a \emph{($ k$-linear) application}. A term of the shape $ \la{\seq{ x_1, \dots, x_k} }  s$ is called a \emph{($ k$-linear) $ \lambda$-abstraction}. Variables under the scope of a $ \lambda $-abstraction are bound.  We define the following subset of terms  $\mathsf{AT} = \{   \ctxl[\la{\vec{x}} t] \mid \text{ for some substitution context } \ctxl \text{ and term } t .        \} . $ Typing and contexts with hole are defined in Figure \ref{fig:closed-calc}. Given a term $ \gamma \vdash s : \ty $, there exists a unique type derivation for it.

%\begin{proof}By induction on the structure of $ s . $ The proof in is essence it's similar to the one of Proposition \dots .\end{proof}

\begin{figure}[!t] \begin{gather*} 
			\begin{prooftree} \hypo{\ty \in \mathcal{L}_0}\infer1{ x : \ty \vdash x : \ty }  \end{prooftree} \qquad   \begin{prooftree} \hypo{ \gamma, x_1 : \ty_1, \dots, x_k : \ty_k \vdash s : b  }\infer1{\gamma \vdash \la{\seq{x_1^{\ty_1}, \dots, x_k^{\ty_k} }} s : \tens{\ty}{1}{k} \multimap b}  \end{prooftree}                \\[1em]  \begin{prooftree}  \hypo{ \gamma_0 \vdash s : \tens{\ty}{1}{k} \multimap b } \hypo{ \gamma_1 \vdash t_1 : \ty_1 \dots \gamma_k \vdash t_k : \ty_k } \hypo{ \sigma \in \mathsf{shu}(\gamma_0, \dots, \gamma_k) }\infer3{ \act{ (\gamma, \delta)}{\sigma} \vdash s \seqdots{t}{1}{k} : b  }   \end{prooftree}  
			  \end{gather*} \hrulefill
			  \begin{align*}
&	\ctx ::=  \ctx ::= \hole{\cdot} \mid  s\seq{s_1, \dots, \ctx, \dots, s_k}   \mid  \ctx \seq{s_1,  \dots, s_k}   \mid \la{\seq{x_1, \dots, x_k}} \ctx \mid f (s_1, \dots, \ctx, \dots, s_k). 
			\\
&	\ctxe ::= \hole{\cdot} \mid    s\seq{s_1, \dots, \ctxe, \dots, s_k}   \mid  \ctxe \seq{s_1,  \dots, s_k} \quad (\ctxe \neq \hole{\cdot})  \mid \la{\seq{x_1, \dots, x_k}} \ctxe  \mid f (s_1, \dots, \ctxe, \dots, s_k) .
			\end{align*}
			   \caption{\small Symmetric closed type system on a signature $\mathcal{L}$ and contexts with one hole. Types are the elements of $ \mathcal{L}_0 .$ We omit the case of $ f (\vec{s}) . $}\hrulefill
	
	\label{fig:closed-calc}
\end{figure}

\subparagraph{Terms under Reduction} The reduction relation is defined in Figure \ref{fig:toc}. 
\begin{remark}
The definition of the $\beta $-reduction follows the standard choices for resource calculi. The novel technicality is the restriction of the $\eta $-reduction, that is justified again by the goal of obtaining a strongly normalizing reduction. Indeed, $\eta $-reduction is again not normalizing. The situation recalls what happens in the standard $ \lambda$-calculus and we deal with it adapting to our framework the restrictions introduced in \cite{mints:closed,jay:eta}.
\end{remark}

%\begin{theorem}[Subject Reduction]Let $ s \to_{\mathsf{sc}} s' $ and $\gamma \vdash s :\ty . $ Then $\gamma \vdash s' : \ty . $\end{theorem}

%\begin{theorem}[Decreasing Measures]\label{decsc}Let $ s \to_\beta s'  ,$ then $ \size{s'} < \size{s} .  $ Let $ s \to_{\eta '} s' ,  $ then $  \eta(s') < \eta (s)  . $\end{theorem}

\begin{figure}[!t]			\begin{align*} 		  
		&	\text{\emph{$\beta$ Root-Step}: }	  (\la{\seq{x_1^{\ty_1}, \dots, x_k^{\ty_k}}} s) \seqdots{t}{1}{k} \to_{\beta}  \subst{s}{x_1, \dots, x_k}{t_1, \dots, t_k}   .
			\\
			&  \text{\emph{$\eta$ Root-Step}: }      s \to_{\eta}     \la{\vec{x}^{\tyl}} (s \vec{x})       \qquad \text{where } \vec{x}  \text{ fresh }, \gamma \vdash s : \tyl \multimap \ty, s \notin \mathsf{AT}. \\
& \text{ \emph{Contextual extensions}: }  \quad \begin{prooftree} \hypo{  s \to_\beta s' }\infer1{ \ctx[s] \to_\beta \ctx[s'] } \end{prooftree} \qquad   \begin{prooftree} \hypo{ s \to_\eta s' }\infer1{ \ctxe[s] \to_\eta \ctxe[s'] } \end{prooftree} \quad ( \to_{\mathsf{sc}} = \to_{\beta} \cup \to_{\eta} ). 
	\end{align*}\caption{Symmetric closed reduction relations.}
\hrulefill
	\label{fig:toc}
\end{figure}

To study the rewriting, we adapt the method introduced for representable terms. We first prove that typing is preserved under reduction. Then, we introduce a measure that decreases under $\eta $. We define the \emph{size} of a type by induction: $\size{o} = 0, \size{\seqdots{\ty}{1}{k} \multimap \ty} = 1 + \sum \size{\ty_i} + \size{\ty}. $  Given $\gamma \vdash s  : \ty $ we define a set of typed subterms of $ s$: $\mathsf{EST} (s) = \{\delta \vdash p : 	\ty \mid p \in  \mathsf{ST}(s) \setminus \mathsf{AT}  \text{ s.t. } \ctxe[\delta \vdash p : \ty] = s  \text{ for some context } \ctxe  \} .$ We set $ \eta (s) =  \sum_{\delta \vdash p : \ty \in \mathsf{EST}(s)} \size{\ty}  .  $ The proof of strong normalization and confluence is completely symmetrical to the representable case. Given $ s \in  \rTerms(\mathcal{R})(\gamma; \ty),$ we denote by $\NF{s} $ its unique normal form. As a corollary of subject reduction, we get that $\NF{s} \in  \rTerms(\mathcal{R})(\gamma ; \ty).$

\subparagraph{Free Symmetric Closed Multicategories} Let $ \mathcal{L}  $ be a closed signature, we define a  multicategory $ \freescm{\mathcal{L}} $ by setting $ \Ob{\freescm{\mathcal{L}}} = \mathcal{L}_0 $ and $ \freescm{\mathcal{L}} ( \gamma;  \ty) = \rTerms (\mathcal{L})( \gamma; \ty) {/} { \sim } $ where $ {\sim} = { =_\mathsf{sc} }. $
Composition is given by substitut{ion, identities are given by variables. The operation is well-defined equivalence classes and satisfies associativity and identity axioms. We also have that if $s \sim s', $ then $\NF{s} = \NF{s'}. $ We denote by $\eta_{\mathcal{L}} : \mathcal{L} \to \overline{\freescm{\mathcal{L}} }$ the evident inclusion. One can prove that $ \freescm{\mathcal{R}} $ is symmetric, by repeating the argument given in the previous section. This multicategory is also \emph{closed}:
\begin{theorem}
We have a bijection $ \freescm{\mathcal{L}}( \gamma; \seqdots{\ty}{1}{k} \multimap \ty)                  \cong \freescm{\mathcal{L}}( \gamma, \ty_1, \dots,  \ty_k; \ty) $
natural in $\ty $ and multinatural in $\gamma , $ induced by the maps $[s] \mapsto [s\seqdots{x}{1}{k}] .$
\end{theorem}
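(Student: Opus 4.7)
The plan is to exhibit an explicit inverse, show both maps are well-defined on equivalence classes by subject reduction, and verify the two compositions are identities using the $\beta$ and $\eta$ rules. The candidate inverse is the map
\[
\lambda(-) : \freescm{\mathcal{L}}(\gamma, \ty_1, \dots, \ty_k; \ty) \to \freescm{\mathcal{L}}(\gamma; \seqdots{\ty}{1}{k} \multimap \ty), \qquad [t] \mapsto [\la{\seq{x_1^{\ty_1}, \dots, x_k^{\ty_k}}} t].
\]
Well-definedness of both maps follows because $=_{\mathsf{sc}}$ is a congruence and because typing is preserved under reduction: if $s =_{\mathsf{sc}} s'$ then $s\seqdots{x}{1}{k} =_{\mathsf{sc}} s'\seqdots{x}{1}{k}$ (applying the same $k$-linear application on both sides), and similarly for $\lambda$-abstraction under a shared binder.

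Next I would check that the compositions are identities up to $=_{\mathsf{sc}}$. For the round trip $[t] \mapsto [\la{\seq{\vec{x}}} t] \mapsto [(\la{\seq{\vec{x}}} t)\seq{\vec{x}}]$, one root-step of $\beta$ gives $(\la{\seq{x_1, \dots, x_k}} t)\seqdots{x}{1}{k} \to_\beta \subst{t}{x_1, \dots, x_k}{x_1, \dots, x_k} = t$, so $[(\la{\seq{\vec{x}}} t)\seq{\vec{x}}] = [t]$. For the other round trip, starting from $\gamma \vdash s : \seqdots{\ty}{1}{k} \multimap \ty$ we consider two cases. If $s \notin \mathsf{AT}$, then the $\eta$ root-step applies directly, giving $s \to_\eta \la{\vec{x}^{\tyl}} (s\,\vec{x})$, hence $[s] = [\la{\seq{\vec{x}}} (s\seq{\vec{x}})]$. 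If instead $s \in \mathsf{AT}$, then (since the symmetric closed syntax has no explicit substitutions, $\ctxl$ collapses to the hole) $s$ is of the form $\la{\seq{\vec{y}}} t$; then $\la{\seq{\vec{x}}}(s\seq{\vec{x}})$ contains the $\beta$-redex $(\la{\seq{\vec{y}}} t)\seq{\vec{x}} \to_\beta \subst{t}{\vec{y}}{\vec{x}}$, and one obtains $\la{\seq{\vec{x}}}\subst{t}{\vec{y}}{\vec{x}}$, which equals $s = \la{\seq{\vec{y}}} t$ up to $\alpha$-equivalence of bound variables. So in both cases $[s] = [\la{\seq{\vec{x}}}(s\seq{\vec{x}})]$, and the two maps are mutually inverse.

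Finally I would verify naturality. Multinaturality in $\gamma$ amounts to showing that precomposition by an arrow $[u_i] \in \freescm{\mathcal{L}}(\delta_i; \gamma_i)$, which is implemented by capture-avoiding substitution, commutes with $(-)\seq{\vec{x}}$: this reduces to the identity $(\subst{s}{\vec{y}}{\vec{u}})\seq{\vec{x}} = \subst{(s\seq{\vec{x}})}{\vec{y}}{\vec{u}}$ of substitutions, which holds since the fresh variables $\vec{x}$ do not occur in the $\vec{u}$. Naturality in $\ty$, \emph{i.e.}\ postcomposition along some $[v] \in \freescm{\mathcal{L}}(\ty;\ty')$, follows by an analogous substitution identity on the unique occurrence of the output variable.

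The main obstacle will be handling the $\mathsf{AT}$ side-condition in the $\eta$ root-step cleanly: the restricted $\eta$ does not unfold when $s$ is already a $\lambda$-abstraction, so one must argue separately in that case using $\beta$ together with $\alpha$-conversion of the bound variables, as above. Everything else is routine substitution bookkeeping, relying on subject reduction (the closed analogue of Proposition \ref{subredrep}) and the congruence property of $=_{\mathsf{sc}}$.
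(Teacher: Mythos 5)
Your proposal is correct and follows essentially the same route as the paper, which simply exhibits the inverse $[s]\mapsto[\la{\seqdots{x}{1}{k}}s]$ and appeals to basic substitution properties for naturality; your additional case analysis around the $\mathsf{AT}$ restriction (recovering the forbidden $\eta$-instances by a $\beta$-step plus $\alpha$-conversion) is exactly the detail the paper leaves implicit.
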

\begin{proof}
Naturality derives from basic properties of substitution. Inverses are given by the maps $ [s] \mapsto [\la{\seqdots{x}{1}{k}} s] . $
\end{proof}

\begin{definition}\label{embclosed}
Let $ \mathsf{E} $ be a symmetric closed multicategory and let $ i : \mathcal{L} \to \overline{\mathsf{E}} $ be a map of closed signatures. We define a family of maps $ \mathsf{RT}_{\gamma, \ty} : \rTerms(\mathcal{L}) (\gamma, \ty)  \to \mathsf{E}(i(\gamma), i(\ty)) $ by induction as follows: 

\scalebox{0.9}{\parbox{1.05\linewidth}{\[  \mathsf{RT}_{\ty, \ty}(x) = 1_{i(\ty)}  \qquad  \mathsf{RT}_{\gamma, \tyl \multimap \ty} (\la{\vec{x}} s) = \lambda ( \mathsf{RT}_{\gamma, \tyl, \ty} (s) ) \] \[ \mathsf{RT}_{ (\gamma_0, \dots, \gamma_k), \ty}(s \seqdots{t}{1}{k}) = ( ev \circ   \seq{\mathsf{RT}_{\gamma_0, \seqdots{\ty}{1}{k} \multimap \ty} (s), \mathsf{RT}_{\gamma_1, \ty_1}(t_1), \dots, \mathsf{RT}_{\gamma_k, \ty_1}(t_k)} )  \cdot \sigma .  \]}}
\end{definition}

\begin{theorem}[Free Construction]
Let $ \mathsf{S} $ be a a symmetric closed multicategory and $ i :  \mathcal{L} \to \overline{\mathsf{S}}  $ a map of representable signatures. There exists a unique symmetric closed functor $ i^{\ast} : \freescm{\mathcal{L}}  \to   \mathsf{S} $ such that $ \overline{i^\ast} \circ \eta_{\mathcal{L}} = i . $
\end{theorem}

\begin{theorem}[Coherence]
Let $ [s], [s'] \in \freescm{\mathcal{R}}(\gamma; \ty) . $ Then $ [s] = [s'] $ iff $\NF{[s]} \streq \NF{[s']} .$
\end{theorem}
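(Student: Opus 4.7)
The plan is to reduce the coherence statement to a clean consequence of the rewriting theory established earlier in the section, namely that $\to_{\mathsf{sc}}$ is confluent and strongly normalizing. Given those properties, the theorem becomes the standard Church--Rosser corollary that the conversion relation $=_{\mathsf{sc}}$ agrees with equality of unique normal forms. Since $\freescm{\mathcal{L}}(\gamma;\ty) = \rTerms(\mathcal{L})(\gamma;\ty) / {=_{\mathsf{sc}}}$, the quotient identifies precisely those representatives that share a normal form, and $\streq$ in this non-representable calculus collapses to mere $\alpha$-equivalence of terms, which is implicit throughout.

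I would argue each direction separately. For the implication from normal form equality to equivalence class equality, suppose $\NF{s} \streq \NF{s'}$. Since $s \twoheadrightarrow_{\mathsf{sc}} \NF{s}$ and $s' \twoheadrightarrow_{\mathsf{sc}} \NF{s'}$, I compose $s \twoheadrightarrow_{\mathsf{sc}} \NF{s} \streq \NF{s'} \twoheadleftarrow_{\mathsf{sc}} s'$, so by definition $s =_{\mathsf{sc}} s'$, hence $[s] = [s']$. This direction uses only subject reduction (so both endpoints remain well-typed in $\gamma \vdash - : \ty$) and the definition of the quotient.

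For the converse, assume $[s] = [s']$, i.e.\ $s =_{\mathsf{sc}} s'$. Decompose the conversion as a finite zigzag of $\to_{\mathsf{sc}}$-steps. Applying Church--Rosser --- which follows from confluence of $\to_{\mathsf{sc}}$ established in the symmetric closed analogue of Theorem \ref{aut:snc} --- I obtain a common reduct $u$ with $s \twoheadrightarrow_{\mathsf{sc}} u$ and $s' \twoheadrightarrow_{\mathsf{sc}} u$. By strong normalization, $u$ itself reduces to a normal form $v$; then $s \twoheadrightarrow_{\mathsf{sc}} v$ and $s' \twoheadrightarrow_{\mathsf{sc}} v$, so by uniqueness of normal forms (confluence plus termination) $\NF{s} \streq v \streq \NF{s'}$.

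The main work is not in this theorem itself but in the infrastructure it invokes: strong normalization of $\to_{\eta}$ via the measure $\eta(s) = \sum_{\delta \vdash p : \ty \in \mathsf{EST}(s)} \size{\ty}$, the strictly decreasing size under $\to_{\beta}$ from linearity, the commutation lemma analogous to Proposition \ref{rep:commuta} that combines the two into a single strongly normalizing system, and local confluence plus Newman's Lemma for confluence. Once those are in hand --- as the section states, by a completely symmetric argument to the representable case --- the coherence theorem is essentially a one-line consequence. The only mild subtlety to be careful about is that $\streq$ in the closed setting must be read as $\alpha$-equivalence (there are no explicit substitutions to permute), so that ``normal forms agree up to $\streq$'' coincides with syntactic equality of normal forms modulo bound variable renaming.
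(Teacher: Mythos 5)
Your proof is correct and follows exactly the route the paper intends: the theorem is an immediate consequence of confluence and strong normalization of $\to_{\mathsf{sc}}$ (which the paper establishes by mirroring the representable case), together with the observation that $\NF{-}$ is well-defined on $=_{\mathsf{sc}}$-classes, and your reading of $\streq$ as plain $\alpha$-equivalence in the closed calculus (which has no explicit substitutions) is the right one. The paper itself leaves this as an unproved corollary of that infrastructure, so there is nothing to add.
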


\section{A Resource Calculus for Autonomous Multicategories}\label{sec:aut} In this section we present our calculus for autonomous multicategories. These structures bring together representability, symmetry and closure. For this reason, the calculus we will present is a proper extension of the ones we introduced before. Again, we follow the same pattern of Sections \ref{sec:rep} and \ref{sec:sc}, first introducing the typing, then studying the operational semantics and finally characterizing the free constructions.

\subparagraph{Autonomous Terms} Let $\mathcal{A} $ be an autonomous signature. The \emph{autonomous resource terms} on $\mathcal{A} $ are defined by the following inductive grammar:
\[ \autTerms (\mathcal{A}) \ni  s,t ::= x \mid \la{\seq{x^{\ty_1}_1, \dots, x^{\ty_k}_k}} s \mid st \mid \seqdots{s}{1}{k}    \mid s[x^{\ty_1}_1, \dots, x^{\ty_k}_k := t] \mid f(s_1, \dots, s_k)  \]
for $ k \in \mathbb{N} $ and $ f \in \mathsf{arr}(\mathcal{A}), \ty_i \in \mathcal{A} . $ Variables under the scope of a $ \lambda $-abstraction and of a  substitution are bound. The typing is given in Figure \ref{fig:aut-calc}. The calculi introduced in the previous sections can be seen as subsystems of the autonomous one.

% \begin{proposition}\label{emb}We have an embedding (injective map preserving the inductive structure and substitution) $ \iota_{\spadesuit} : \mathsf{\Lambda}_{\spadesuit} (\mathcal{N})(\gamma; \ty )\hookrightarrow \autTerms (\mathsf{aut}(\mathcal{N}))(\gamma; \ty)  $ for $\spadesuit \in \{\mathsf{re}, \mathsf{reps}, \mathsf{sc} \} $.\end{proposition}

%We treat these embeddings as they were inclusions. The embeddings concerns also type derivations, meaning that we will identify representable or closed type derivations with appropriate autonomous ones. Given a term $\gamma \vdash t : \ty ,$ there is a unique type derivation for it.
 
%\begin{proposition}[Canonicity of Typing]\label{prop:symcano}If $ \pi \triangleright \gamma \vdash s : \ty $ and $ \pi' \triangleright \gamma \vdash s : \ty' $ then $\ty = \ty' $ and $  \pi = \pi' .$\end{proposition}\begin{proof}By induction on $s . $   In the cases where a merging of type contexts happens, such as the list case, we rely on the properties of shuffle permutations and on the fact that type contexts are \emph{linear}. Hence, the action of permutations on contexts is always fixedpoint-free.\end{proof}

Given a subterm $ p  $ of $ s$ we write $\mathsf{ty}(p)_s $ for the type of $ p $ in the type derivation of $ s.$ The mapping is functional as corollary of the former proposition.

\begin{figure}[!t]
	\centering 		 	\scalebox{0.9}{\parbox{1.05\linewidth}{\begin{gather*} \qquad \qquad 
			\begin{prooftree} \hypo{\ty \in \mathcal{A}_0}\infer1{ x : \ty \vdash x : \ty }  \end{prooftree} \qquad 
\begin{prooftree}  \hypo{ \gamma_1 \vdash s_1 : \ty_1 \dots \gamma_k \vdash s_k : \ty_k  } \hypo{ \sigma \in \mathsf{shu}(\gamma_1, \dots, \gamma_k) } \infer2{\act{(\gamma_1, \dots, \gamma_k)}{\sigma} \vdash {\seqdots{s}{1}{k}}: \tens{\ty}{1}{k} }  \end{prooftree}   \\[1em]
 \begin{prooftree} \hypo{ \gamma, x_1 : \ty_1, \dots, x_k : \ty_k \vdash s : b  }\infer1{\gamma \vdash \la{\seq{x_1^{\ty_1}, \dots, x_k^{\ty_k}}} s : \tens{\ty}{1}{k} \multimap b}  \end{prooftree}                \qquad  \begin{prooftree}  \hypo{ \gamma \vdash s : \tyl \multimap b } \hypo{ \delta \vdash t : \tyl } \hypo{ \sigma \in \mathsf{shu}(\gamma, \delta) }\infer3{ \act{ (\gamma, \delta)}{\sigma} \vdash {st} : b  }   \end{prooftree}  \\[1em] \qquad \qquad \begin{prooftree} \hypo{\gamma \vdash s : \tens{\ty}{1}{k}  } \hypo{ \delta_1, x_1 : \ty_1, \dots, x_k  : \ty_k, \delta_2 \vdash t : b  }\hypo{ \sigma \in \mathsf{shu}(\gamma, \delta_1, \delta_2) }\infer3{ \act{(\gamma, \delta_1, \delta_2)}{\sigma} \vdash {t [x_1^{\ty_1}, \dots, x_k^{\ty_k} := s] }: b  }  \end{prooftree}
			  \end{gather*}
	\caption{\small Autonomous type system on a signature $\mathcal{A}$. We omit the case of $ f (\vec{s}) . $}\hrulefill
	\label{fig:aut-calc}}}
\end{figure}

\subparagraph{Terms under Reduction} The reduction relation $\toaut$, together with its subreductions $\beta$ and $ \eta  $  are defined by putting together the reductions $ \torp $ (Figure \ref{fig:torp}) and $\to_{\mathsf{sc}} $ (Figure \ref{fig:toc}). The same happens with structural equivalence. The reduction satisfies subject reduction, strong normalization and confluence. The proofs build on the results of the previous sections. As decreasing measures, we use the size of a term for $\beta $-reduction and the sum of the two $\eta $ measures we defined in the previous sections for $\eta $-reduction.

\subparagraph{Free Autonomous Multicategories} Let $ \mathcal{A}  $ be an autonomous signature, we define a  multicategory $ \freeaut{\mathcal{A}} $ by setting $ \Ob{\freeaut{\mathcal{A}}} = \mathcal{A}_0 $ and $                
 \freeaut{\mathcal{A}} ( \gamma;  \ty) = \autTerms (\mathcal{A})( \gamma; \ty) {/} { \sim} $ where $ \sim $ is the equivalence $\streq \cup =_\mathsf{aut} . $ Composition is given by substitution, identities are given by variables. The operation is well-defined on equivalence classes and satisfies associativity and identity axioms. We also have that if $ s \sim s' $ then $ \NF{s} \streq \NF{s'} . $ We denote by $ \eta_{\mathcal{A}} : \mathcal{A} \to \overline{\freeaut{\mathcal{A}}}  $ the evident inclusion. One can prove that this multicategory is symmetric, representable and closed by importing the proofs given in the previous sections.

\begin{definition}\label{embaut}
Let $ \mathsf{S} $ be an autonomous multicategory and let $ i : \mathcal{A} \to \overline{\mathsf{S}} $ be a map of autonomous signatures. We define a family of maps $ \mathsf{RT}_{\gamma, \ty} : \autTerms(\mathcal{A}) (\gamma, \ty)  \to \mathsf{E}(i(\gamma), i(\ty)) $ by induction, extending Definitions \ref{unireps} and \ref{embclosed} in the natural way.
\end{definition}

\begin{theorem}[Free Construction]
Let $ \mathsf{S} $ be a an autonomous multicategory and $ i :  \mathcal{A} \to \overline{\mathsf{S}}  $ a map of autonomous signatures. There exists a unique autonomous functor $ i^{\ast} : \freeaut{\mathcal{A}}  \to   \mathsf{S} $ such that $ \overline{i^\ast} \circ \eta_{\mathcal{A}} = i . $
\end{theorem}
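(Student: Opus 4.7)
The plan is to define $i^\ast$ by extending the map $\mathsf{RT}(i)$ of Definition \ref{embaut} to equivalence classes, setting $i^\ast([s]) \eqdef \mathsf{RT}(i)_{\gamma, \ty}(s)$ for any representative $s$ of $[s] \in \freeaut{\mathcal{A}}(\gamma; \ty)$. The argument parallels the free construction theorems of Sections \ref{sec:rep}, \ref{sec:sr}, and \ref{sec:sc}, and essentially amounts to combining them. First, I would verify well-definedness: $\mathsf{RT}(i)$ must be invariant under both $\streq$ and $\toaut$. For structural equivalence, each commutation rule for explicit substitutions is validated by the associativity/naturality axioms satisfied by $\mathsf{let}$ in the representable part of $\mathsf{S}$. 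For $\toaut$, the root-step cases split into the representable fragment (inherited from Section \ref{sec:rep}, where $\beta$ and $\eta$ correspond to the triangle identities characterising $\mathsf{let}$ as inverse of precomposition with $\mathsf{re}$) and the closed fragment (inherited from Section \ref{sec:sc}, where $\beta$ and $\eta$ correspond to those characterising $\lambda$ as inverse of postcomposition with $ev$). Contextual closure of both cases follows from multinaturality of composition.

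Next, I would check that $i^\ast$ is a functor of multicategories and preserves the autonomous structure on the nose. Functoriality reduces to the substitution lemma $\mathsf{RT}(i)(\subst{s}{\vec{x}}{\vec{t}}) = \mathsf{RT}(i)(s) \circ \seq{\mathsf{RT}(i)(\vec{t})}$, proved by a straightforward induction on $s$, using the naturality clauses of $\otimes$, $\multimap$, and the symmetry action. Identities are the images of variables and so preserved trivially. Preservation of tensors, internal homs, and symmetries is immediate from the inductive clauses of Definition \ref{embaut}, which are specified precisely in terms of the universal maps $\mathsf{re}$, $ev$, $\lambda$, and $\mathsf{let}$ of $\mathsf{S}$. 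The equation $\overline{i^\ast} \circ \eta_{\mathcal{A}} = i$ holds by construction on generators. Uniqueness is then handled by a third induction: any autonomous functor $j$ with $\overline{j} \circ \eta_{\mathcal{A}} = i$ must agree with $i^\ast$ on generators and on variables, and must commute strictly with the formation of tensors, abstractions, applications, and explicit substitutions; hence $j([s]) = i^\ast([s])$ for every term $s$.

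The main obstacle is the well-definedness step for $\toaut$. Although the representable and closed root-steps have been handled separately in earlier sections, in the autonomous calculus they coexist on the same terms, so one must verify that a reduction step fired inside a mixed context (with both explicit substitutions and $\lambda$-abstractions crossed above it) is still sent to an equality in $\mathsf{S}$. The restricted $\eta$-rules (excluding the anti-reducible $\mathsf{AT}$ and $\mathsf{LT}$ subterms) complicate the bookkeeping, but because $\mathsf{RT}(i)$ acts homomorphically in both fragments and the representable and closed triangle identities do not interfere at the root, contextual closure follows uniformly. With this verified, the rest of the proof reduces to routine inductions inherited from the preceding free construction theorems.
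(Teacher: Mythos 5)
Your proposal is correct and follows essentially the same route as the paper: the functor is defined from the interpretation map of Definition \ref{embaut} on representatives, well-definedness and functoriality are obtained from the substitution lemma together with invariance of the interpretation under $\toaut$ and $\streq$ (exactly as in Propositions \ref{functrep} and \ref{functsc}, extended to the autonomous syntax), structure preservation holds by construction, and uniqueness is proved pointwise using that any competing functor must preserve the autonomous structure on the nose. No substantive difference from the paper's argument.
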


\begin{theorem}[Coherence]
Let $ [s], [s'] \in \freeaut{\mathcal{R}}(\gamma; \ty) . $ Then $ [s] = [s'] $ iff $\NF{[s]} \streq \NF{[s']} .$
\end{theorem}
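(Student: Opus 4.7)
The plan is to import, for the autonomous setting, the same abstract rewriting argument used implicitly in the coherence theorems for the representable and symmetric closed cases. I need three ingredients, all already established (or easily lifted) in the preceding sections: (i) confluence and strong normalization of $\toaut$, which is obtained by combining the measures for $\torp$ and $\to_{\mathsf{sc}}$ as described just before the free construction paragraph; (ii) subject reduction for both $\toaut$ and $\streq$ (obtained by the proof schema of Proposition \ref{subredrep} extended with the $\lambda$/application cases); (iii) the fact that $\streq$ is a strong bisimulation for $\toaut$, i.e., whenever $s \streq s'$ and $s \toaut t$ there exists $t'$ with $s' \toaut t'$ and $t \streq t'$. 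A consequence of (iii) and strong normalization is that $\streq$ preserves normal forms: if $t$ is a $\toaut$-normal form and $t \streq t'$, then $t'$ is normal, so by uniqueness of normal forms $\NF{t} \streq \NF{t'}$ whenever $t \streq t'$.

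The $(\Leftarrow)$ direction is immediate: if $\NF{s} \streq \NF{s'}$, then by definition of $\sim \;=\; \streq \cup =_{\mathsf{aut}}$ we have $s =_{\mathsf{aut}} \NF{s} \streq \NF{s'} =_{\mathsf{aut}} s'$, so $[s] = [s']$.

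For the $(\Rightarrow)$ direction, suppose $[s] = [s']$, so there is a finite zig-zag $s = s_0 \mathrel{R_1} s_1 \mathrel{R_2} \cdots \mathrel{R_n} s_n = s'$ where each $R_i$ is one of $\toaut$, its converse, or $\streq$. I will show $\NF{s_i} \streq \NF{s_{i+1}}$ for every link; concatenating these structural equivalences yields $\NF{s} \streq \NF{s'}$. If $R_i \in \{\toaut, \leftarrow_{\mathsf{aut}}\}$, confluence of $\toaut$ gives $\NF{s_i} = \NF{s_{i+1}}$ outright. If $R_i = \streq$, apply the bisimulation clause iteratively along a normalizing sequence $s_i \toaut^{\ast} \NF{s_i}$ to obtain $t'$ with $s_{i+1} \toaut^{\ast} t'$ and $\NF{s_i} \streq t'$; since $\NF{s_i}$ is normal, $t'$ is normal too, hence $t' = \NF{s_{i+1}}$ by uniqueness.

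The bulk of the work, and the main potential obstacle, is verifying (iii) in the combined autonomous calculus: the bisimulation has to be checked against every possible pairing of a $\streq$-step (which now may traverse $\lambda$-abstractions and applications as well as tensors and explicit substitutions) with each $\beta$ and $\eta$ root-step. Each new critical situation—\emph{e.g.} a structural commutation of an explicit substitution past an abstraction $\la{\vec{y}}(s[\vec{x}:=t])$ interfering with a $\beta_{\mathsf{sc}}$ or $\eta_{\mathsf{sc}}$ redex at the root—has to be closed by exhibiting a matching $\toaut$-step on the other side, exploiting the action-at-distance form of the $\beta$-rules and the fact that the restricted $\eta$-contexts are closed under $\streq$. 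Since the representable and closed fragments act on disjoint syntactic constructors, these critical pairs split into cases that essentially reduce to the ones treated in Sections \ref{sec:rep} and \ref{sec:sc}, so no genuinely new rewriting argument is required beyond a careful enumeration.
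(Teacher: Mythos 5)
Your proposal is correct and follows essentially the same route as the paper: the theorem is the direct consequence of confluence and strong normalization of $\toaut$ plus the fact that $\streq$ is a strong bisimulation for it (whence $s \sim s'$ implies $\NF{s} \streq \NF{s'}$), all of which the paper establishes for the autonomous calculus, with the bisimulation checked in the appendix by a direct case analysis that does cover exactly the mixed commutations you flag (explicit substitutions moving past abstractions and applications against root $\beta$/$\eta$ steps), rather than by a literal reduction to the two fragments. Apart from that slight over-optimism in saying the critical pairs ``reduce'' to Sections \ref{sec:rep} and \ref{sec:sc}, your zig-zag argument and the observation that $\streq$ preserves normal forms match the paper's intended argument.
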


\section{Conclusion}\label{sec:conc} We established a formal correspondence between resource calculi and appropriate linear multicategories, providing coherence theorems by means of normalization. As future work, we consider two possible perspectives. It is tempting to parameterize our construction over the choice of allowed \emph{structural rules} on typing contexts. For instance, while the choice of permutations (\textit{i.e.}, symmetries) gives \emph{linear} structures, the choice of arbitrary functions between indexes would give \emph{cartesian} structures. In this way, we would achieve a general method to produce type theories for appropriate \emph{algebraic theories}, in the sense of \cite{hyland:alg}. Another perspective is the passage to the \emph{second dimension}, following the path of \cite{fiore:tt}. In this way, the rewriting of terms would become visible in the multicategorical structure itself. Coherence by normalization could then be upgraded to a method of \emph{coherence by standardization}, exploiting a rewriting relation on reduction paths.

\printbibliography

\appendix

\section{Appendix}

We detail some technical proofs. We use $ ::$ to denote list concatenation. If we have $ \gamma_1 \vdash t_1 : \ty_1, \dots, \gamma_k \vdash t_k ; \ty_k  $ we will abuse the language and often abbreviate with $ \vec{\gamma} \vdash \vec{t} : \tyl .  $  We denote by $ \abs{\gamma} $ the set of variables appearing in $\gamma. $ We denote by $\gamma(x) $ the index of $x $ in $ \gamma $ given by the linear order.

\section{Multicategories and Signatures}

\subparagraph{Multicategories}
\begin{definition}
A \emph{morphism} of multigraphs $ \mathcal{F} : \mathcal{G} \to \mathcal{H}  $ is the collection of the following data:
\begin{itemize}
\item A function $ \mathcal{F}_0 : \mathcal{G}_0 \to \mathcal{H}_0 .$
\item For every $ a_1, \dots, a_n, b \in \mathcal{G}_0,  $ a family of maps
 \[        \mathcal{F}_{\ty_1, \dots, \ty_n, b} : \mathcal{G}(\ty_1, \dots, \ty_n, b) \to \mathcal{H}(\mathcal{F}_0 (\ty_1) , \dots, \mathcal{F}_0(\ty_n),  \mathcal{F}_0 (b))                \]
\end{itemize}
\end{definition}

We shall just wrote $ \mathcal{F}(A), \mathcal{F}(s) $ for the action of a morphisms over nodes and edges. We shall write $ s : \ty_1, \dots, \ty_n \to b , \source{s} = \ty_1, \dots, \ty_n, \targ{s} = b$ meaning that $ s \in G(\ty_1, \dots, \ty_n, b) . $ 

\begin{definition} A \emph{functor} of multicategories $ \mathsf{F} : \mcati \to \mcati ' $ consists of a morphisms of multigraphs that preserves composition and identities.
\begin{enumerate}
\item A \emph{representable functor} between representable multicategories is a functor $ \mathsf{F} : \mcati \to \mcati ' $ s.t. $  \mathsf{F}(\ty_1 \otimes_{\mcati} \dots \otimes_{\mcati} \ty_k) = (\mathsf{F}(\ty_1) \otimes_{\mcati '} \dots \otimes_{\mcati '} \mathsf{F}(\ty_k)) $ and $ \mathsf{F}(\mathsf{re}^{\mcati}_{\ty_1, \dots, \ty_k}) =  \mathsf{re}^{\mcati '}_{\mathsf{F} (\ty_1), \dots, \mathsf{F}(\ty_k) } .   $
\item A \emph{symmetric} functor between symmetric multicategories is a functor $   \mathsf{F} : \mcati \to \mcati ' $ s.t. $ \mathsf{F}(s \cdot \sigma) = \mathsf{F}(s) \cdot \sigma . $
\item A \emph{closed} functor between closed multicategories is a functor  $ \mathsf{F} : \mcati \to \mcati ' $ s.t. $ \mathsf{F}((\ty_1 \otimes_{\mcati} \dots \otimes_{\mcati} \ty_k) \multimap_{\mcati} \ty) = (\mathsf{F}(\ty_1) \otimes_{\mcati '} \dots \otimes_{\mcati '} \mathsf{F}(\ty_k)) \multimap_{\mcati '} \mathsf{F}(\ty) $ and $ \mathsf{F}(ev_{\ty_1, \dots, \ty_k, \ty}) = ev_{\mathsf{F}(\ty_1), \dots, \mathsf{F}(\ty_k), \mathsf{F}(\ty) }.$
\end{enumerate} \end{definition}

\subparagraph{Monoidal vs Representable}

We discuss an adjunction between the 2-category $ \MON $ of monoidal categories, strong monoidal functors and monoidal natural transformations and the 2-category $ \REP $ of representable multicategories, representable functors and representable multinatural transformations. We shall also show how this adjunction lift to the symmetric case.

Given a monoidal category $ (\mocati, \otimes, I)  $ we shall define a representable multicategory $ \mathsf{rep}(\mocati) \in \REP  $  by exploiting its monoidal structure. Given $ \ty_1. \dots, \ty_k \in \mocati , $ let $ \tens{\ty}{1}{k} := ( \dots  (\ty_1 \otimes \ty_2 ) \otimes \dots ) \otimes \ty_k    $ for $ k > 0 $ and $ \tens{\ty}{1}{k} := I $ for $ k = 0 . $  
Then we fix $  \mathsf{rep}(\mocati)(\ty_1, \dots, \ty_n, \ty) = \mocati (\tens{\ty}{1}{n}, \ty) .    $ composition is defined as follows: 
given $ f_1 \in \mathsf{rep}(\mocati)( \gamma_1, b_1   ), \dots,   f_1 \in \mathsf{rep}(\mocati)( \gamma_n, b_n   ) $ and $ f  \in \mathsf{rep}(\mocati)( b_1, \dots, b_n,  \ty   ), $ we set 
\[        f \circ \seqdots{f}{1}{n}  := f \circ (( f_1 \otimes \cdots \otimes f_n  )  \circ \alpha  )                 \]
where $ \alpha $ is an appropriate choice of isomorphism built out of the associators in $ \mocati . $  The former construction can be extended to the symmetric case in the natural way, by defining symmetries of   $\mathsf{symon}(\mocati)$ \textit{via} the right action of permutations on representable maps, \textit{i.e.} $\sigma_{\ty, b} :b \otimes \ty \to \ty \otimes b := \act{\mathsf{let}(\mathsf{re}_{\ty,b} )}{\sigma}.$

Given a multicategory $ \mcati ,$ we shall define a monoidal category $\mathsf{mon}(\mcati) $ as follows. We set $ \Ob{\mathsf{mon}(\mcati)} = \Ob{\mcati} $ and $ \mathsf{mon}(\mcati)(\ty, b) = \mcati (\ty,b) . $ Composition and identities are then inherited. The tensor product is given by the binary tensors of $ \mcati ,$ the unit is the 0-ary tensor $() . $ The tensor on morphisms is defined as $ f \otimes g = \mathsf{re} \circ \seq{f,g} . $ The associators $  \alpha_{\ty,b, c} : (\ty \otimes b) \otimes c \cong \ty \otimes (b \otimes c)  $ are given by the maps $ \mathsf{let}_{(\ty \otimes b), c} (\mathsf{let}_{\ty,b}(\mathsf{re}_{\ty, (b \otimes c)})) .$

\subparagraph{Signatures} Let $\seq{\atm, \mathcal{N} }$ be either a representable or a closed signature. We denote by $ \mathsf{aut}(\mathcal{N}) $ the autonomous signature generated from $\mathcal{N}  ,$ whose node are freely  generated on $\atm $ as follows: 
\[  \mathsf{aut}(\mathcal{N})_0 \ni \ty ::= o \in \atm \mid \tens{\ty}{1}{k} \mid \tens{\ty}{1}{k} \multimap \ty \quad ( k \in \mathbb{N}) .\] 
We remark that we have a structure-preserving injective function  $ \iota : \mathcal{N}_0 \hookrightarrow \mathsf{aut}(\mathcal{N})_0 .$ Then we set \[   \mathsf{aut}(\mathcal{N})(\gamma, \ty) = \begin{cases} \mathcal{N}(\gamma', \ty' ) & \text{ if } \gamma = \iota(\gamma'), \quad \ty = \iota (\ty'); \\
\emptyset & \text{ otherwise.}   \end{cases}            \]
 We get and evident structure-preserving embedding $  \mathcal{N} \hookrightarrow \mathsf{aut}(\mathcal{N}) .$

\section{ (Symmetric) Representable Resource Terms}

\subsection{Representable Case}

\subparagraph{Substitution and Reduction}

\begin{definition}
Let $ s \in \repTerms $ and $   x_1,\ dots, x_k \in \mathcal{V} . $ We define the list of \emph{occurrences} of $x_1, \dots, x_ks $ in $s $ by induction as follows: 
\[   \mathsf{occ}_{\vec{x}} (x) = \begin{cases}    x & \text{ if } x\in \vec{x} ; \\ \seq{} & \text{ otherwise.}       \end{cases}      \qquad  \mathsf{occ}_{\vec{x}}(\seqdots{s}{1}{k}) = \bigoplus \mathsf{occ}_{\vec{x}} (s_i )    \] \[\mathsf{occ}_{\vec{x}}(s[\vec{y} := u]) = \mathsf{occ}_{\vec{x}} (s) :: \mathsf{occ}_{\vec{x}}(u) .  \]
\end{definition}

From now on, whenever we deal with linear substitutions we shall always assume that they are well-defined, hence the substituted variables are contained in the free variables of the considered term.

\begin{definition}[Representable Linear Substitution]
Let $  s,  t_1, \dots, t_k \in \repTerms $ and $ \  x_1, \dots, x_k  \subseteq \fv{s}. $ We define the \emph{($k $-ary linear) substitution} of $ x_1, \dots, x_k$ by $ t_1, \dots, t_k $ in $s $ by induction as follows:
\[     \subst{x}{x}{t} = t \qquad \subst{x}{}{} = x \quad (\text{ if } k = 0) \] \[     \subst{\seqdots{s}{1}{n}}{\vec{x}}{\vec{t}} =  \seq{\subst{s_1}{\vec{x}_1}{\vec{t}_1}, \dots, \subst{s_n}{\vec{x}_n}{\vec{t}_n}} \quad (\vec{x} = \bigoplus \vec{x}_i, \vec{t} = \bigoplus \vec{t}_i \text{ and } \vec{x}_i = \mathsf{occ}_{\vec{x}}(s_i) ) \]
\[   \subst{s_1[\vec{y} := s_2 ]}{\vec{x}}{\vec{t}} = \subst{s_1}{\vec{x}_1}{\vec{u}_1}[\vec{y} := \subst{s_2}{\vec{x}_2}{\vec{t}_2}]  \quad (\vec{x} = \vec{x}_1 :: \vec{x}_2, \vec{t} = \vec{t}_1 :: \vec{t}_2 \text{ and } \vec{x}_i = \mathsf{occ}_{\vec{x}}(s_i) .  \]
\end{definition}

We characterize $\beta $-normal forms as follows:
\[          \NF{\repTerms}_\beta   \ni s ::=   \seqdots{s}{1}{k}[\vec{x}_1 := v_1] \dots  [\vec{x}_n := v_n] \mid v \qquad v := x [x := v ]        \]

\begin{lemma}
A term $ s$ is a $ \beta$-normal form iff $s \in  \NF{\repTerms}_\beta .$ 
\end{lemma}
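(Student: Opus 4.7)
This is a characterization of $\beta$-normal forms, handled by the standard two-inductions pattern on the biconditional. The only conceptual observation needed is that a $\beta$ root-redex $s[\vec{x}:=u]$ requires $u \in \mathsf{LT}$; so a term is $\beta$-normal precisely when no substitution has an RHS of the form $\ctxl[\seqdots{t}{1}{k}]$ and every subterm is itself $\beta$-normal.

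\emph{Direction $(\Leftarrow)$, grammar implies normal.} I proceed by structural induction on the grammar. The invariant to maintain is that in any grammar-produced term, every explicit substitution's RHS is a $v$, and no $v$ lies in $\mathsf{LT}$: a $v$ has a variable as its innermost head and no list appears at its top, so it cannot be written as $\ctxl[\seqdots{t}{1}{k}]$. Therefore no root $\beta$-redex appears, and subterm redexes are excluded by the inductive hypothesis applied to the component $s_i$'s of the outer list and to the $v_i$'s.

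\emph{Direction $(\Rightarrow)$, normal implies grammar.} By induction on $s$. A variable $s=x$ is a $v$; a list $s=\seqdots{s}{1}{k}$ with $\beta$-normal components goes into the first clause with $n=0$ after applying the IH to each $s_i$; a signature application $f(\vec{s})$ would require an added grammar clause (absent from the displayed grammar but straightforward, since $f(\vec{s})$ is never a root $\beta$-redex). The critical case is $s=t[\vec{x}:=u]$: both $t$ and $u$ are $\beta$-normal, hence in $\NF{\repTerms}_\beta$ by the IH, and absence of a root redex forces $u\notin\mathsf{LT}$. I then invoke the observation that within $\NF{\repTerms}_\beta$, membership in $\mathsf{LT}$ coincides with being produced by the list-outer clause; this forces $u$ to be a $v$. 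Consequently $s$ lies in the grammar: if $t$ is a list-outer term, append the new substitution to its tail; if $t$ is a $v$, then so is $s$.

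\emph{Main obstacle.} The linchpin is the claim $\NF{\repTerms}_\beta \cap \mathsf{LT} = \{\text{terms produced by the list-outer clause}\}$, used in the substitution case of the forward direction. Verifying that no $v$ fits the shape $\ctxl[\seqdots{t}{1}{k}]$ is an immediate inspection of the grammar of $v$, whose head is always a variable; but this is the conceptual crux on which the whole characterization hinges. Once it is in hand, both directions collapse to routine structural inductions.
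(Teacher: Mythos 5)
Your proposal follows essentially the same route as the paper's proof: a structural induction whose only substantive point is that $\beta$-normality forbids the right-hand side of any explicit substitution from lying in $\mathsf{LT}$, hence forces it to be a $v$; the paper writes out only this forward direction (and, working over a discrete signature, ignores $f(\vec{s})$), while you also spell out the routine converse. The one caveat is your closing step ``if $t$ is a $v$ then so is $s$'': left-nested chains such as $(x[x:=y])[y:=z]$ are $\beta$-normal but not literally generated by $v ::= x \mid x[x:=v]$, so this reads the grammar up to the paper's structural equivalence $\streq$ --- an imprecision the paper's own proof of this lemma shares and which is harmless for its later use, where everything is stated up to $\streq$.
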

\begin{proof}
By induction on $ s.$  If $s = \seqdots{s}{1}{k} $ we have that $ s_i$ are $\beta $-normal, we conclude by applying the IH. If $ s = p[\vec{x}:=q]$ then $ p$ and $ q$ are  $ \beta$-normal. By IH $ p, q \in \NF{\repTerms}_\beta .$ We reason by cases on $ q .$ The case $ p = \seqdots{s}{1}{k}[\vec{x}_1 := v_1] \dots  [\vec{x}_n := v_n]$ is not possible, otherwise we would have a $ \beta$-redex.   
\end{proof}

We consider another inductive set of normal forms:
\[          \NF{\repTerms}'_\beta   \ni s ::=    p[\vec{x}_1 := x_1] \dots  [\vec{x}_n := x_n]   \mid p ::= \seqdots{p}{1}{k} \mid x  \]

\begin{proposition}
If $ s   \in  \NF{\repTerms}_\beta$ there exists $ t \in  \NF{\repTerms}'_\beta $ s.t. $ s \streq t .$
\end{proposition}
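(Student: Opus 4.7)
The plan is to prove this by induction on the structure of $s \in \NF{\repTerms}_\beta$, using the structural equivalence $\streq$ to systematically relocate every explicit substitution to the outermost level of the term. The defining rule $\ctx[s[\vec{x} := t]] \streq \ctx[s][\vec{x} := t]$ (with $\vec{x} \notin \fv{\ctx}$, a side condition always met after $\alpha$-renaming) lets us hoist a substitution out of a list component (taking $\ctx = \seq{\ldots, \ctxhole, \ldots}$), out of the right-hand side of an enclosing substitution (taking $\ctx = s[\vec{y} := \ctxhole]$), and past another substitution (taking $\ctx = \ctxhole[\vec{y} := r]$).

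I would first establish an auxiliary claim: every $v$ is $\streq$-equivalent to some $x[\vec{y}_1 := y_1] \dots [\vec{y}_n := y_n]$ in which both the head and all right-hand sides are variables. The base case $v = x$ is immediate with $n = 0$. For the inductive step $v = v_1[\vec{x} := v_2]$, observe that $v_2$ is itself a $v$ (hence variable-headed and never a list, which is exactly what $\beta$-normality ensures at this position). Applying the induction hypothesis yields $v_1 \streq x[\vec{y}_1 := y_1] \dots [\vec{y}_p := y_p]$ and $v_2 \streq z[\vec{w}_1 := w_1] \dots [\vec{w}_q := w_q]$. Repeated application of structural equivalence, first to pull every substitution of the $v_2$-expansion out through the outer $[\vec{x} := \cdot]$ and then to expose the remaining layers at the top, produces $x[\vec{y}_1 := y_1] \dots [\vec{y}_p := y_p][\vec{x} := z][\vec{w}_1 := w_1] \dots [\vec{w}_q := w_q]$, which is of the required shape.

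The main induction then has two cases. If $s = v$, the auxiliary claim directly exhibits a $t \in \NF{\repTerms}'_\beta$ with head $p = x$ a variable. If $s = \seqdots{s}{1}{k}[\vec{x}_1 := v_1] \dots [\vec{x}_n := v_n]$, apply the induction hypothesis to each $s_i$ to obtain $s_i \streq t_i = p_i[\vec{z}_{i,1} := z_{i,1}] \dots [\vec{z}_{i,m_i} := z_{i,m_i}] \in \NF{\repTerms}'_\beta$, and apply the auxiliary claim to each $v_j$ to obtain $v_j \streq y_j[\vec{w}_{j,1} := w_{j,1}] \dots [\vec{w}_{j,r_j} := w_{j,r_j}]$. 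Substituting these back into $s$ and then hoisting every internal substitution layer-by-layer out of the list constructor and out of the remaining RHSs produces a term of the form $\seq{p_1, \dots, p_k}$ followed by a single linear chain of substitutions, all of whose right-hand sides are variables. This term lies in $\NF{\repTerms}'_\beta$ with head $\seq{p_1, \dots, p_k}$, itself a $p$.

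The principal obstacle is purely one of bookkeeping: controlling the freshness of bound variables during the repeated commutations, and arranging the final assembly so that every inherited substitution can be chained linearly without capture. Both are discharged transparently by the Barendregt-style variable convention declared at the start of the paper together with the linearity of typed terms, which forces the variable universes of distinct subterms to be disjoint and thereby validates every freshness side-condition that the $\streq$-rule requires.
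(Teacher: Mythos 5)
Your proof is correct and follows essentially the same route as the paper's: structural induction on $s$, using the structural-equivalence rule to hoist every explicit substitution out of list positions and out of right-hand sides of substitutions until a flat chain with variable right-hand sides remains. If anything, you are more thorough than the paper's own argument, which in the list case leaves the trailing substitutions $[\vec{x}_1 := v_1]\dots[\vec{x}_n := v_n]$ untouched (so their right-hand sides need not yet be variables), whereas your auxiliary claim on the $v$-chains flattens these as well.
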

\begin{proof}
By induction on $ s .$ If $  \seqdots{s}{1}{k}[\vec{x}_1 := v_1] \dots [\vec{x}_n := v_n] $ then $s_i $ are $\beta$-normal form. Then we apply the IH and get $ s'_i \in \NF{\repTerms (\mathcal{R})}' $  s.t. $s_i \streq s'_i  $. By definition $ s'_i = p_i [\vec{x}_{i, 1} := x_{i,1}] \dots [\vec{x}_{i, 1} := x_{i,n_i}].$ We then set $ s' = \seqdots{p}{1}{k} [\vec{x}_{1, 1} := x_{1,n_1}]  \dots [\vec{x}_{k, 1} := x_{k,n_k}] [\vec{x}_1 := v_1] \dots [\vec{x}_n := v_n] .$ If $ s = x[x :=v] $ by IH we have $ v \streq t$ with $ t' = p [\vec{x}_1 := x_1 ] \dots [\vec{x}_n := x_n ] $ with $ p = y$, otherwise we would have a  $\beta $-redex. Then we set $ t  = x[x:= y ] [\vec{x}_1 := x_1 ] \dots [\vec{x}_n := x_n ] . $ 
\end{proof}

\begin{proposition}
A term $ s \in \repTerms (\mathcal{R}) $ is a normal form for $ \torp$ iff there exists $ s' \in    \NF{\repTerms (\mathcal{R})}  $ s.t. $s \streq s' . $
\end{proposition}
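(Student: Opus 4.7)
My plan is to prove the two implications separately, with the bulk of the work in the $(\Rightarrow)$ direction.

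For $(\Leftarrow)$, I first establish by a direct structural induction on the grammar of $\NF{\repTerms(\mathcal{R})}$ that every such term is itself $\torp$-normal. No $\beta$-redex can occur because each outer substitution's right-hand side is a single variable, hence never of the form $\ctxl[\seqdots{t}{1}{k}]$; no $\eta$-redex can occur because every list subterm $\seqdots{v}{1}{k}$ lies in $\mathsf{LT}$, and the side condition that the internal context of $v$ is atomic ensures that no variable occurrence reachable through an $\eta$-context has tensor type. Since $\streq$ is a strong bisimulation for $\torp$ (established earlier in Section \ref{sec:rep}), any $s \streq s'$ inherits $\torp$-normality from $s'$.

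For $(\Rightarrow)$, I induct on $s$. The variable case is immediate: an $\eta$-normal $x$ must have atomic type, since otherwise $x \notin \mathsf{LT}$ and the hole at the root is itself an $\eta$-context. For $s = \seqdots{s}{1}{k}$, each $s_i$ is a $\torp$-normal subterm, so the \ih yields $s_i \streq v_i[\vec{x}_{i,1} := x_{i,1}] \dots [\vec{x}_{i,n_i} := x_{i,n_i}]$; I then use the congruence together with the equivalence $\ctx[u[\vec{y}:=v]] \streq \ctx[u][\vec{y}:=v]$ to push every such outer substitution past the top-level list constructor, producing a term in $\NF{\repTerms(\mathcal{R})}$. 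The crucial case is $s = p[\vec{x} := q]$. Here $p$ is itself $\torp$-normal (so the \ih applies directly), whereas $q$ is only $\beta$-normal and $\eta$-normal at positions strictly inside it, since $\hole{\cdot}$ in $s[\vec{x}:=\hole{\cdot}]$ is excluded from the $\ctxe$ grammar. Because $s$ is $\beta$-normal we also have $q \notin \mathsf{LT}$, so combining with the appendix characterization of $\beta$-normal forms forces $q$ to lie in the variable-chain fragment; a short secondary induction on the nesting depth of substitutions in $q$ shows that $q \streq y_0[\vec{y}_1 := y_1] \dots [\vec{y}_n := y_n]$ with each $y_i$ a single variable. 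Applying the \ih to $p$ and then pushing the substitutions of $q$ outside the $[\vec{x} := \cdot]$ bracket via $\streq$ delivers a term of the desired shape.

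The main obstacle is precisely the substitution case: one has to handle the asymmetry that $p$ is fully $\torp$-normal while $q$ may be an $\eta$-redex at its root, and exploit the interplay between $\beta$-normality of $s$ (which excludes $q \in \mathsf{LT}$) and $\eta$-normality of $s$ inside $q$ (which excludes internal $\eta$-redexes of $q$) to constrain $q$'s shape sharply enough to reduce to the target grammar. A secondary subtlety is verifying the atomic-context side condition in $\NF{\repTerms(\mathcal{R})}$ when assembling the final term: this follows automatically because any free variable of the inner $v$ with non-atomic type would give rise to an $\eta$-redex accessed by an $\eta$-context, built either from list-wrappers inside $v$ or from a non-empty chain of outer substitutions $\hole{\cdot}[\vec{x}_1 := x_1]\dots[\vec{x}_n := x_n]$.
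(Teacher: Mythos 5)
Your proof is correct and follows essentially the same route as the paper: structural induction on $s$, using the characterization of $\beta$-normal forms together with $q \notin \mathsf{LT}$ (forced by $\beta$-normality of $s$) to constrain the right-hand side of a substitution to a hereditarily variable-headed chain, and then flattening \emph{via} the structural equivalence $\ctx[u[\vec{y}:=v]] \streq \ctx[u][\vec{y}:=v]$. You are in fact somewhat more careful than the paper's own sketch (you make the $(\Leftarrow)$ direction explicit through the strong-bisimulation property of $\streq$, and you handle the blocked $\eta$-position $s[\vec{x}:=\hole{\cdot}]$ cleanly instead of the paper's case split on whether $q$ has $\eta$-redexes), so no gaps to report.
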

\begin{proof}
Corollary of the former proposition.
\end{proof}

\subparagraph{Free Construction}

\begin{definition}\label{rep:unirep}
 Let $ \mathcal{R}$ be a representable signature and $ \mathsf{S} $ be a representable multicategory. Let $ i : \mathcal{R} \to \overline{\mathsf{S}} $ be a map of representable signatures. We define a family of maps $ \mathsf{RT}(i)_{\gamma, \ty} : \repTerms(\mathcal{R}) (\gamma ; \ty)  \to \mathsf{S}(i(\gamma) ; i(\ty)) $ by induction as follows:
 
\scalebox{0.9}{\parbox{1.05\linewidth}{\[  \mathsf{RT}(i)_{\ty, \ty}(x) = id_{i(\ty)}  \qquad    \mathsf{RT}(i)_{\gamma_1, \dots, \gamma_k, \tens{\ty}{1}{k}}(\seqdots{s}{1}{k} ) = \bigotimes_{i =1}^{k}  \mathsf{RT}(i)_{\gamma_i, \ty_i}(s_i) \]
\[    \mathsf{RT}(i)_{\delta_1, \gamma, \delta_2, \ty}(s[x_1, \dots, x_k := t]) =  \mathsf{let} (\mathsf{RT}(i)_{\delta_1 , \ty_1, \dots, \ty_k, \delta_2, \ty}(s)) \circ \seq{id_{\delta_1},\mathsf{RT}(i)_{\gamma, \tens{\ty}{1}{k}}(t), id_{\delta_2}}  \]
 \[  \mathsf{RT}(i)_{\gamma_1, \dots, \gamma_n, \ty}(f (s_1, \dots, s_n)) =   i(f) \circ \seq{\mathsf{RT}(i)(s_1), \dots, \mathsf{RT}(i)(s_n)  } . \]}}
\end{definition}

We recall that in a representable category, $ f \otimes g $, with $\gamma \vdash f : \ty, \delta \vdash g : b $ is defined \textit{via} the maps $ \mathsf{re}_{a,b} : a,b \to (a \otimes b)  $ as $\mathsf{re}_{a,b} \circ \seq{f,g} .  $ We have that $ \mathsf{re}_{a,b} \circ \seq{f \circ f',g \circ g'} = (\mathsf{re}_{a,b} \circ \seq{f ,g }) \circ \seq{f',g'}$ by associativity of composition. We shall constantly abbreviate $  \seq{\mathsf{RT}_{\gamma_i, \ty_i}(t_i)}_{i \in \length{\gamma}}$ as $\mathsf{RT}_{ \vec{\gamma}, \tyl}(\vec{t}) . $

\begin{proposition}\label{functrep}The following statements hold.\begin{enumerate}\item $ \mathsf{RT}_{\vec{\gamma}, \ty}(\subst{s}{\vec{x}}{\vec{t}}) = \mathsf{RT}_{\gamma, \ty}(s) \circ \seq{\mathsf{RT}_{\gamma_i, \ty_i}(t_i)}_{i \in \length{\gamma}} .$\item If $ s \torp s' $ then  $  \mathsf{RT}_{\gamma, \ty}(s) = \mathsf{RT}_{\gamma, \ty}(s') .$\item If $s \repeq s' \text{ or } s \streq s' $ then $  \mathsf{RT}_{\gamma, \ty}(s) = \mathsf{RT}_{\gamma, \ty}(s') .$
\item If $ s \streq  s'$ then $ \mathsf{RT}_{\gamma, \ty}(s) = \mathsf{RT}_{\gamma, \ty}(s') .$
\end{enumerate}\end{proposition}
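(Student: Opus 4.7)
The plan is to establish the four items in order, with (1) serving as the substitution lemma that powers (2), and (3)--(4) following by unpacking base cases and congruence closure. I note that items (3) and (4) overlap: the $\streq$ case appears in both, so I would actually split (3) into a $\repeq$ case handled by (2) plus induction on the length of the equational chain, and a $\streq$ case that is precisely (4); thus only (1), (2) and the $\streq$ case require genuine work.

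For (1) I would proceed by induction on $s$. The variable case is immediate since $\mathsf{RT}(x) = id$. For a list $\seqdots{s}{1}{k}$, linearity of terms guarantees that the substituted variables $\vec x$ split uniquely as $\vec x_1, \dots, \vec x_k$ with $\vec x_i \subseteq \fv{s_i}$, so $\subst{\seqdots{s}{1}{k}}{\vec x}{\vec t} = \seq{\subst{s_i}{\vec x_i}{\vec t_i}}$; the IH combined with bifunctoriality of the representable tensor (i.e.\ $\mathsf{re} \circ \seq{f_1 \circ \vec g_1, \dots, f_k \circ \vec g_k} = (\mathsf{re} \circ \seq{f_1, \dots, f_k}) \circ \seq{\vec g_1, \dots, \vec g_k}$, which is just associativity of multicategorical composition) gives the result. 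For an explicit substitution $s[\vec y := u]$ I again split $\vec x$ and apply the IH inside both the $\mathsf{let}(-)$ and the argument, then use multinaturality of the representable bijection in the $\gamma$ position to pull the composition outside $\mathsf{let}$. The $f(\vec s)$ case is routine associativity.

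For (2) I would reduce to the root cases by showing $\mathsf{RT}$ is preserved under contextual closure; the context cases are immediate because $\mathsf{RT}$ is compositionally defined and all constructors are functorial (composition, $\mathsf{let}$, and tensor all preserve equality of their arguments). There are two root cases. The $\beta$ root-step $s[\vec x := \ctxl[\seqdots{t}{1}{k}]] \rootbeta \ctxl[\subst{s}{\vec x}{\vec t}]$ unfolds on the left side to
\[
\mathsf{let}(\mathsf{RT}(s)) \circ \seq{id, \mathsf{RT}(\ctxl[\seqdots{t}{1}{k}]), id},
\]
and by induction on the substitution context $\ctxl$ and the fact that $\mathsf{let}(-)$ is the inverse of $-\circ\seq{id, \mathsf{re}, id}$, this simplifies to $\mathsf{RT}(s) \circ \seq{\mathsf{RT}(\vec t)}$, which equals $\mathsf{RT}(\ctxl[\subst{s}{\vec x}{\vec t}])$ by (1). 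The $\eta$ root-step $s \roots \vec x [\vec x^{\tyl} := s]$ computes to $\mathsf{let}(\mathsf{RT}(\seq{\vec x})) \circ \seq{\mathsf{RT}(s)} = \mathsf{let}(\mathsf{re}_{\tyl}) \circ \seq{\mathsf{RT}(s)} = id \circ \seq{\mathsf{RT}(s)} = \mathsf{RT}(s)$, again using the bijection property $\mathsf{let}(\mathsf{re}) = id$.

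The remaining task is the $\streq$ case of (3)/(4). By congruence it suffices to treat the axiom $\ctx[s[\vec x := t]] \streq \ctx[s][\vec x := t]$ with $\vec x \notin \fv{\ctx}$. A short induction on $\ctx$ reduces this to the observation that the representable bijection $-\circ \seq{id, \mathsf{re}, id}$ is \emph{multinatural} in the surrounding context, which is precisely the statement that inserting the $\mathsf{let}(-)$ wrapper commutes with post- and pre-composition in any slot disjoint from the tensor argument. I expect this step to be the main obstacle: getting the bookkeeping right in the list and function-symbol subcases of $\ctx$ requires carefully tracking how the substituted tensor slot threads through the surrounding arguments, but in each subcase the required identity reduces, after rewriting via (1), to an instance of multinaturality of $\mathsf{let}(-)$ plus associativity of composition. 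With $\streq$ settled, the $\repeq$ case of (3) follows by induction on the number of $\torp$ steps connecting $s$ and $s'$ using (2).
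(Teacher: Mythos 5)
Your proposal is correct and follows essentially the same route as the paper's own (appendix) proof: item (1) by induction on $s$ using associativity of multicategorical composition (and multinaturality of the representable bijection in the explicit-substitution case), item (2) by induction on the reduction with the root $\beta$ case discharged via (1) together with the inverse property of $\mathsf{let}(-)$, the root $\eta$ case via $\mathsf{let}(\mathsf{re})=id$, and items (3)--(4) by chain induction on $\repeq$ and induction on the congruence generating $\streq$, whose ground cases again reduce to associativity/multinaturality. Your treatment is in places slightly more explicit than the paper's terse argument (e.g.\ the induction on the substitution context $\ctxl$ in the $\beta$ root step and the explicit appeal to multinaturality), but these are refinements of the same proof, not a different one.
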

\begin{proof}
\begin{enumerate}
\item By induction on $ s .$ If $ s = x $ then $ \vec{t} = t$ and $ \subst{s}{\vec{x}}{\vec{t}} = t .$ Then we have $ \mathsf{RT}_{\vec{\gamma}, \ty}(\subst{s}{\vec{x}}{\vec{t}}) = \mathsf{RT}_{\gamma, \ty}(t) = id \circ \mathsf{RT}_{\gamma, \ty}(t) . $ If $ s = \seqdots{s}{1}{k} $ with $ \delta_i^1, \vec{x}_i, \delta_i^2 \vdash s_i : \ty_i $ with $\vec{x} = \bigoplus \vec{x}_i . $ Then $ \subst{s}{\vec{x}}{\vec{t}} = \seq{\subst{s_1}{\vec{x}_1}{\vec{t}_1}, \dots, \subst{s_k}{\vec{x}_k}{\vec{t}_k}}. $ By IH we have that $ \mathsf{RT}_{\delta_i^1, \vec{\gamma}, \delta_i^2, \ty}(\subst{s_i}{\vec{x}_i}{\vec{t}_i}) =  \mathsf{RT}_{\delta_i^1, \tyl_i, \delta_i^2, \ty}(s_i) \circ \mathsf{RT}_{ \vec{\gamma}_i, \tyl}(\vec{t}_i) .$ By associativity of composition, we have \[     \bigotimes   \left(  \mathsf{RT}_{\delta_i^1, \tyl_i, \delta_i^2, \ty}(s_i) \circ \mathsf{RT}_{ \vec{\gamma}_i, \tyl}(\vec{t}_i) \right) =          \left( \bigotimes     \mathsf{RT}_{\delta_i^1, \tyl_i, \delta_i^2, \ty}(s_i) \right) \circ \mathsf{RT}_{ \vec{\gamma}, \tyl}(\vec{t}) .  \] We can then conclude.  If $ s = p [\vec{y} :=q] $ then $ \subst{s}{\vec{x}}{\vec{t}} =\subst{p}{\vec{x}_1}{\vec{t}_1}[\vec{y} := \subst{q}{\vec{x}_2}{\vec{t}_2}]  $ with $ \delta_1, \vec{x}_1 : \tyl_1,  \vec{y} : \vec{b}, \delta_3 \vdash p : \ty $  and $   \vec{x}_2 : \tyl_2, \zeta \vdash q : \vec{b} .  $
\item By induction on $ s \torp s' . $ If it is a $\beta $ step, then it's a direct corollary of the former point of this lemma. If $ s \torp \vec{x} [\vec{x} := s], $ we have that $\mathsf{RT}_{\gamma, \ty}(\vec{x} [\vec{x} := s]) =  (\mathsf{let}( \mathsf{re} )) \circ s .   $ We observe that $ (\mathsf{let}( \mathsf{re} ))$ is the identity morphism; we can then conclude. The contextual cases are a direct application of the IH.
\item If $ s \repeq s' $, we observe that there exists $ n \in \mathbb{N} $ s.t. $ s = s_0 \leftrightarrow  s_1 \leftrightarrow \dots \leftrightarrow s_n = s' $ where $ \leftrightarrow $ stands for the reflexive and symmetric closure of $ \torp .$ We prove the result by induction on $ n .$ If $ n = 0 $ the result is a direct corollary of the former point of the lemma. If $ n = p + 1 $ the result is a direct corollary of the IH.
\item If $ s \streq s', $ the result is by induction on the judgment $s \streq s'. $  If $  \seq{s_1, \dots, s_i[\vec{x}:= t], \dots, s_k} \streq  \seq{s_1, \dots, s_i[, \dots, s_k}\vec{x}:= t] $ we have that \[\mathsf{RT}_{\gamma_1, \dots, \gamma_k, \seqdots{\ty}{1}{k}}(\seq{s_1, \dots, s_i[\vec{x}:= t], \dots, s_k}) = \] \[  \mathsf{RT}_{\gamma_1, \ty_1}(s_1) \otimes \dots  \mathsf{RT}_{\gamma_i, \ty_i}(s_i [\vec{x} := t]) \otimes \dots \otimes \mathsf{RT}_{\gamma_k, \ty_k}(s_k) \]
then we can conclude by associativity of the composition. A similar argument can be applied to the other ground steps for structural equivalence. The contextual cases are a direct application of the IH.
\end{enumerate}
\end{proof}

\begin{theorem}[Free Construction]
Let $ \mathsf{S} $ be a a representable multicategory and $ i :  \mathcal{R} \to \overline{\mathsf{S}}  $ a map of representable signatures. There exists a unique representable functor $ i^{\ast} : \freerm{\mathcal{R}} \to \mathsf{S}$ such that $ i = \overline{i^\ast} \circ \eta_{\mathcal{R}} . $\end{theorem}
\begin{proof}
The functor is defined exploiting Definition \ref{unirep}.  By the former proposition, $ \iota^\star$ is well-defined and preserves composition and the identities. Given another functor $ i' : \freerm{\mathcal{R}} \to \mathsf{S} $ s.t.   $i = \overline{i'} \circ \eta_{\mathcal{R}} ,$ one proves that $ i^\ast = i' $ pointwise, exploiting the fact that $i' $ has to preserve the representable structure.\end{proof}

\subparagraph{Coherence}

\begin{lemma}
Let $\seq{\atm, \mathcal{R}} $ be a discrete signature. We have that $ \mathsf{mon}(\freerm{\mathcal{R}})$ is the free monoidal category on $ \atm. $
\end{lemma}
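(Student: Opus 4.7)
The plan is to exhibit the claimed universal property by composing the free construction theorem for representable multicategories with Hermida's equivalence between $\Mon$ and $\mathsf{RepM}$ (Theorem~\ref{eqmonrep}). Concretely, the free monoidal category on a set $\atm$ is characterized by a natural bijection $\Mon(F\atm, \mathbb{M}) \cong \mathrm{Set}(\atm, \Ob{\mathbb{M}})$ (restricted to strong monoidal functors), and we aim to obtain exactly this bijection with $F\atm = \mathsf{mon}(\freerm{\mathcal{R}})$.

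First I would observe that, because $\mathcal{R}$ is discrete, a map of representable signatures $\mathcal{R} \to \overline{\mathsf{S}}$ is completely determined by its action on atoms: the objects $\mathcal{R}_0$ are generated freely from $\atm$ by iterated tensoring, and signature maps must preserve the representable structure, while there are no multiarrows to be mapped. This yields a natural bijection $\mathsf{RepSig}(\mathcal{R}, \overline{\mathsf{S}}) \cong \mathrm{Set}(\atm, \Ob{\mathsf{S}})$ for every representable multicategory $\mathsf{S}$.

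Next, given a monoidal category $\mathbb{M}$ and a function $f \colon \atm \to \Ob{\mathbb{M}} = \Ob{\mathsf{rep}(\mathbb{M})}$, I would lift $f$ to a signature map $\tilde f \colon \mathcal{R} \to \overline{\mathsf{rep}(\mathbb{M})}$ by induction on the grammar of $\mathcal{R}_0$, sending $\tens{\ty}{1}{k}$ to $\tens{\tilde f(\ty)}{1}{k}$ in $\mathsf{rep}(\mathbb{M})$. Applying the Free Construction Theorem for representable multicategories yields a unique representable functor $\tilde f^\ast \colon \freerm{\mathcal{R}} \to \mathsf{rep}(\mathbb{M})$ extending $\tilde f$. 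Postcomposing with $\mathsf{mon}(-)$ and using the equivalence $\mathsf{mon}(\mathsf{rep}(\mathbb{M})) \simeq \mathbb{M}$ from Theorem~\ref{eqmonrep} gives a strong monoidal functor $\mathsf{mon}(\freerm{\mathcal{R}}) \to \mathbb{M}$ whose action on atoms is $f$. Chaining the bijections
\[
\Mon(\mathsf{mon}(\freerm{\mathcal{R}}), \mathbb{M}) \;\cong\; \mathsf{RepM}(\freerm{\mathcal{R}}, \mathsf{rep}(\mathbb{M})) \;\cong\; \mathsf{RepSig}(\mathcal{R}, \overline{\mathsf{rep}(\mathbb{M})}) \;\cong\; \mathrm{Set}(\atm, \Ob{\mathbb{M}})
\]
— where the first bijection is Hermida's equivalence and the second is the Free Construction Theorem — establishes the universal property.

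The main obstacle I expect is twofold. First, I must argue uniqueness of the extension at the level of monoidal categories: any strong monoidal functor $G \colon \mathsf{mon}(\freerm{\mathcal{R}}) \to \mathbb{M}$ extending $f$ corresponds, under Hermida's equivalence, to a representable functor $\freerm{\mathcal{R}} \to \mathsf{rep}(\mathbb{M})$ whose restriction to atoms is $f$, so the uniqueness clause of the Free Construction Theorem applies. Second, some care is needed because Hermida's result is an equivalence rather than an isomorphism, so equalities between functors become coherent isomorphisms; the 2-categorical bookkeeping must be handled to ensure that the induced functor on the $\Mon$-side is strong monoidal and that the bijection on hom-sets descends to isomorphism classes of strong monoidal extensions. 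Once this bookkeeping is in place, the universal property is just the composition of the three bijections displayed above.
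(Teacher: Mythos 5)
Your argument is correct and takes essentially the same route as the paper: the paper's proof builds the extension $i^{\ast} \colon \mathsf{mon}(\freerm{\mathcal{R}}) \to \mathbb{M}$ directly from the term-interpretation of Definition \ref{rep:unirep} (the same construction underlying the Free Construction Theorem) and checks uniqueness pointwise, whereas you repackage exactly these ingredients as a chain of hom-set bijections transported along Theorem \ref{eqmonrep}. The only delicate point --- which you already flag --- is the bookkeeping between on-the-nose representable functors and strong monoidal functors up to isomorphism, a subtlety the paper's own terse proof likewise leaves implicit.
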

\begin{proof}
the map $ \atm \to \Ob{\mathsf{mon}(\freerm{\mathcal{R}})}$ is trivial. Given a monoidal category $ \mocati ,$ the functor $ i^\ast : \mathsf{mon}(\freerm{\mathcal{R}} \to \mcati  $ is build out of Definition \ref{rep:unirep}. Uniqueness is proved pointwise. \end{proof}

\begin{lemma}
Let $ \gamma, \gamma'  $ be atomic contexts. If there exists a type $ \ty $ and normal terms $ s, s' $ such that $s, s' \in \NF{\repTerms(\mathcal{R})}(\gamma; \ty) $ then $ \gamma = \gamma'  $ and $ s = s' . $
\end{lemma}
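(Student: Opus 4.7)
The plan is to first reduce to normal forms without explicit substitutions and then induct on $\ty$, using the inductive characterization of normal forms (Definition~\ref{chosnorm}) and the fact that typing contexts are relevant.

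The first step is to observe that, since $\gamma$ and $\gamma'$ are atomic, neither $s$ nor $s'$ can contain an outer explicit substitution. In the grammar $v[\vec{x}_1 := x_1]\dots[\vec{x}_n := x_n]$ each $x_i$ is a variable drawn from the ambient context and, by the typing rule for substitutions, must carry a tensor type $\tens{\ty}{1}{k_i}$. Since all entries of $\gamma$ and $\gamma'$ are atomic, no such $x_i$ is available, so necessarily $n = 0$ and both $s$ and $s'$ are of the pure $v$-form: either variables or lists of $v$-terms. In particular all subterms of $s$ and $s'$ are themselves in $\NF{\repTerms(\mathcal{R})}$ with atomic contexts.

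The main step is an induction on $\ty$. If $\ty = o$ is atomic, then $s$ cannot be a list (whose type is always a tensor), so $s = x$ for some variable $x$. Because the representable typing rule for variables uses a singleton relevant context, $\gamma = x : o$. Applying the same argument to $s'$ gives $\gamma' = y : o$ and $s' = y$. Up to the renaming convention for bound context variables recalled in the preliminaries, we identify $x$ and $y$, obtaining $\gamma = \gamma'$ and $s = s'$. If $\ty = \tens{\ty}{1}{k}$, then $s$ cannot be a variable (no variable of tensor type lives in an atomic context), so $s = \seqdots{s}{1}{k}$ with $\gamma_i \vdash s_i : \ty_i$ for each $i$, and $\gamma = \gamma_1, \dots, \gamma_k$ in this precise order — crucially, the non-symmetric representable typing rule for lists performs no permutation of variables, so the decomposition of $\gamma$ into atomic subcontexts is forced by the tensor structure of $\ty$. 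The same analysis applies to $s'$, yielding $s' = \seqdots{s'}{1}{k}$ with $\gamma' = \gamma'_1, \dots, \gamma'_k$. The inductive hypothesis applied to each pair $(s_i, s'_i)$ at type $\ty_i$ with atomic contexts $\gamma_i, \gamma'_i$ gives $\gamma_i = \gamma'_i$ and $s_i = s'_i$, and concatenation yields $\gamma = \gamma'$ and $s = s'$.

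The delicate point, and the reason the lemma is specialised to the non-symmetric representable case, is the rigidity of the list typing rule: it is precisely the absence of shuffle permutations that forces a unique ordered decomposition $\gamma = \gamma_1, \dots, \gamma_k$ matching the tensor structure of $\ty$. In the symmetric setting this uniqueness fails in general and is replaced by the more refined analysis via $\mathsf{sym}(-)$ carried out in Section~\ref{sec:sr}. Once the absence of outer substitutions is established, no further interaction with the structural equivalence $\streq$ is needed and the induction is routine.
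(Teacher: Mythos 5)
Your proof is correct and follows essentially the same route as the paper's: an induction on $\ty$, first ruling out explicit substitutions because the contexts are atomic, then treating the variable case for atomic $\ty$ and the list case for tensor $\ty$ with the inductive hypothesis applied componentwise. The extra details you supply (why the right-hand sides $x_i$ would need tensor types, the appeal to the renaming convention for context variables, and the remark on the rigidity of the non-symmetric list rule) only make explicit what the paper leaves implicit.
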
 
\begin{proof}by induction of $ \ty . $ We remark that the terms $s.s' $ cannot be substitution, since the contexts are atomic. If $ \ty = o , $ since $s, s' $ are normal, then $ s = x = s' . $ If $ \ty =  \seqdots{\ty}{1}{k}$, since $ \gamma , \gamma'$ are atomic, by definition  $ s = \seqdots{s}{1}{k}, s' = \seqdots{s'}{1}{k} $. By definition, we have $ \gamma_i \vdash s_i : \ty_i $ and $ \gamma'_i \vdash s'_i : \ty_i $ for $ i \in [k] . $ Moreover, $ \gamma = \gamma_1, \dots, \gamma_k $ and $ \gamma' = \gamma'_1, \dots, \gamma'_k . $ By IH, $ \gamma_i = \gamma'_i $ and $s_i = s'_i . $ We can then conclude. \end{proof}

\subsection{Symmetric Case}

 \subparagraph{Type System}

\begin{proposition}[Canonicity of Typing]\label{app:canonsym}If $ \pi \triangleright \gamma \vdash s : \ty $ and $ \pi' \triangleright\gamma \vdash s : \ty' $ then $\ty = \ty' $ and $  \pi = \pi' .$\end{proposition}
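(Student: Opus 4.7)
My plan is to proceed by induction on the structure of $s$, proving simultaneously that $\ty = \ty'$ and $\pi = \pi'$. The base case $s = x$ is immediate: the axiom is the only applicable rule, so $\gamma = x : \ty$ is forced and the derivation is determined uniquely.

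For the inductive cases, the first observation is that the typing rules are syntax-directed: the head constructor of $s$ fixes which rule concludes $\pi$ (and $\pi'$). What remains is to show that the premises of this last rule are uniquely determined by $\gamma$ and $s$, namely the partition of $\gamma$ into subcontexts, the types of the immediate subterms, and the shuffle permutation. I focus on the list case $s = \seqdots{s}{1}{k}$; the explicit-substitution case and the signature-arrow case $f(\vec{s})$ will follow the very same pattern, modulo a three-way context merge in the substitution case that calls for an application of Lemma \ref{shucan}.

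For the list rule, the subcontexts $\gamma_i \vdash s_i : \ty_i$ satisfy $|\gamma_i| = \fv{s_i}$ by linearity and relevance. Since $\sigma \in \mathsf{shu}(\gamma_1, \dots, \gamma_k)$ preserves the relative order of the variables in each $\gamma_i$ inside $\act{(\gamma_1, \dots, \gamma_k)}{\sigma} = \gamma$, each $\gamma_i$ must be exactly the sublist of $\gamma$ on $\fv{s_i}$. Hence the $\gamma_i$ are uniquely determined, and the induction hypothesis applied to each $s_i$ pins down both $\ty_i$ and the subderivation $\pi_i$. The last step is uniqueness of the shuffle: if $\sigma, \sigma' \in \mathsf{shu}(\gamma_1, \dots, \gamma_k)$ both realize $\gamma$, then $\sigma \circ (\sigma')^{-1} \in \stabi{(\gamma_1, \dots, \gamma_k)}$, and since type contexts are repetition-free (by the Barendregt-style convention on bound variable renaming), this stabilizer is trivial, forcing $\sigma = \sigma'$. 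This is precisely the \emph{fixed-point-free action} observation sketched by the author.

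The main obstacle I expect is bookkeeping rather than conceptual depth: one has to verify that the ``free variables determine the partition'' argument remains watertight in the presence of explicit substitutions, whose bound variables must be renamed away from the ambient free variables for repetition-freeness (and hence triviality of the stabilizer) to genuinely propagate through the inductive hypothesis; and one has to invoke Lemma \ref{shucan} coherently when multiple shuffles interact in the substitution case. Once this bookkeeping is under control, everything else reduces to a mechanical check of the syntax-directedness of the rules.
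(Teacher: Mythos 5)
Your strategy coincides with the paper's: the appendix obtains this proposition from the embedding of symmetric representable terms into autonomous ones (Proposition \ref{emb}) together with autonomous canonicity (Proposition \ref{aut:canonsym}), and the proof of the latter is exactly the induction you carry out --- relevance and linearity fix the \emph{content} of each premise context, monotonicity of shuffles fixes its \emph{order}, and repetition-freeness of contexts makes the stabiliser trivial, so the shuffle is unique. Running the induction directly on the symmetric representable system instead of through the embedding is an immaterial difference, and your list case is the very argument the paper writes out.

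The gap is in the case you defer. The explicit-substitution rule of Figure \ref{fig:reps-calc} is \emph{not} ``the very same pattern'' as the list rule: there the two outer blocks $\delta$ and $\delta'$ are carved out of a \emph{single} premise context $\delta, x_1 : \ty_1, \dots, x_k : \ty_k, \delta'$, the conclusion does not record where the block $\vec{x}$ was inserted, and a shuffle of $(\delta, \gamma, \delta')$ imposes no constraint on the relative order of an element of $\delta$ versus an element of $\delta'$. Consequently ``each premise context is the sublist of $\gamma$ on the free variables of the subterm'' fixes only the interleaving class of $\delta :: \delta'$ and says nothing about the position of $\vec{x}$, so the second premise is not pinned down by that argument. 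Concretely, for $\seq{y, x_1, z}[x_1^{a} := w]$ the body $\seq{y, x_1, z}$ is typable both in context $(y : b_1, x_1 : a, z : b_2)$ and in context $(z : b_2, x_1 : a, y : b_1)$ (in the list rule all blocks are singletons, so every permutation is a shuffle), and in either case a $(1,1,1)$-shuffle in the substitution rule yields the same conclusion $(y : b_1, w : a, z : b_2) \vdash \seq{y,x_1,z}[x_1^{a} := w] : (b_1 \otimes a \otimes b_2)$; as the rules are literally stated these are two distinct derivations of one judgment, so this case needs a genuinely extra argument or an extra convention constraining the split $\delta, \delta'$, not the list-case reasoning and not merely an appeal to Lemma \ref{shucan}. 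In fairness, the paper's own detailed proof of Proposition \ref{aut:canonsym} also treats only the list case, so you are not behind the paper --- but the ``bookkeeping'' you flag is exactly this point, and it does not go through mechanically.
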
\begin{proof} Corollary of Propositions \ref{emb} and \ref{aut:canonsym}.\end{proof}

\subparagraph{Coherence}

\begin{theorem}
Let $ s \in \NF{\repsTerms (\mathcal{R})}(\gamma, \ty) .  $ There exists a unique $\sigma \in \stabi{(\stricti{\gamma}} $ and a unique non-symmetric representable normal term $t $ such that $ s = \ract{t}{\sigma} . $
\end{theorem}
\begin{proof}
If $s = x $ the result is immediate. If $ s = \seqdots{s}{1}{k} $ with $ \gamma = \act{(\gamma_1, \dots, \gamma_k)}{\sigma} \vdash \seqdots{s}{1}{k} : \tens{\ty}{1}{k} $ and $\gamma  $ being atomic, by IH we have unique $ \sigma_1, \dots, \sigma_k \in  \mathsf{St}(\stricti{\gamma_i})$ and $ t_1, \dots, t_k \in \NF{\repTerms(A)} $ s.t. $ s_i = \ract{t_i}{\sigma_i}  $ for $ i \in [k] . $ Then, by definition, $ s = \ract{\seqdots{t}{1}{k}}{\sigma \circ (\sigma_1 \otimes \dots \otimes \sigma_k)} .$ Uniqueness derives by Proposition \ref{app:canonsym}. If $ s = p [\vec{x}_1 := x_1] \dots [\vec{x}_n := x_n], $ By IH there exists unique $ \sigma  $ and $ t$ s.t. $ p = \ract{t}{\sigma} . $ Then we can conclude by the fact that the action of $\sigma $ on variables is fixedpoint-free. 
 \ref{prop:symcano}. \end{proof}

\section{Symmetric Closed Case}

\subparagraph{Substitution and Reduction}

\begin{definition}Let $ s \in \rTerms $ and $   x_1,\ \dots, x_k \in \mathcal{V} . $ We define the list of \emph{occurrences} of $x_1, \dots, x_ks $ in $s $ by induction as follows: 
\[   \mathsf{occ}_{\vec{x}} (x) = \begin{cases}    x & \text{ if } x\in \vec{x} ; \\ \seq{} & \text{ otherwise.}       \end{cases}      \qquad  \mathsf{occ}_{\vec{x}}(s\seqdots{s}{1}{k}) = \mathsf{occ}_{\vec{x}}(s) \oplus \bigoplus \mathsf{occ}_{\vec{x}} (s_i )    \] \[\mathsf{occ}_{\vec{x}}( \la{\vec{y} } s) = \mathsf{occ}_{\vec{x}} (s) .  \]
\end{definition}

\begin{definition}[Linear Substitution]
Let $  s,  t_1, \dots, t_k \in \repTerms $ and $ \  x_1, \dots, x_k  \subseteq \fv{s}. $ We define the \emph{($k $-ary linear) substitution} of $ x_1, \dots, x_k$ by $ t_1, \dots, t_k $ in $s $ by induction as follows:
\[     \subst{x}{x}{t} = t \qquad \subst{x}{}{} = x \quad (\text{ if } k = 0) \] \[     \subst{s_0\seqdots{s}{1}{n}}{\vec{x}}{\vec{t}} =  \subst{s_0}{\vec{x}_0}{\vec{t}_0}\seq{\subst{s_1}{\vec{x}_1}{\vec{t}_1}, \dots, \subst{s_n}{\vec{x}_n}{\vec{t}_n}} \quad (\vec{x} = \bigoplus \vec{x}_i, \vec{t} = \bigoplus \vec{t}_i \text{ and } \vec{x}_i = \mathsf{occ}_{\vec{x}}(s_i) ) \]
\[   \subst{\la{\vec{y}} s}{\vec{x}}{\vec{t}} = \begin{cases} \la{\vec{y}} \subst{s}{\vec{x}}{\vec{t}}  	&  \text{ if }  \vec{x} \cap \vec{y} = \emptyset  ;\\ \la{\vec{y}} s &  \text{ otherwise.}  \end{cases} \]
\end{definition}

\subparagraph{Free Construction} 

\begin{definition}\label{app:embclosed}
Let $ \mathsf{E} $ be a symmetric closed multicategory and let $ i : \mathcal{L} \to \overline{\mathsf{E}} $ be a map of closed signatures. We define a family of maps $ \mathsf{RT}_{\gamma, \ty} : \rTerms(\mathcal{L}) (\gamma, \ty)  \to \mathsf{E}(i(\gamma), i(\ty)) $ by induction as follows: 

\scalebox{0.9}{\parbox{1.05\linewidth}{\[  \mathsf{RT}_{\ty, \ty}(x) = 1_{i(\ty)}  \qquad  \mathsf{RT}_{\gamma, \tyl \multimap \ty} (\la{\vec{x}} s) = \lambda ( \mathsf{RT}_{\gamma, \tyl, \ty} (s) ) \] \[ \mathsf{RT}_{ (\gamma_0, \dots, \gamma_k), \ty}(s \seqdots{t}{1}{k}) = ( ev \circ   \seq{\mathsf{RT}_{\gamma_0, \seqdots{\ty}{1}{k} \multimap \ty} (s), \mathsf{RT}_{\gamma_1, \ty_1}(t_1), \dots, \mathsf{RT}_{\gamma_k, \ty_1}(t_k)} )  \cdot \sigma .  \]}}
\end{definition}

\begin{proposition}\label{functsc}The following statements hold.\begin{enumerate}\item $ \mathsf{RT}_{\vec{\gamma}, \ty}(\subst{s}{\vec{x}}{\vec{t}}) = \mathsf{RT}_{\gamma, \ty}(s) \circ \seq{\mathsf{RT}_{\gamma_i, \ty_i}(t_i)}_{i \in \length{\gamma}} .$\item If $ s \to_{\mathsf{sc}} s' $ then  $  \mathsf{RT}_{\gamma, \ty}(s) = \mathsf{RT}_{\gamma, \ty}(s') .$\item If $s =_{\mathsf{sc}} s' $ then $  \mathsf{RT}_{\gamma, \ty}(s) = \mathsf{RT}_{\gamma, \ty}(s') .$
\end{enumerate}\end{proposition}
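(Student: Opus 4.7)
The plan is to follow the same tripartite structure used for the representable case in Proposition \ref{functrep}: prove the substitution equality (1) by induction on $s$, then derive (2) by induction on the reduction $\to_\mathsf{sc}$ using (1) to handle the $\beta$ root step, and finally (3) by a zig-zag argument on the reflexive-transitive-symmetric closure defining $=_\mathsf{sc}$. A preliminary remark is that, thanks to Proposition \ref{permrule} and the canonicity of typing transferred to the closed case, both sides of each equation are well-typed morphisms in $\mathsf{E}(i(\gamma); i(\ty))$, so the claims make sense.

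For (1), I would proceed by induction on $s$. The base case $s = x$ is immediate, as $\mathsf{RT}(x) = 1_{i(\ty)}$ and substitution just returns the single $t$. For the application case $s = s_0 \seqdots{t}{1}{n}$, the free variables in $\vec{x}$ split uniquely along the subterms by linearity, so I can apply the induction hypothesis to each component and then use associativity of multicategorical composition together with the naturality of the symmetric action under $\sigma$; this is where shuffle canonicity (Lemma \ref{shucan}) ensures that the splitting of $\vec{t}$ into $\vec{t}_0, \dots, \vec{t}_n$ matches the action of $\sigma$ on the context. For the abstraction case $s = \la{\vec{y}} s'$, linearity of substitution implies $\vec{x} \cap \vec{y} = \emptyset$, and I use the naturality of $\lambda(-)$ in its parameter context to push the substitution under the adjoint transpose. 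The operation case $s = f(\vec{s})$ is analogous to the list case in the representable setting.

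For (2), I would induct on $s \to_\mathsf{sc} s'$. The $\beta$ root step follows from (1) combined with the counit equation of the closed adjunction: by definition, $\mathsf{RT}((\la{\vec{x}} s)\seqdots{t}{1}{k}) = ev \circ \seq{\lambda(\mathsf{RT}(s)), \mathsf{RT}(\vec{t})}$, and the triangular identity $ev \circ \seq{\lambda f, id_{\vec{a}}} = f$ together with functoriality of composition collapses this to $\mathsf{RT}(s) \circ \seq{id_\gamma, \mathsf{RT}(\vec{t})}$, which equals $\mathsf{RT}(\subst{s}{\vec{x}}{\vec{t}})$ by (1). The $\eta$ root step $s \to_\eta \la{\vec{x}}(s\vec{x})$ uses the other triangular identity, namely $\lambda(ev \circ \seq{f, id_{\vec{a}}}) = f$, applied to $f = \mathsf{RT}(s)$; the restriction that $s \notin \mathsf{AT}$ is irrelevant to the semantic equation, as the adjunction identity holds unconditionally. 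The contextual closure cases are handled by functoriality of composition, tensor, and $\lambda$ on morphisms, together with the induction hypothesis.

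For (3), I would observe that $s =_\mathsf{sc} s'$ means there is a zigzag $s = s_0 \leftrightarrow s_1 \leftrightarrow \dots \leftrightarrow s_n = s'$ of $\to_\mathsf{sc}$ steps, and iterate (2) along the chain. The main obstacle I expect is the abstraction subcase of (1): making explicit that the substitution avoids the abstracted variables (guaranteed by linearity and $\alpha$-conversion) and that the naturality square for $\lambda$ actually commutes in the direction needed; everything else is essentially bookkeeping in the multicategory, modulo the permutations produced by the shuffle decomposition in the application typing rule, which one needs to track carefully in order to line up the right action $\cdot \sigma$ on both sides of the target equation.
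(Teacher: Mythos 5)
Your proposal follows the same route as the paper's proof: induction on $s$ for the substitution equation (variable case as in the representable setting, associativity of composition for the application case, naturality of $\lambda(-)$ for abstraction), then the $\beta$ root step via point (1) and the $\eta$ root step via the fact that $\lambda(-)$ is inverse to $ev \circ \seq{-, id}$, with contextual cases by the induction hypothesis, and finally a zig-zag induction on the chain witnessing $=_{\mathsf{sc}}$. Your extra care about the shuffle permutations and the triangular identities only makes explicit what the paper leaves implicit, so the argument is correct and essentially identical.
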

\begin{proof}
\begin{enumerate}
\item By induction on $ s .$ If $s  = x $ the proof is the same as in Proposition \ref{functrep}. If $ s = s_0 \seqdots{s}{1}{l} $ with $ (\delta_0^1, \dots, \delta^2_l) \cdot \sigma \vdash s : \ty, \delta_0^1, \vec{x}_0 : \tyl_0, \delta_0^2 \vdash s_0 : \seqdots{b}{1}{l} \multimap \ty , \delta_i^1, \vec{x}_i : \vec{a}_i, \delta_i^2 \vdash s_i : b_i  $ with  $\left(\bigoplus \vec{x}_i \right) \cdot \sigma = x_{1}, \dots, x_k, \left(\bigoplus \tyl_i \right) \cdot \sigma = \ty_{1}, \dots, \ty_k . $ By definition $ \subst{s}{\vec{x}}{\vec{t}} = \seq{\subst{s}{\vec{x}_1}{\vec{t}_1}, \dots, \subst{s_l}{\vec{x}_l}{\vec{t}_l}} .$ We can then apply the IH and conclude by associativity of composition. The abstraction case is a corollary of the IH and naturality of $\lambda(-)$.
\item By induction on the step. If it is $\beta , $ then it is a corollary of the former point of this lemma. If it is $ \eta, s \to_{\mathsf{sc}} \la{\vec{x} } s \vec{x}, $ we have that $ \mathsf{RT}_{\gamma, \tyl \multimap  \ty}(\la{\vec{x} } s \vec{x}) = \lambda (ev \circ \seq{s, id}) .$ We can conclude since $ \lambda $ is the inverse of $ ev \circ \seq{-, id} .$ The contextual cases are a direct application of the IH.
\item If $ s _{\mathsf{sc}} s' $, we observe that there exists $ n \in \mathbb{N} $ s.t. $ s = s_0 \leftrightarrow  s_1 \leftrightarrow \dots \leftrightarrow s_n = s' $ where $ \leftrightarrow $ stands for the reflexive and symmetric closure of $ \to_{\mathsf{sc}} .$ We prove the result by induction on $ n .$ If $ n = 0 $ the result is a direct corollary of the former point of the lemma. If $ n = p + 1 $ the result is a direct corollary of the IH.
\end{enumerate}
\end{proof}

\begin{theorem}[Free Construction]
Let $ \mathsf{S} $ be a a symmetric closed multicategory and $ i :  \mathcal{L} \to \overline{\mathsf{S}}  $ a map of representable signatures. There exists a unique symmetric closed functor $ i^{\ast} : \freescm{\mathcal{L}}  \to   \mathsf{S} $ such that $ \overline{i^\ast} \circ \eta_{\mathcal{L}} = i . $
\end{theorem}
\begin{proof}
The functor is defined exploiting Definition \ref{app:embclosed}.  By the former proposition, $ i^\star$ is well-defined and preserves composition and the identities. Given another functor $ i' : \freerm{\mathcal{R}} \to \mathsf{S} $ s.t.   $i = \overline{i'} \circ \eta_{\mathcal{R}} ,$ one proves that $ i^\ast = i' $ pointwise, exploiting the fact that $i' $ has to preserve the representable structure.\end{proof}

\section{Autonomous Case}

 \begin{proposition}\label{emb}
We have an embedding (injective map preserving the inductive structure and substitution) $ \iota_{\spadesuit} : \mathsf{\Lambda}_{\spadesuit} (\mathcal{N})(\gamma; \ty )\hookrightarrow \autTerms (\mathsf{aut}(\mathcal{N}))(\gamma; \ty)  $ for $\spadesuit \in \{\mathsf{re}, \mathsf{reps}, \mathsf{sc} \} $.
\end{proposition}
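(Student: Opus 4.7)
The plan is to define $\iota_\spadesuit$ on terms by structural recursion that is essentially the identity on syntax, exploiting the fact that the grammars of $\repTerms(\mathcal{N})$, $\repsTerms(\mathcal{N})$ and $\rTerms(\mathcal{N})$ are each literally sub-grammars of $\autTerms(\mathsf{aut}(\mathcal{N}))$. Concretely, I would send a variable to itself, a list $\seqdots{s}{1}{k}$ to $\seq{\iota_\spadesuit(s_1),\dots,\iota_\spadesuit(s_k)}$, an explicit substitution $s[\vec{x}:=t]$ to $\iota_\spadesuit(s)[\vec{x}:=\iota_\spadesuit(t)]$, an abstraction $\la{\vec{x}}s$ to $\la{\vec{x}}\iota_\spadesuit(s)$, an application $s\seqdots{t}{1}{k}$ to $\iota_\spadesuit(s)\seq{\iota_\spadesuit(t_1),\dots,\iota_\spadesuit(t_k)}$ and $f(s_1,\dots,s_n)$ to $f(\iota_\spadesuit(s_1),\dots,\iota_\spadesuit(s_n))$. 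Observe that types of $\mathcal{N}_0$ sit inside $\mathsf{aut}(\mathcal{N})_0$ through the canonical inclusion $\iota:\mathcal{N}_0\hookrightarrow\mathsf{aut}(\mathcal{N})_0$ discussed in the signatures subsection, so contexts and target types transport without change.

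The first step is to prove, by induction on the source type derivation, that $\gamma\vdash_\spadesuit s:\ty$ implies $\gamma\vdash_{\mathsf{aut}} \iota_\spadesuit(s):\ty$. For $\spadesuit=\mathsf{reps}$ this is immediate: every inference rule of the symmetric representable system of Figure~\ref{fig:reps-calc} appears verbatim as an autonomous rule in Figure~\ref{fig:aut-calc}. For $\spadesuit=\mathsf{re}$, the representable rules of Figure~\ref{fig:rep-calc} correspond to the autonomous rules where the shuffle permutation is forced to be the identity; since identity permutations are shuffles, every derivation step lifts. For $\spadesuit=\mathsf{sc}$, the rules of Figure~\ref{fig:closed-calc} (variable, abstraction, application) are again included in the autonomous rules, with the shuffle constraint on the application rule matching identically.

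Next I would check that $\iota_\spadesuit$ preserves the inductive structure (true by construction) and substitution. Preservation of substitution means $\iota_\spadesuit(\subst{s}{\vec{x}}{\vec{t}}) = \subst{\iota_\spadesuit(s)}{\vec{x}}{\iota_\spadesuit(\vec{t})}$, which follows by a routine induction on $s$, since the clauses defining linear substitution in each source calculus match those defining substitution in the autonomous calculus on the relevant constructors. Injectivity is transparent: distinct source terms map to syntactically distinct autonomous terms because the embedding is the identity on syntactic constructors.

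The only delicate point, and what I would treat as the main obstacle, is bookkeeping the permutation structure when $\spadesuit=\mathsf{re}$. In the non-symmetric representable system contexts are concatenated in a fixed order, whereas the autonomous rules use a shuffle $\sigma\in\mathsf{shu}(\gamma_1,\dots,\gamma_k)$. I would invoke Lemma~\ref{shucan} together with the fact that the identity permutation qualifies as a shuffle to show that the derivation obtained by plugging in identity shuffles is well-formed, and then appeal to Proposition~\ref{prop:symcano} (canonicity of typing in the autonomous/symmetric systems) to conclude that this lifted derivation is the unique one whose underlying term is $\iota_{\mathsf{re}}(s)$. With this in hand, the image of the embedding lands in the intended hom-set $\autTerms(\mathsf{aut}(\mathcal{N}))(\gamma;\ty)$, completing the proof.
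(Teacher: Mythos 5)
Your overall strategy is exactly what the paper intends: the paper in fact states Proposition \ref{emb} without proof and simply declares that the embeddings are treated as inclusions, so an identity-on-syntax map defined by structural recursion, together with a typing lift, preservation of substitution, and injectivity, is the right content to supply. However, two of your intermediate claims are not literally true as written and need repair. First, the assertion that the representable rules lift with the \emph{identity} shuffle (and that the symmetric representable rules appear ``verbatim'') fails for the explicit-substitution rule: in Figures \ref{fig:rep-calc} and \ref{fig:reps-calc} the conclusion context is built from the block order $(\delta,\gamma,\delta')$, whereas the autonomous rule in Figure \ref{fig:aut-calc} shuffles the tuple $(\gamma,\delta_1,\delta_2)$. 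To lift such a step you must use the evident block-interleaving permutation that places $\gamma$ between $\delta_1$ and $\delta_2$ (composed, in the symmetric case, with the given shuffle); this permutation is indeed a shuffle, since each block stays internally monotone, but it is not the identity, and this needs to be said. Second, the claim that the closed application rule ``matches identically'' glosses over the fact that the $k$-ary application $s\seqdots{t}{1}{k}$ of Figure \ref{fig:closed-calc} is mapped, as you yourself define it, to a \emph{list former followed by a binary application} in the autonomous calculus. Its typing lift therefore uses two autonomous rules, and you must decompose the single $(k{+}1)$-block shuffle $\sigma\in\mathsf{shu}(\gamma_0,\dots,\gamma_k)$ of the source rule as a $k$-block shuffle of $\gamma_1,\dots,\gamma_k$ (for the list) followed by a two-block shuffle of $\gamma_0$ against the result (for the application). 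This decomposition exists and is unique, by the same kind of argument as Lemma \ref{shucan}, but it is a genuine step, not an identification of rules.

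A further, smaller point: invoking Proposition \ref{prop:symcano} here is dangerously circular, since the paper derives that canonicity result (Proposition \ref{app:canonsym}) \emph{from} Proposition \ref{emb} together with the autonomous canonicity Proposition \ref{aut:canonsym}. If you need uniqueness of the lifted derivation at all, cite Proposition \ref{aut:canonsym}, which is proved independently by induction on terms; but note that the statement of Proposition \ref{emb} only asks for an injective, structure- and substitution-preserving map on terms, so the uniqueness claim can simply be dropped.
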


We treat these embeddings as they were inclusions. The embeddings concerns also type derivations, meaning that we will identify representable or closed type derivations with appropriate autonomous ones. Given a term $\gamma \vdash t : \ty ,$ there is a unique type derivation for it.

\subparagraph{Type System}

\begin{definition}
Let $ s $ be an autonomous term. We define the set of its \emph{subterms} by induction on $ s$ as follows.
\[      \mathsf{ST}(x) = \{  x\} \qquad  \mathsf{ST}(\la{\vec{x}}s ) = \{ \la{\vec{y}}  s' \mid s' \in \mathsf{ST}(s)  \} \sqcup \{ \la{\vec{x}} s \} \] \[ \mathsf{ST}(\seqdots{s}{1}{k}) = \bigsqcup_{i \in [k]} \mathsf{ST}(s_i) \sqcup \{ \seqdots{s}{1}{k} \}                        \]
\[        \mathsf{ST}(pq) = \mathsf{ST}(p) \sqcup \mathsf{ST}(q) \sqcup \{ pq\} \qquad \mathsf{ST}(s[\vec{x}:=t]) = \mathsf{ST}(s) \sqcup \mathsf{ST}(t) \sqcup \{s[\vec{x}:=t] \}           \]
\end{definition}

\begin{proposition}[Canonicity of Typing]\label{aut:canonsym}
If $ \pi \triangleright \gamma \vdash s : \ty $ and $ \pi' \triangleright \gamma \vdash s : \ty' $ then $\ty = \ty' $ and $  \pi = \pi' .$
\end{proposition}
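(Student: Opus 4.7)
The plan is to proceed by induction on the term $s$, proving the sharper statement: for any two derivations with the \emph{same} context and term conclusion, the output type and the entire derivation tree coincide. The base cases are immediate. For $s = x$, only the variable rule applies and $\gamma = x : \ty$ pins down $\ty$. For a signature constant $f(s_1,\dots,s_n)$, the output type is dictated by $f$ and the context split reduces to the same analysis as in the list case below.

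For the lambda case $\la{\seq{x_1^{\ty_1},\dots,x_k^{\ty_k}}} s$, the binding is Church-typed, so both derivations extend the context by the \emph{same} $\vec{x}^{\tyl}$; the induction hypothesis applied to $s$ gives a unique $b$ and a unique subderivation, hence a unique derivation of type $\tyl \multimap b$. Everything interesting is concentrated in the constructors that merge contexts via a shuffle: list formation, linear application, and explicit substitution. I treat the list case $s = \seqdots{s}{1}{k}$ in detail; the other two are identical in structure. A derivation exhibits a decomposition $\gamma = \act{(\gamma_1,\dots,\gamma_k)}{\sigma}$ with $\sigma \in \mathsf{shu}(\gamma_1,\dots,\gamma_k)$ and premises $\gamma_i \vdash s_i : \ty_i$. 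By linearity (guaranteed by the disjointness convention on typing rules together with relevance of contexts), one has $\abs{\gamma_i} = \fv{s_i}$, so the \emph{set} underlying each $\gamma_i$ is forced by the term. The monotonicity condition defining shuffles then forces $\gamma_i$ to be the subsequence of $\gamma$ supported on $\fv{s_i}$, uniquely determining both $\gamma_i$ and $\sigma$: if $\sigma, \sigma'$ were two shuffles realising the same decomposition, then $\act{(\gamma_1,\dots,\gamma_k)}{\sigma} = \act{(\gamma_1,\dots,\gamma_k)}{\sigma'} = \gamma$, and since the variables in $\gamma$ are pairwise distinct, the right action of $S_{|\gamma|}$ on $\gamma$ is free, forcing $\sigma = \sigma'$. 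The induction hypothesis on each $s_i$ then gives a unique $\ty_i$ and subderivation, whence $\ty = \seqdots{\ty}{1}{k}$ and $\pi = \pi'$.

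The hard part is precisely the uniqueness of the shuffle, and this is exactly the reason the type system is restricted to shuffles rather than arbitrary permutations, as already illustrated by the counterexample $\seq{\seq{x,y},z}$ discussed before the proposition. The argument in the application case is the same, with the partition $(\gamma_0,\gamma_1,\dots,\gamma_k)$ corresponding to the function part and its $k$ arguments; in the substitution case $t[\vec{x}^{\tyl} := s]$, the types $\tyl$ are carried by the binder, so once the partition $(\gamma, \delta_1, \delta_2)$ is fixed by free-variable considerations and the shuffle is pinned down by monotonicity, the induction hypothesis on $s$ yields $\tens{\ty}{1}{k}$ uniquely and the induction hypothesis on $t$ yields the output type $b$ and the derivation uniquely. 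Assembling all cases closes the induction.
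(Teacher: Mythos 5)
Your proof takes essentially the same route as the paper's: induction on $s$, where in the context-merging cases linearity fixes the underlying variable set of each block as the free variables of the corresponding subterm, shuffle monotonicity then fixes each block's order, and freeness of the permutation action on repetition-free contexts fixes the shuffle itself, after which the induction hypothesis closes the case. The only cosmetic difference is that the paper establishes $\gamma_i = \gamma'_i$ by contradiction on an order disagreement while you read $\gamma_i$ off directly as the subsequence of $\gamma$ supported on $\fv{s_i}$; like the paper, you detail only the list case and treat application and explicit substitution as structurally identical.
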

\begin{proof}
by induction on $s . $   In the cases where a merging of type contexts happens, such as the list case, we rely on the properties of shuffle permutations and on the fact that type contexts are \emph{linear}. Hence, the action of permutations on contexts is always fixedpoint-free. We prove the list case. Let $s = \seqdots{s}{1}{k}, $ with the following type derivations: 
\[          \pi = \begin{prooftree}\hypo{ \pi_i  }\ellipsis{}{\gamma_i \vdash s_i : \ty_i  }      \hypo{ \sigma \in \mathsf{shu}(\gamma_1, \dots, \gamma_k)  }\infer2{ \act{(\gamma_1, \dots, \gamma_k)}{\sigma} \vdash \seqdots{s}{1}{k} : \seqdots{\ty}{1}{k}  } \end{prooftree}                     \qquad   \pi' = \begin{prooftree}\hypo{ \pi'_i  }\ellipsis{}{\gamma'_i \vdash s_i : \ty'_i  }      \hypo{ \sigma' \in \mathsf{shu}(\gamma'_1, \dots, \gamma'_k)  }\infer2{ \act{(\gamma'_1, \dots, \gamma'_k)}{\sigma'} \vdash \seqdots{s}{1}{k} : \seqdots{\ty'}{1}{k}  } \end{prooftree}                    \]
with $\act{(\gamma_1, \dots, \gamma_k)}{\sigma} = \act{(\gamma'_1, \dots, \gamma'_k)}{\sigma'} .$ We prove that $ \gamma_i = \gamma'_i  $ for all $ i \in [k] . $ We know that $ \abs{\gamma_i} = \abs{\gamma'_i} ,$ since the variables appearing in both contexts are \emph{exactly} the free variables of $ s_i  ,$ by linearity. By contradiction, suppose that $ \gamma_i \neq \gamma'_i . $ This means that for some $ x, y \in \abs{\gamma_i} ,$ we have that $ \gamma_i(x) < \gamma_i (y) $ and $ \gamma'_i(y) <  \gamma'_i(x)  $ or $ \gamma_i(y) < \gamma_i (x) $ and $ \gamma'_i(x) <  \gamma'_i(y) .$ However, $\sigma  $ and $\sigma' $ are shuffles, meaning that if $ \gamma_i(x) < \gamma_i(y) $ or $ \gamma'_i (x) < \gamma'_i (y) $ then $ \sigma (\gamma_i (x)) < \sigma (\gamma_i (y))  $ and $\sigma (\gamma'_i (x)) < \sigma (\gamma'_i (y))  ,$ contradiction. Hence $\gamma_i = \gamma_j $ and $ \sigma = \sigma ' ,$ being the action of permutation fixedpoint-free, since the contexts have no repetitions of variables. We can the apply the IH, get $\pi_i = \pi'_i  $ and conclude.
\end{proof}

\subparagraph{Substitution and Reduction}
\begin{figure}[t]
\begin{align*}
& \ctx ::= \hole{\cdot} \mid  \seq{s_1, \dots, \ctx, \dots,  s_n}   \mid   \ctx [\vec{x} := t] \mid s [\vec{x} := \ctx] \mid \ctx s \mid s \ctx \mid \la{\vec{x}} \ctx \mid  f (s_1, \dots, \ctx, \dots, s_n)\\
& \ctxe ::= \hole{\cdot} \mid   \seq{s_1, \dots, \ctxe, \dots, s_n} \mid \ctxe [\vec{x} := s]  \mid s [\vec{x} := \ctxe] \quad (\ctxe \neq \hole{\cdot} ) \mid \ctxe s \mid s \ctxe \mid \la{\vec{x}} \ctxe   \mid f (s_1, \dots, \ctxe, \dots, s_n)\\
 & \mathsf{D} ::= \hole{\cdot} \mid   \seq{s_1, \dots, \mathsf{D}, \dots, s_n} \mid \mathsf{D} [\vec{x} := s] \mid s [\vec{x} := \mathsf{D}]  \mid s \mathsf{D} \mid \mathsf{D} t \quad (\mathsf{D} \neq \hole{\cdot}) \mid \la{\vec{x}} \mathsf{D} \mid f s_1, \dots, \mathsf{D}, \dots, s_n)\\
 &\ctxl ::= \hole{\cdot} \mid \ctxl [\vec{x} := t]
\end{align*}
\caption{Autonomous contexts with hole.}\hrulefill
\end{figure}

\begin{figure}[!t]
	\centering 		 	

			\begin{align*} 
			&\text{\emph{$\beta_1$ Root-Step}: }	   s [x^{\ty_1}_1, \dots, x^{\ty_k}_k := \ctxl[\seqdots{t}{1}{k}]] \to_{\beta 1}  \ctxl[\subst{s}{x_1, \dots, x_k}{t_1, \dots, t_k}  ]   
			\\
	&\text{\emph{$\beta_2$ Root-Step}: }	  \ctxl[\la{\seq{x_1^{\ty_1}, \dots, x_k^{\ty_k}}} s] \seqdots{t}{1}{k} \to_{\beta 2}  \ctxl[\subst{s}{x_1, \dots, x_k}{t_1, \dots, t_k}]            
		.	\\
	&		\text{\emph{$\eta_1$ Root-Step}: }	 s \to_{ \eta 1} \vec{x}[\vec{x}^{\tyl} := s]                              \qquad \text{where } \vec{x}  \text{ fresh }, \ \gamma \vdash s : \tyl , \ s \notin \mathsf{LT}. \\
	&		  \text{\emph{$\eta_2$ Root-Step}: }      s \to_{\eta 2}     \la{\vec{x}^{\tyl}} (s \vec{x})       \qquad \text{where } \vec{x}  \text{ fresh }, \gamma \vdash s : \tyl \multimap \ty, s \notin \mathsf{AT}.
			\\
	& \text{\emph{Contextual extensions}:}  \begin{prooftree} \hypo{  s \to_{\beta 1} s'  }\infer1{  \ctx[s] \to_{\beta 1} \ctx[s']   } \end{prooftree} \qquad  \begin{prooftree} \hypo{  s \to_{\beta 2} s'  }\infer1{  \ctx[s] \to_{\beta 2} \ctx[s']   } \end{prooftree} \qquad  \begin{prooftree} \hypo{  s \to_{\eta 1} s'  }\infer1{  \ctxe[s] \to_{\eta 1} \ctxe[s']   } \end{prooftree} \qquad  \begin{prooftree} \hypo{  s \to_{\eta 2} s'  }\infer1{  \mathsf{D}[s] \to_{\eta 2} \mathsf{D}[s']   } \end{prooftree} \\ & \qquad \qquad {\to_\beta} = {\to_{\beta 1} \cup \to_{\beta 2} } \qquad {\to_\eta} = {\to_{\eta 1} \cup \to_{\eta 2} } \qquad  {\toaut} = {\to_{\beta}   \cup \to_{\eta}}  .& \\
		 & \text{   \emph{Structural equivalence}:} \quad \ctx[s [\vec{x} := t]  ] \streq \ctx[s] [\vec{x} := t] \qquad \vec{x} \notin \fv{\ctx}    .
\end{align*}	\caption{Autonomous reduction relations and structural equivalence.} 
\hrulefill
	\label{fig:toaut}
\end{figure}

\begin{definition}
Let $ s \in \autTerms $ and $   x_1,\ dots, x_k \in \mathcal{V} . $ We define the list of \emph{occurrences} of $x_1, \dots, x_ks $ in $s $ by induction as follows: 
\[   \mathsf{occ}_{\vec{x}} (x) = \begin{cases}    x & \text{ if } x\in \vec{x} ; \\ \seq{} & \text{ otherwise.}       \end{cases}      \qquad  \mathsf{occ}_{\vec{x}}(\seqdots{s}{1}{k}) = \bigoplus \mathsf{occ}_{\vec{x}} (s_i )    \] \[\mathsf{occ}_{\vec{x}}(s[\vec{y} := u]) = \mathsf{occ}_{\vec{x}} (s) :: \mathsf{occ}_{\vec{x}}(u)   \qquad    \mathsf{occ}_{\vec{x}}(s t) = \mathsf{occ}_{\vec{x}}(s) \oplus  \mathsf{occ}_{\vec{x}} ( t ) \qquad \mathsf{occ}_{\vec{x}}( \la{\vec{y} } s) = \mathsf{occ}_{\vec{x}} (s) .  \]
\end{definition}

From now on, whenever we deal with linear substitutions we shall always assume that they are well-defined, hence the substituted variables are contained in the free variables of the considered term.

\begin{definition}[Linear Substitution]
Let $  s,  t_1, \dots, t_k \in \autTerms $ and $ \  x_1, \dots, x_k  \subseteq \fv{s}. $ We define the \emph{($k $-ary linear) substitution} of $ x_1, \dots, x_k$ by $ t_1, \dots, t_k $ in $s $ by induction as follows:
\[     \subst{x}{x}{t} = t \qquad \subst{x}{}{} = x \quad (\text{ if } k = 0) \] \[     \subst{\seqdots{s}{1}{n}}{\vec{x}}{\vec{t}} =  \seq{\subst{s_1}{\vec{x}_1}{\vec{t}_1}, \dots, \subst{s_n}{\vec{x}_n}{\vec{t}_n}} \quad (\vec{x} = \bigoplus \vec{x}_i, \vec{t} = \bigoplus \vec{t}_i \text{ and } \vec{x}_i = \mathsf{occ}_{\vec{x}}(s_i) ) \]
\[   \subst{s_1[\vec{y} := s_2 ]}{\vec{x}}{\vec{t}} = \subst{s_1}{\vec{x}_1}{\vec{u}_1}[\vec{y} := \subst{s_2}{\vec{x}_2}{\vec{t}_2}]  \quad (\vec{x} = \vec{x}_1 :: \vec{x}_2, \vec{t} = \vec{t}_1 :: \vec{t}_2 \text{ and } \vec{x}_i = \mathsf{occ}_{\vec{x}}(s_i)  \]\[     \subst{s_1 s_2}{\vec{x}}{\vec{t}} =  \subst{s_1}{\vec{x}_0}{\vec{t}_1}\subst{s_2}{\vec{x}_2}{\vec{t}_2} \quad (\vec{x} = \vec{x}_1 \oplus \vec{x}_2, \vec{t} = \vec{t}_1 \oplus \vec{t}_2 \text{ and } \vec{x}_i = \mathsf{occ}_{\vec{x}}(s_i) ) \]
\[   \subst{\la{\vec{y}} s}{\vec{x}}{\vec{t}} = \begin{cases} \la{\vec{y}} \subst{s}{\vec{x}}{\vec{t}}  	&  \text{ if }  \vec{x} \cap \vec{y} = \emptyset  ;\\ \la{\vec{y}} s &  \text{ otherwise.}  \end{cases} \]
\end{definition}

 Contexts and reduction relation $\toaut$, together with its subreductions $\beta$ and $ \eta  $  are defined by putting together the reductions $ \torp $ (Figure \ref{fig:torp}) and $\to_{\mathsf{sc}} $ (Figure \ref{fig:toc}). The embedding of representable terms and symmetric closed terms into autonomous ones preserves the reduction.

\begin{proposition}\label{embred} Let $ \clubsuit \in \{   \beta, \eta \} ,$ $s, s' \in \repTerms (\mathcal{R})  $ and $t, t' \in \rTerms (\mathcal{L})  .$ $ s \torp^{\clubsuit} s' $  iff $ s \toaut^{\clubsuit} s' $   and $ t \to_{\mathsf{sc}}^{\clubsuit} t' $  iff $ t \toaut^{\clubsuit} t' .$
\end{proposition}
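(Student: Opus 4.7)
The plan is to prove both biconditionals by structural induction on a single reduction step, treating the iterated case trivially. The forward direction in each case is essentially by inspection: the root-step rules $\beta$ and $\eta$ of Figure~\ref{fig:torp} are literally the rules $\beta_1$ and $\eta_1$ of the autonomous system, while the rules $\beta$ and $\eta$ of Figure~\ref{fig:toc} are literally $\beta_2$ and $\eta_2$; similarly, every representable context is an autonomous context, and every symmetric closed context is an autonomous context. So $s \torp^\clubsuit s'$ (resp.\ $t \to_{\mathsf{sc}}^\clubsuit t'$) exhibits $s \toaut^\clubsuit s'$ (resp.\ $t \toaut^\clubsuit t'$) directly.

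The backward direction is the more interesting half, and the argument rests on a \emph{syntactic/typing closure} observation. For the representable case, a term $s \in \repTerms(\mathcal{R})$ contains neither $\lambda$-abstractions nor linear applications, and its type (along with the types of all its subterms, by Proposition~\ref{subredrep} applied in reverse via unique typing) lives in the representable fragment $\mathcal{R}_0$, which contains no arrow $\multimap$. Hence in any step $s \toaut^\clubsuit s'$: (i) the $\beta_2$ root-step cannot fire, since its redex is an application $\ctxl[\la{\vec{x}}s]\seq{\vec t}$ and $s$ contains no applications; (ii) the $\eta_2$ root-step cannot fire anywhere in $s$, since it requires a subterm of arrow type, and no subterm of a representable term carries an arrow type. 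Thus only $\beta_1$ and $\eta_1$ root-steps are available, and these are precisely the $\beta$ and $\eta$ of $\torp$. Moreover, any autonomous context $\ctx$ (or restricted $\ctxe$) sitting between the root of $s$ and the reduced subterm can involve only constructors appearing in $s$, hence only the representable context formers; so it is in fact a representable context. This gives $s \torp^\clubsuit s'$ and, \emph{en passant}, $s' \in \repTerms(\mathcal{R})$.

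The symmetric closed case is entirely dual. A term $t \in \rTerms(\mathcal{L})$ has no list or explicit substitution subterms, and all subterm types lie in $\mathcal{L}_0$, which has no tensor at its outermost layer other than through $\multimap$. Consequently, in any autonomous step $t \toaut^\clubsuit t'$: (i) $\beta_1$ cannot fire since there are no explicit substitution redexes $s[\vec{x} := \ctxl[\seq{\vec t}]]$; (ii) $\eta_1$ cannot fire since it requires a subterm whose type is a tensor, which no subterm of a symmetric closed term can carry. So only $\beta_2$ and $\eta_2$ occur, which are exactly the $\beta$ and $\eta$ of $\to_{\mathsf{sc}}$, and the surrounding autonomous context must be a symmetric closed context by the same syntactic closure argument.

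The main subtlety, and the one step I would be most careful with, is the $\eta$-context argument: the restricted contexts $\ctxe$ of the two source calculi and the autonomous $\ctxe$ and $\mathsf{D}$ are defined slightly differently, so I would verify explicitly that restricting the autonomous $\ctxe$ to a representable-only body reproduces the representable $\ctxe$ of Figure~\ref{fig:torp}, and restricting $\mathsf{D}$ to a symmetric-closed-only body reproduces the $\ctxe$ of Figure~\ref{fig:toc}. Once this comparison of context grammars is in hand, the rest is a short case analysis on the inductive definition of the reduction judgement.
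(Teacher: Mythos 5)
Your overall strategy is the natural one (the paper itself states this proposition without giving a proof), and both the forward direction and the representable half of the backward direction are correct as you argue them: representable types contain no arrow, representable terms contain no application or abstraction, so $\beta_2$ and $\eta_2$ can never fire, and restricting the autonomous context grammars to representable bodies gives back the grammars of Figure~\ref{fig:torp}. The closed half of the backward direction, however, rests on a false premise. Treating the embedding as an inclusion, as the paper does, a closed application $s\seq{t_1,\dots,t_k}$ is read as the autonomous application of $s$ to the \emph{list term} $\seq{t_1,\dots,t_k}$: the autonomous application rule types its argument with a tensor type, so these argument lists are genuine subterms of the embedded term and they do carry tensor types. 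Hence your claims that an embedded closed term ``has no list subterms'' and that ``no subterm carries a tensor type'' are both wrong, and with them your stated reason why $\to_{\eta 1}$ cannot fire collapses. The conclusion still holds, but for a different reason: the only tensor-typed subterms of an embedded closed term are exactly these argument lists, which belong to $\mathsf{LT}$, and the $\eta_1$ root step explicitly requires $s \notin \mathsf{LT}$; every other subterm has a type in $\mathcal{L}_0$, hence an atom or an arrow. So the exclusion of $\eta_1$ must be argued via the $\mathsf{LT}$ side condition, not via the absence of tensor-typed subterms. (Your exclusion of $\beta_1$ is fine, since it needs an explicit substitution constructor, which the embedding never creates.)

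Two smaller points. First, the autonomous $\beta_2$ root step carries a substitution context $\ctxl$ around the abstraction (action at a distance), so it is not literally the closed $\beta$ rule; on embedded closed terms $\ctxl$ is forced to be the hole, which is what makes the two coincide, and this should be said explicitly. Second, the context comparison you rightly flag as the delicate step is precisely where the list subterms matter: an autonomous $\mathsf{D}$-decomposition of an embedded closed term may place its hole inside an argument list, and that decomposition is what corresponds to the closed context former $s\seq{s_1,\dots,\ctxe,\dots,s_k}$ of Figure~\ref{fig:toc}; if there really were no list subterms, this correspondence could not even be stated. Once you replace the ``no tensor-typed subterms'' claim by the $\mathsf{LT}$ argument and carry out the context-grammar comparison with the lists present, the proof goes through.
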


The autonomous reductions are defined in Figure \ref{fig:toaut}. In the structural equivalence rule, we assume that $ \ctx$ does not bind variables of $ t.$ The structural equivalence is defined as the smallest congruence generated by the rule in Figure \ref{fig:toaut}.

\begin{lemma}[Subject Substitution]\label{aut:subjsubst}
Let $ \gamma_i \vdash t_i : \ty_i $ and $ \delta_1, x_1 : \ty_1, \dots, \ty_k, \delta_2 \vdash s : \ty . $  We have that $ \delta_1,  \gamma_1, \dots, \gamma_k, \delta_2  \vdash \subst{s}{x_1, \dots, x_k}{t_1, \dots, t_k} : \ty .$
\end{lemma}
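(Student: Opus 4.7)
The plan is to proceed by induction on the structure of $s$ (equivalently, on its unique type derivation, as granted by Proposition \ref{aut:canonsym}). Since the defining clauses of the linear substitution $\subst{s}{\vec{x}}{\vec{t}}$ mirror the term constructors, and since linearity guarantees that each $x_i$ occurs in exactly one subterm at each step, the substitution splits $\vec{x}$ and $\vec{t}$ canonically along the occurrence lists $\mathsf{occ}_{\vec{x}}(s_j)$. At each inductive step we then reconstruct the type derivation using the very rule that formed $s$, feeding in the derivations produced by the induction hypothesis.

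For the base case $s = y$: if $y = x_i$ then necessarily $k=1$ and $\delta_1 = \delta_2 = \emptyset$, so $\subst{x_i}{x_i}{t_i} = t_i$, and the conclusion is exactly the assumed typing of $t_i$; otherwise $k = 0$ and the substitution is trivial. For the compound cases $\seqdots{s}{1}{n}$, $\la{\vec{y}} s'$, $s_1 s_2$, $s_1[\vec{y} := s_2]$, and $f(s_1, \dots, s_n)$, I would partition $\vec{x} = \vec{x}_1 \mathbin{::} \dots \mathbin{::} \vec{x}_n$ along the subterms (with the corresponding partition of $\vec{t}$), apply the induction hypothesis to each $\subst{s_j}{\vec{x}_j}{\vec{t}_j}$, and finally apply the same typing rule. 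In each case this requires a substitution lemma for lists of typing judgments, which follows unpacking Lemma \ref{shucan} together with admissibility of the permutation rule (Proposition \ref{permrule}) so as to bring the $x_i$ block into position whenever a subterm uses only part of it.

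The main bookkeeping obstacle will be the shuffle-permutation constructors (list, application, explicit substitution). For instance, in the list case a derivation of $\seqdots{s}{1}{n}$ has conclusion $\act{(\gamma'_1,\dots,\gamma'_n)}{\sigma} \vdash \seqdots{s}{1}{n} : \seqdots{\ty'}{1}{n}$ with $\sigma \in \mathsf{shu}(\gamma'_1,\dots,\gamma'_n)$; after substitution each $\gamma'_j$ is rewritten into a larger block in which the portion of $x_i$s occurring in $s_j$ has been replaced by the corresponding $\gamma_i$s (and the surrounding $\delta_1,\delta_2$ material distributed accordingly). One must check that the induced permutation on the expanded context is again a shuffle; this is where the contiguity of $x_1,\dots,x_k$ in the hypothesis genuinely helps, since it guarantees that the blocks substituted in are themselves contiguous, so monotonicity on each new block is preserved.

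Once the shuffle side-condition is handled in one case, the abstraction and application cases are routine: the abstraction case uses the clause $\subst{\la{\vec{y}} s'}{\vec{x}}{\vec{t}} = \la{\vec{y}} \subst{s'}{\vec{x}}{\vec{t}}$ (the capture-avoidance clause cannot occur, since $\vec{y}$ is bound and hence disjoint from $\vec{x}$ by the renaming convention) followed by the $\multimap$-introduction rule; the application case and the explicit substitution case combine two instances of the induction hypothesis with the corresponding elimination rule, reusing exactly the shuffle argument of the list case. The function-symbol case is a straightforward instance of the same pattern. This closes the induction and yields the claimed typing of $\subst{s}{x_1,\dots,x_k}{t_1,\dots,t_k}$.
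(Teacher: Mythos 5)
Your proposal matches the paper's proof in its essentials: both proceed by structural induction on $s$, splitting $\vec{x}$ and $\vec{t}$ along the occurrence lists dictated by the defining clauses of linear substitution, applying the induction hypothesis to each subterm, and reassembling with the very typing rule that formed $s$ (variable case immediate, abstraction case using the non-capture clause, list/application/substitution cases by recombining the premises). The only difference is that you make explicit the shuffle-permutation bookkeeping (via Lemma~\ref{shucan} and the admissible permutation rule of Proposition~\ref{permrule}, needed because the $x_i$ free in a given subterm need not sit contiguously in that premise's context), a point the paper's list case leaves implicit by writing the conclusion context as a plain concatenation of the premise contexts; this extra care is sound and only sharpens the argument.
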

\begin{proof}
By induction on the structure of $ s.$ The variable case is immediate.
 
let $ s = \la{\vec{y}} s' . $ By definition we have $ \subst{\la{\vec{y}} s'}{\vec{x}}{\vec{t}} = \la{\vec{y}} \subst{s'}{\vec{x}}{\vec{t}}. $ We then apply the IH and conclude.

Let $ s = \seqdots{s}{1}{k} $ with $ \zeta_i \vdash s_i : b_i $ for $i \in [n] $ and $  \zeta_1, \dots, \zeta_n =  \delta_1, x_1 : \ty_1, \dots, \ty_k, \delta_2 .$ Hence, there exist decompositions  $ x_1, \dots, x_k = \vec{x}_1 :: \dots :: \vec{x}_n, \ty_1, \dots, \ty_k = \tyl_1 :: \dots :: \tyl_n $ and $ \vec{x}_j \in \fv{s_j}  $ for all $ j \in [n] $ s.t. $  \zeta_j = (\zeta^{1}_j, \vec{x}_j : \tyl_j , \zeta^2_j ) \vdash s_j : b_j .  $  By IH we have that $ \zeta^1_j, \vec{\gamma}_j, \zeta^2_j \vdash \subst{s_j}{\vec{x}_j}{\vec{t}_j} : b_j .  $ By definition of substitution we have $\subst{s}{\vec{x}}{\vec{t}} = \seq{\subst{s_1}{\vec{x}_1}{\vec{t}_1}, \dots, \subst{s_n}{\vec{x}_n}{\vec{t}_n}} $. Then $  \zeta^1_1, \vec{\gamma}_1, \zeta^2_1, \dots, \zeta^1_n, \vec{\gamma}_n, \zeta^2_n \vdash \subst{s}{\vec{x}}{\vec{t}} : \ty .$ By definition of typing, we know that $ (\delta_1,  \gamma_1, \dots, \gamma_k, \delta_2 ) = (  \zeta^1_1, \vec{\gamma}_1, \zeta^2_1, \dots, \zeta^1_n, \vec{\gamma}_n, \zeta^2_n ) .  $ We can then conclude.

The other cases follow a pattern similar to the list case.
%Let $ s = s'[\vec{y} := t] $ with $ ( \delta_1, x_1 : \ty_1, \dots, x_k : \ty_k, \delta_2 ) =  ( \zeta_1, \zeta , \zeta_2) $ and $ \zeta_1, \vec{y} : \vec{b} , \zeta_2 \vdash s' : \ty  $ and $ \zeta \vdash t : \vec{b} . $ The proof follows a pattern similar to the list case.
\end{proof}

\begin{theorem}[Subject Reduction and Equivalence]\label{aut:subjred}
Let $ s \toaut s' $ or $ s \streq s' $ and $ s \in \autTerms(\mathcal{A})(\gamma; \ty) .$ Then $ \gamma \vdash_{\mathsf{aut}} s' : \ty .$
\end{theorem}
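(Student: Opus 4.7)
The plan is to prove the two statements by a simultaneous induction on the derivation of $s \toaut s'$ and of $s \streq s'$, following the same strategy already used for the representable case in Proposition \ref{subredrep} but now combining the representable and symmetric-closed reductions and adding the contextual cases for $\lambda$-abstractions and applications. The key auxiliary fact is the substitution lemma for the autonomous calculus (Lemma \ref{aut:subjsubst}), which I would invoke in all four root steps. I would also use Proposition \ref{aut:canonsym} so that, whenever I destructure a typing derivation, the subderivations are determined uniquely by the subterms.

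First, I would treat the four root reduction steps. For the $\beta_1$ step $s[\vec{x}^{\tyl}:=\ctxl[\seqdots{t}{1}{k}]] \to_{\beta_1} \ctxl[\subst{s}{\vec{x}}{\vec{t}}]$, I would argue by secondary induction on the substitution context $\ctxl$: the base case $\ctxl=\hole{\cdot}$ reduces to Lemma \ref{aut:subjsubst}, while the inductive case $\ctxl = \ctxl'[\vec{y}:=u]$ follows by invoking the inductive hypothesis and the typing rule for explicit substitution, using that the contexts involved are disjoint. The $\beta_2$ step is analogous, going through the same substitution lemma after peeling off an outer $\lambda$-abstraction and a substitution context. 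For $\eta_1$, $s \to_{\eta_1} \vec{x}[\vec{x}^{\tyl}:=s]$ with $\gamma \vdash s : \tyl$, I would exhibit directly the type derivation obtained by applying the list rule to the identity derivations of the fresh $x_i : \ty_i$ followed by the explicit substitution rule with $s$, using that $\vec{x}$ fresh makes the relevant contexts disjoint. For $\eta_2$, $s \to_{\eta_2} \la{\vec{x}^{\tyl}}(s\vec{x})$ with $\gamma \vdash s : \tyl\multimap\ty$, I would build the type derivation by applying the application rule to $s$ and the variables, then closing with the $\lambda$-abstraction rule.

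Second, I would handle the contextual extensions by a straightforward induction on the shape of $\ctx$, $\ctxe$, $\mathsf{D}$, using the inductive hypothesis on the redex position and reconstructing the outer typing rule unchanged; here linearity and disjointness of contexts ensure that plugging the rewritten subterm back preserves the shape of the original derivation. For the structural equivalence $\ctx[s[\vec{x}:=t]] \streq \ctx[s][\vec{x}:=t]$ (with $\vec{x}\notin \fv{\ctx}$), I would proceed by induction on $\ctx$, checking in each syntactic constructor that the two derivations yield the same conclusion: the side condition $\vec{x}\notin \fv{\ctx}$ precisely guarantees that the typing contexts on both sides coincide (up to the shuffle that merges $\gamma$ with the context of $\ctx$), and the congruence cases are immediate from the inductive hypothesis.

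The main obstacle I expect is the $\beta_1$ step in the presence of a nontrivial substitution context $\ctxl$: one must carefully push the outer substitutions past the $\beta$-contracted redex and check that the shuffle permutations on typing contexts reassemble correctly. This is essentially a bookkeeping argument about how the disjoint contexts $\delta_1,\gamma,\delta_2$ of the explicit substitution interact with the nested substitutions inside $\ctxl$, and it relies crucially on linearity (so no variable is duplicated) together with Proposition \ref{aut:canonsym} to recover uniqueness of the decomposition. Once this is in place, the remaining cases reduce, as in the representable and symmetric closed calculi, to a direct application of the substitution lemma and an inspection of the inference rules.
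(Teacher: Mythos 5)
Your proposal is correct and follows essentially the same route as the paper: induction on the derivation of $s \toaut s'$ (and of $s \streq s'$), with the root steps discharged by the substitution lemma (Lemma \ref{aut:subjsubst}) and the contextual cases by the induction hypothesis. The extra detail you supply --- the secondary induction on the substitution context $\ctxl$ for the $\beta$ steps, the explicit derivations for the $\eta$ steps, and the induction on $\ctx$ for structural equivalence --- simply makes explicit what the paper's one-line proof leaves implicit.
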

\begin{proof}
By induction on $ s \toaut s' $. The base case is a corollary of the former lemma, the other cases are a direct application of the IH.
\end{proof}

\begin{lemma}
We have that $ \subst{\subst{s}{\vec{x}}{\vec{t}}}{\vec{y}}{\vec{u}} = \subst{\subst{s}{\vec{y}_1}{\vec{u}_1}}{\vec{x}}{\subst{\vec{t}}{\vec{y}_2}{\vec{u}_2}}   $ with $ \vec{u} = \vec{u}_1 :: \vec{u}_2 , \vec{y} = \vec{y}_1 :: \vec{y_2}$.
\end{lemma}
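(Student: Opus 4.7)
The plan is to prove the equality by structural induction on $s$, exploiting the linearity property (each variable has exactly one occurrence in a typed term) to ensure the splits $\vec{y} = \vec{y}_1 :: \vec{y}_2$ and $\vec{u} = \vec{u}_1 :: \vec{u}_2$ are determined canonically by $\mathsf{occ}$, with $\vec{y}_1 = \mathsf{occ}_{\vec{y}}(s)$ (the occurrences hitting the body of $s$, i.e., surviving the first substitution) and $\vec{y}_2$ the remainder (occurrences absorbed into the images $\vec{t}$). Linearity guarantees the sets $\vec{y}_1$ and $\vec{y}_2$ are disjoint and that each $y_j$ actually appears exactly once in $\subst{s}{\vec{x}}{\vec{t}}$, so the split is well-defined and matches the two sides of the purported identity.

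The base case $s = z \in \mathcal{V}$ splits into three subcases: $z \in \vec{x}$ (so $z = x_i$), $z \in \vec{y}_1$, or neither. In the first subcase, $\subst{s}{\vec{x}}{\vec{t}} = t_i$ and all of $\vec{y}$ hitting $z$ lie in $\vec{y}_2$; both sides reduce to $\subst{t_i}{\vec{y}_2}{\vec{u}_2}$. In the second subcase, both sides reduce to $u_j$ because $u_j$ contains no variable of $\vec{x}$ nor $\vec{y}_2$ (again by linearity). In the third subcase, both sides reduce to $z$ itself. The $\lambda$-abstraction case follows directly from the induction hypothesis, since $\subst{(-)}{\vec{x}}{\vec{t}}$ traverses under a binder that is disjoint from $\vec{x} \cup \vec{y}$ by the Barendregt convention.

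The inductive cases for $\seqdots{s}{1}{n}$, $s_1 s_2$, $s_1[\vec{z} := s_2]$, and $f(s_1,\dots,s_n)$ are all structurally identical: one splits each list $\vec{x}$, $\vec{t}$, $\vec{y}$, $\vec{u}$ according to $\mathsf{occ}$ into the contributions from each immediate subterm, then applies the induction hypothesis to each subterm and re-assembles. The defining equations of capture-avoiding substitution on these constructors are designed precisely so that such a decomposition commutes with iterated substitution, so no further ingredient is needed.

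The main obstacle will be notational rather than conceptual: keeping track of the double splitting of both $\vec{y}$ (across the constructors of $s$) and $\vec{y}_2$ (across the individual $t_i$'s) simultaneously can become unwieldy. The cleanest way to manage this is to set up the induction with the auxiliary observation that $\mathsf{occ}_{\vec{y}}(\subst{s}{\vec{x}}{\vec{t}}) = \mathsf{occ}_{\vec{y}}(s) :: \mathsf{occ}_{\vec{y}}(t_1) :: \cdots :: \mathsf{occ}_{\vec{y}}(t_k)$ (suitably permuted), which follows by an easy induction from the definitions and makes the split $\vec{y}_1, \vec{y}_2$ transparent in every inductive case.
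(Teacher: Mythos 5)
Your proposal is correct and follows essentially the same route as the paper: a structural induction on $s$, using the definition of linear substitution and $\mathsf{occ}$ to split the lists $\vec{x},\vec{t},\vec{y},\vec{u}$ across subterms, with the variable case resolved by the standing convention that substituted variables lie among the free variables. Your explicit auxiliary observation about $\mathsf{occ}_{\vec{y}}(\subst{s}{\vec{x}}{\vec{t}})$ and your treatment of the binder cases merely make explicit what the paper leaves implicit.
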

\begin{proof}
By induction on $s . $ If $s = x  $ and $ \vec{x} = x , \vec{t} = t $ then $\vec{y} \in \fv{t} $ and by definition we can conclude. If $ s = \seqdots{s}{1}{k} ,$ by definition $ \subst{s}{\vec{x}}{\vec{t}} = \seq{\subst{s}{\vec{x}_1}{\vec{t}_1}, \dots, \subst{s_k}{\vec{x}_k}{\vec{t}_k}} . $ Again, by definition of substitution $\subst{\subst{s}{\vec{x}}{\vec{t}}}{\vec{y}}{\vec{u}} =  \seq{\subst{\subst{s_1}{\vec{x}_1}{\vec{t}_1}}{\vec{y}_1}{\vec{u}_1} , \dots, \subst{\subst{s_k}{\vec{x}_k}{\vec{t}_k}}{\vec{y}_k}{\vec{u}_k} }.$ We can then apply the IH and conclude. The explicit substitution case follows a similar pattern. 
\end{proof}

\begin{lemma}\label{aut:repeqsub} The following statements hold. If $ s \toaut s' $ then $ \subst{s}{\vec{x}}{\vec{t}} \toaut^\ast \subst{s'}{\vec{x}}{\vec{t}} $, if $ \vec{t} \toaut \vec{t}' $ then $ \subst{s}{\vec{x}}{\vec{t}} \toaut^\ast \subst{s}{\vec{x}}{\vec{t}'} ; $ if $ s =_{\mathsf{aut}} s'  $ and $ \vec{t} =_{\mathsf{aut}} \vec{t}' $ then $ \subst{s}{\vec{x}}{\vec{t}}  =_{\mathsf{aut}} \subst{s'}{\vec{x}}{\vec{t'}} ;$ if $ s \streq s'  $ and $ \vec{t} \streq \vec{t}' $ then $ \subst{s}{\vec{x}}{\vec{t}} \streq \subst{s'}{\vec{x}}{\vec{t'}} .$ 
\end{lemma}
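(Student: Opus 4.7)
The plan is to prove each of the four claims by induction on the derivation of the hypothesis, relying on the iterated substitution identity established in the immediately preceding lemma. For claim (1), I induct on the one-step derivation $s \toaut s'$. Each root case is handled by applying the iterated substitution identity to identify $\subst{s'}{\vec{x}}{\vec{t}}$ with the natural reduct of $\subst{s}{\vec{x}}{\vec{t}}$. In the $\beta_1$ case the substituted term $\subst{p}{\vec{x}_1}{\vec{t}_1}[\vec{y} := \subst{\ctxl[\seqdots{q}{1}{k}]}{\vec{x}_2}{\vec{t}_2}]$ is still a $\beta_1$-redex, because the shape of the substitution context $\ctxl$ is preserved under substitution, and its $\beta_1$-reduct coincides with $\subst{\ctxl[\subst{p}{\vec{y}}{\vec{q}}]}{\vec{x}}{\vec{t}}$ by the iterated substitution identity; $\beta_2$ is symmetric. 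For the $\eta$-root cases, the freshness of the introduced variables lets the substitution commute cleanly with the rule. For the contextual extensions, one checks by inspection that each context shape ($\ctx, \ctxe, \mathsf{D}, \ctxl$) is preserved under leaf substitution, then applies the IH.

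Claim (2) follows by induction on the structure of $s$, using linearity: each $x_i$ occurs exactly once in $s$. The only nontrivial base case is $s = x_i$ for the component index where $t_i \toaut t'_i$, and the inductive cases propagate the reduction of $t_i$ through the unique occurrence of $x_i$ while leaving the other branches unchanged. Claim (3) follows by decomposing $s =_{\mathsf{aut}} s'$ and $\vec{t} =_{\mathsf{aut}} \vec{t}'$ into finite zigzag chains of single-step $\toaut$ moves, applying (1) and (2) along each step, and composing by symmetry and transitivity: $\subst{s}{\vec{x}}{\vec{t}} =_{\mathsf{aut}} \subst{s'}{\vec{x}}{\vec{t}} =_{\mathsf{aut}} \subst{s'}{\vec{x}}{\vec{t}'}$.

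For claim (4), I induct on the derivation of $s \streq s'$. The generating axiom $\ctx[u[\vec{y} := v]] \streq \ctx[u][\vec{y} := v]$ is preserved by substitution because the partitioning of $\vec{x}$ across the three syntactic slots is the same on both sides and the side-condition $\vec{y} \notin \fv{\ctx}$ is unaffected by the capture-avoiding substitution; the congruence closure is handled by the IH. For the contribution $\vec{t} \streq \vec{t}'$, I further induct on the structure of $s$ to propagate the structural equivalence through the unique occurrence of each substituted variable.

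The main obstacle I anticipate is the $\eta$-root case of (1) when $s$ is itself a variable $x_i$: the step $x_i \to_{\eta_1} \vec{y}[\vec{y} := x_i]$ is valid (a variable is trivially outside $\mathsf{LT}$), but if $t_i \in \mathsf{LT}$ then $\subst{\vec{y}[\vec{y} := x_i]}{x_i}{t_i} = \vec{y}[\vec{y} := t_i]$ is itself a $\beta_1$-redex reducing back to $t_i$, so the forward chain $t_i \toaut^\ast \vec{y}[\vec{y} := t_i]$ is not literally available as a sequence of reductions. Reconciling this requires either reading the conclusion of (1) in the mode used by (3), namely $=_{\mathsf{aut}}$, or appealing to a reverse $\beta_1$-detour to realise the equivalence. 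The analogous subtlety affects $\eta_2$ with $\mathsf{AT}$ in place of $\mathsf{LT}$; both cases must be treated by a careful case analysis that leverages exactly the conditions that make the $\eta$-rules terminating.
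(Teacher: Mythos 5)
Your proposal follows the paper's own route step for step: induction on the reduction derivation with the $\beta$ root cases discharged by the iterated-substitution lemma proved immediately before, structural induction on $s$ for the second claim, a zigzag decomposition into single $\toaut$ steps for the $=_{\mathsf{aut}}$ claim, and induction on the congruence derivation for $\streq$. On everything the paper actually spells out (the $\beta_1$ root case, claims (2) and (3)), your argument coincides with the paper's and is correct; your treatment of claim (4) is in fact more explicit than the paper's, which does not detail it at all.

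The obstacle you flag in the $\eta$-root case is genuine, and the paper does not resolve it either: its proof dismisses everything beyond the $\beta_1$ root case with ``the other cases are a direct application of the IH'', but the $\eta$ root cases are root cases, not contextual ones, and no induction hypothesis applies to them. Concretely, take $s = x$ with $x : \tyl$ a tensor type, so $x \to_{\eta 1} \vec{y}[\vec{y} := x]$ is a legal root step (a variable is not in $\mathsf{LT}$), and substitute $t = \seq{a,b} \in \mathsf{LT}$ with $a,b$ atomically typed variables. Then $\subst{s}{x}{t} = t$ is a $\toaut$-normal form, while $\subst{(\vec{y}[\vec{y} := x])}{x}{t} = \vec{y}[\vec{y} := t]$ is a different term, so $t \toaut^{\ast} \vec{y}[\vec{y} := t]$ fails; only the converse step $\vec{y}[\vec{y} := t] \to_{\beta 1} t$ is available. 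The same happens for $\eta_2$ with $\mathsf{AT}$ in place of $\mathsf{LT}$. So the first claim, read literally, requires exactly the repair you sketch: either weaken its conclusion to $=_{\mathsf{aut}}$ in the case where the substituted term violates the $\mathsf{LT}$/$\mathsf{AT}$ side condition (which is all that claims (3)--(4) and the equational downstream uses need), or supply the reverse $\beta$ detour explicitly where the lemma is invoked with $\toaut^{\ast}$, notably in the local-confluence argument. Your proposal is thus not weaker than the paper's proof --- it is the same proof with the problematic case identified rather than silently skipped --- but to be complete you must commit to one of these repairs instead of leaving the case open, and your interim assertion that the $\eta$-root cases ``commute cleanly'' should be deleted, since your own closing paragraph shows it is false as stated.
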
 
\begin{proof}
\begin{enumerate}
\item By induction on $ s \toaut s' . $ Let $ s = p[\vec{y} := \vec{u}] $ and $ s' = \subst{p}{\vec{y}}{\vec{u}} .$ We need to prove that $ \subst{p[\vec{y} := \vec{u}]}{\vec{x}}{\vec{t}} \toaut  \subst{\subst{p}{\vec{y}}{\vec{u}}}{\vec{x}}{\vec{t}}  . $ By definition of linear substitution, we have $ \subst{p[\vec{y} := \vec{u}]}{\vec{x}}{\vec{t}}  = \subst{p}{\vec{x}_1}{\vec{t}_1}[\vec{y} := \subst{\vec{q}}{\vec{x}_2}{\vec{t}_2}]$ and $\subst{p}{\vec{x}_1}{\vec{t}_1}[\vec{y} = \subst{\vec{q}}{\vec{x}_2}{\vec{t}_2}] \toaut \subst{\subst{p}{\vec{x}_1}{\vec{t}_1}}{\vec{y}}{\subst{\vec{q}}{\vec{x}_2}{\vec{t}_2}} . $ By the former lemma, we can conclude that \[ \subst{\subst{p}{\vec{x}_1}{\vec{t}_1}}{\vec{y}}{\subst{\vec{q}}{\vec{x}_2}{\vec{t}_2}} =  \subst{\subst{p}{\vec{y}}{\vec{u}}}{\vec{x}}{\vec{t}}.\]  The other cases are a direct application of the IH. 
\item By induction on $ s .$ If $s  =x  $ and $ \vec{t} = t  $ then $\subst{s}{\vec{x}}{\vec{t}} = t $ and the result is immediate. The other cases are direct applications of the IH.
\item We remark that if $ s \repaut s' $ then there exists $ n \in \mathbb{N} $ and terms $ s_i $ s.t. $    s = s_0  \leftrightarrow_{\mathsf{re}}  s_1 \dots  \leftrightarrow_{\mathsf{re}} s_n = s'        , $ where $\leftrightarrow $ stands for the reflexive symmetric closure of $\torp . $ We prove the result by induction on $ n. $ If $ n = 0 ,$ we need to prove that if $ \vec{t} \repeq \vec{t}' $ then $ \subst{s}{\vec{x}}{\vec{t}} \repeq \subst{s}{\vec{x}}{\vec{t}'} . $ We do it by induction on $m $ s.t. $  \vec{t} = \vec{t}_0  \leftrightarrow_{\mathsf{re}}  \vec{t}_1 \dots  \leftrightarrow_{\mathsf{re}} \vec{t}_m = \vec{t}' .        $ If $ m = 0$ the result is immediate by reflexivity. If $ m = p + 1 , $ by IH $ \subst{s}{\vec{x}}{\vec{t}} \repaut \subst{s}{\vec{x}}{\vec{t}_p} .$ Since $  \vec{t}_p \leftrightarrow \vec{t}_{p + 1} , $ by definition this means that either $ \vec{t}_p = \vec{t}_{p + 1} $ or $ \vec{t}_p \toaut \vec{t}_{p + 1} .$ The first case is immediate, the second is a direct corollary of the former point of this lemma. If $ n = p + 1 $, the result is a direct corollary of the first point of this lemma.
\end{enumerate}
\end{proof}

 We now prove that the structural equivalence is a strong bisimulation for the reduction $ \torp . $ Intuitively, this means that the equivalence does not affect the rewriting of terms.
 
\begin{lemma}
We have that $    \subst{p [\vec{y} := q]}{\vec{x}}{\vec{t}} \streq       \subst{p }{\vec{x}}{\vec{t}} [\vec{y} := q]$ and $\subst{p }{\vec{x}}{\vec{t}[\vec{y} := q]}  \streq       \subst{p }{\vec{x}}{\vec{t}} [\vec{y} := q]$
\end{lemma}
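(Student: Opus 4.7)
The plan is to dispatch the first equation almost directly from the definition of linear substitution, then prove the second by structural induction on $p$, using the structural equivalence rule $\ctx[s[\vec{y}:=q]] \streq \ctx[s][\vec{y}:=q]$ to hoist the explicit substitution past each outer constructor.

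For the first equation, I would unfold the definition of linear substitution applied to the outermost constructor of $p[\vec{y}:=q]$, obtaining
\[
\subst{p[\vec{y}:=q]}{\vec{x}}{\vec{t}} \;=\; \subst{p}{\vec{x}_1}{\vec{t}_1}[\vec{y} := \subst{q}{\vec{x}_2}{\vec{t}_2}]
\]
with $\vec{x} = \vec{x}_1 :: \vec{x}_2$ and $\vec{x}_i = \mathsf{occ}_{\vec{x}}(\cdot)$. Well-formedness of the RHS $\subst{p}{\vec{x}}{\vec{t}}[\vec{y}:=q]$ combined with linearity forces $\vec{x} \subseteq \fv{p}$, hence $\vec{x}_1 = \vec{x}$ and $\vec{x}_2 = \emptyset$, and the expression collapses to $\subst{p}{\vec{x}}{\vec{t}}[\vec{y}:=q]$. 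This is literal equality and hence $\streq$.

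For the second equation, by linearity the variables $\vec{y}$ occur in exactly one component $t_j$ of $\vec{t}$, so $\vec{t}[\vec{y}:=q]$ denotes the list $\vec{t}$ with $t_j$ replaced by $t_j[\vec{y}:=q]$. The induction is on the structure of $p$. The variable case $p = x$ (forcing $\vec{x}=x$, $\vec{t}=t$) makes both sides equal to $t[\vec{y}:=q]$. For the list case $p = \seqdots{p}{1}{k}$, the definition of linear substitution splits the assignments across components, so the attached $[\vec{y}:=q]$ sits inside the single component $\subst{p_l}{\vec{x}_l}{\vec{t}_l[\vec{y}:=q]}$ containing $x_j$; the IH reduces this to $\subst{p_l}{\vec{x}_l}{\vec{t}_l}[\vec{y}:=q]$, and one application of the structural equivalence rule pulls the explicit substitution out of the surrounding list context. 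The remaining cases (application, $\lambda$-abstraction, explicit substitution, signature constructor $f(\ldots)$) run on exactly the same pattern: route the IH into the unique subterm receiving $t_j$, then invoke structural equivalence to hoist $[\vec{y}:=q]$ past the outer constructor.

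The main obstacle is purely bookkeeping: one must verify at each inductive step that the linearity-driven decomposition $\vec{x} = \vec{x}_1 :: \cdots :: \vec{x}_n$ correctly places $x_j$ in the same component as $t_j$, and that the $\alpha$-renaming convention keeps $\vec{y}$ disjoint from any binder ($\lambda$-abstraction or explicit substitution) through which structural equivalence must traverse, so that the side condition $\vec{y} \notin \fv{\ctx}$ of the structural equivalence rule is satisfied. Once these conditions are maintained, every inductive case is a one-line application of IH followed by one structural equivalence step.
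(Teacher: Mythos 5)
The paper states this lemma without proof, so there is no official argument to compare against; judged on its own, your proposal is correct and is exactly the routine verification the authors presumably had in mind. For the first equation you correctly observe that, under the standing convention that substitutions are well-defined, the right-hand side forces $\vec{x}\subseteq\fv{p}$, so the defining clause for explicit substitutions gives $\subst{p[\vec{y}:=q]}{\vec{x}}{\vec{t}} = \subst{p}{\vec{x}}{\vec{t}}[\vec{y}:=\subst{q}{}{}]$; the only micro-gap is that the paper's definition states the empty-substitution clause only for variables, so you need the trivial inductive fact that the empty substitution is the identity on all terms before declaring literal equality. For the second equation, your reading of $\vec{t}[\vec{y}:=q]$ (the substitution attached to the unique component $t_j$ containing $\vec{y}$, which exists by linearity and typing) is the one consistent with how the lemma is invoked in the strong-bisimulation proof, and the induction on $p$ is sound: each case is the IH, the fact that $\streq$ is a congruence, and one instance of $\ctx[s[\vec{y}:=q]]\streq\ctx[s][\vec{y}:=q]$, whose side conditions ($\vec{y}\notin\fv{\ctx}$ and $\ctx$ not binding variables of $q$) hold by the $\alpha$-renaming and disjoint-context conventions, as you note. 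An equivalent non-inductive packaging would be to take $\ctx$ to be $\subst{p}{\vec{x}}{\vec{t}}$ with a hole at the unique occurrence of $x_j$ and apply the structural-equivalence rule once, but your step-by-step hoisting proves the same thing.
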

 
 \begin{proposition}
 If $ s' \streq s  $ and $ s \toaut t $ there exists a term $ t'$ s.t.  $ t \streq t' $ and  $ s' \torp t' . $
 \end{proposition}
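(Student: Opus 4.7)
The plan is to prove the statement by induction on the derivation of $s' \streq s$, which by the definition of $\streq$ as the smallest congruence generated by the axiom $\ctx[p[\vec{x}:=q]] \streq \ctx[p][\vec{x}:=q]$ (with $\vec{x}\notin\fv{\ctx}$) reduces, via the usual congruence-closure induction, to the base case where $s$ and $s'$ differ by a single application of the axiom at a single position of some surrounding context $\gctx$. The congruence cases (where $\streq$ occurs strictly underneath some constructor) are then handled by applying the induction hypothesis to the direct subterm and reassembling, after a further case analysis on whether the $\toaut$ step occurs in the same subterm or in a disjoint one.

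The core of the proof is therefore the base case: assume $s = \ctx[p[\vec{x}:=q]]$ and $s' = \ctx[p][\vec{x}:=q]$ with $\vec{x}\notin\fv{\ctx}$, and $s \toaut t$. I would proceed by case analysis on the relative position of the redex fired by $s \toaut t$ with respect to the ``moved'' substitution $[\vec{x}:=q]$. First, the \emph{disjoint} cases: the redex lies entirely inside $\ctx$, entirely inside $p$, or entirely inside $q$. In each disjoint case the same rule can be fired from $s'$ at the same position, producing $t'$ that differs from $t$ only by the position of the substitution $[\vec{x}:=q]$, so $t \streq t'$ by one application of the axiom. Second, the \emph{interacting} cases, where the fired redex is $[\vec{x}:=q]$ itself, or where the fired redex syntactically surrounds $[\vec{x}:=q]$ (for instance a $\beta_2$ step whose substitution context $\ctxl$ contains $[\vec{x}:=q]$), or where the fired $\eta$-redex is the subterm $p[\vec{x}:=q]$ on the left-hand side (on the right this same subterm is hidden inside a larger one). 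For the $\beta_1$ step firing on the moved substitution, both sides can still reduce: on the left we get $\ctx[\ctxl[\subst{p}{\vec{x}}{\vec{t}}]]$ and on the right $\ctxl[\subst{\ctx[p]}{\vec{x}}{\vec{t}}]$, which coincide with $\ctx[\ctxl[\subst{p}{\vec{x}}{\vec{t}}]]$ up to $\streq$, since $\vec{x}\notin\fv{\ctx}$ lets the substitution commute with $\ctx$ and $\ctxl$ is a chain of explicit substitutions that can travel through $\ctx$ via the axiom. An analogous computation covers the $\beta_2$ case: the relevant substitution context of the redex changes by $[\vec{x}:=q]$, but this is absorbed precisely by one use of the axiom after reduction. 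The $\eta_1$ and $\eta_2$ cases need the side conditions $s\notin\mathsf{LT}$ and $s\notin\mathsf{AT}$ to be preserved after moving the substitution: since an explicit substitution sitting on top of a term never turns it into a list or an abstraction, and conversely never creates/destroys such shapes, these side conditions transport correctly, and the expansion can be mimicked on $s'$, producing equivalent redexes.

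The step I expect to be the main obstacle is this last interacting case, specifically when the $\beta$-step fires ``through'' the moved substitution and one must verify that the right-hand-side substitution $\subst{\ctx[p]}{\vec{x}}{\vec{t}}$ genuinely equals $\ctx[\subst{p}{\vec{x}}{\vec{t}}]$ (using $\vec{x}\notin\fv{\ctx}$ and linearity) and that the resulting terms, despite having explicit substitutions arranged in different orders, are related by iterated applications of the axiom of structural equivalence. This requires a small auxiliary lemma ``substitution contexts can be permuted with other contexts up to $\streq$'' which itself is a straightforward induction on the substitution context $\ctxl$, using the freshness of bound names (terms are considered up to $\alpha$-renaming). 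Once this auxiliary lemma is in place, each interacting subcase reduces to a short verification, and the induction on the congruence derivation closes the argument.
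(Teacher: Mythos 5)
Your proposal is correct and follows essentially the same route as the paper: induction on the derivation of $s' \streq s$ with a case analysis on where the $\toaut$ redex sits relative to the moved explicit substitution, discharged by a commutation lemma for linear substitution and explicit substitution up to $\streq$ (the paper's version reads $\subst{p[\vec{y} := q]}{\vec{x}}{\vec{t}} \streq \subst{p}{\vec{x}}{\vec{t}}[\vec{y} := q]$, together with its companion for substitution inside the argument). The only cosmetic differences are that the paper unfolds the generating axiom constructor by constructor (abstraction, list, application, explicit substitution) instead of keeping a generic context, and that it explicitly invokes subject reduction/equivalence to justify mimicking the type-directed $\eta$-expansions on $s'$ — a point you cover only implicitly when arguing that the $\mathsf{LT}$/$\mathsf{AT}$ side conditions transport.
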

 \begin{proof}
 By induction on $ s \streq t . $ 
 
 Let $ s' = \la{\vec{x}} (p[\vec{y} := q] ) $ and $ s =  (\la{\vec{x}} p)[\vec{y} := q] $ and $  (\la{\vec{x}} p)[\vec{y} := q] \toaut t.$ We reason by cases on the reduction step. If $q = \vec{q}  $ and $ t = \subst{(\la{\vec{x}} p)}{\vec{y}}{\vec{q}} $ then we set $ t' =  \la{\vec{x}} (\subst{ p}{\vec{y}}{\vec{q}} ). $ The other possible cases are necessarily contextual and directly follow from the IH, since we cannot apply an $ \eta$ ground step on abstractions. 
 
 Let $ s' = \seq{s_1, \dots, s_i [\vec{y}:= q], \dots, s_k} $ and $ s = \seqdots{s}{1}{k} [\vec{y} :=q] . $ Then either $ s_j \toaut p $ and $ t = \seq{s_1, \dots, p, \dots, s_k}[\vec{y} :=q] $ or $ q \toaut q'  $ and $ t = \seqdots{s}{1}{k} [\vec{y} :=q'],$ since we cannot apply $ \beta$ or $ \eta$ ground steps on list terms. In both cases the result is then a direct consequence of the IH.
 
 Let $ s' = p [\vec{y} := q]  v $ and $ s = (pv) [\vec{y} := q] . $ If $ p =\la{\vec{x}} p' , b = \vec{v}$ and $ t = \subst{p'}{\vec{x}}{\vec{v}} [\vec{y} := q] $ then we set $  t' = t . $ If $ \gamma \vdash s : \tyl \multimap \ty $ and $  t =  ((pv) [\vec{y} := q]) \vec{x} $ then by Theorem \ref{aut:subjred} $ \gamma \vdash s' : \tyl \multimap \ty $ and we set
  $ t' = ( p [\vec{y} := q]  v )\vec{x} .$ If $\gamma \vdash s : \tyl $ and $ t =  x [ x :=  ((pv) [\vec{y} := q])]  $ then by Theorem \ref{aut:subjred} we have $ \gamma \vdash s' : \tyl$ and we set $ t' =  x [ x :=  (p[\vec{y} := q]v)] .$ The other cases are contextual and follows directly from the IH.
  
  The other application case follows a similar pattern, the only thing to check is that $ \subst{s}{\vec{x}}{\vec{t} [\vec{y} := \vec{q}]} \streq  \subst{s}{\vec{x}}{\vec{t} } [\vec{y} := \vec{q}] $, which is a corollary of the former lemma.
  
 The explicit substitution cases follow a similar patter to the application one, applying the former lemma for the case where $  s \toaut t $ is a $\beta $ ground step.
 
 The contextual cases are direct consequences of the IH.
 \end{proof}

\begin{lemma}\label{aut:sizesub} Let $ s, t_1, \dots, t_k \in \repTerms $ and $x_1,\dots, x_k \in\fv{s} . $ We have that \[  \size{ \subst{s}{x_1, \dots, x_k}{t_1, \dots, t_k}  } = (\size{s} + \sum \size{t_i}) - k .         \] \end{lemma}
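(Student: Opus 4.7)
The plan is to prove the statement by straightforward induction on the structure of $s$, using the inductive definition of linear substitution given earlier in the appendix. The conceptual content of the lemma is the following: by linearity, each $x_i$ occurs exactly once in $s$, contributing a single variable-constructor (i.e., size $1$) to $\size{s}$. Substitution replaces that single constructor by the term $t_i$, which contributes $\size{t_i}$ constructors. Summing over $i$ we remove $k$ constructors and add $\sum \size{t_i}$ constructors, yielding the claimed equality.

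For the base cases, if $s = x$ and $k = 0$ then $\subst{s}{}{} = x$, so both sides equal $1$. If $s = x_i$ with $\vec{x} = x_i$ and $\vec{t} = t_i$ (so $k = 1$), then $\subst{x_i}{x_i}{t_i} = t_i$ and the size is $\size{t_i} = (1 + \size{t_i}) - 1 = (\size{s} + \size{t_i}) - k$. The inductive cases all follow the same pattern, which I illustrate on the list constructor. If $s = \seqdots{s}{1}{n}$, the definition of substitution splits $\vec{x} = \bigoplus \vec{x}_j$ and $\vec{t} = \bigoplus \vec{t}_j$ where $\vec{x}_j = \mathsf{occ}_{\vec{x}}(s_j)$, and yields $\subst{s}{\vec{x}}{\vec{t}} = \seq{\subst{s_1}{\vec{x}_1}{\vec{t}_1}, \dots, \subst{s_n}{\vec{x}_n}{\vec{t}_n}}$. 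Writing $k_j = \length{\vec{x}_j}$ and applying the induction hypothesis to each $s_j$ gives
\[
\size{\subst{s}{\vec{x}}{\vec{t}}} = 1 + \sum_{j=1}^{n} \bigl(\size{s_j} + \sum_{i} \size{t_{j,i}} - k_j\bigr) = \size{s} + \sum_{i=1}^{k} \size{t_i} - k,
\]
where the last equality uses $\sum_j k_j = k$ (which holds by linearity, since each $x_i$ lies in exactly one $\vec{x}_j$) and $\size{s} = 1 + \sum_j \size{s_j}$.

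The remaining cases---explicit substitution, application, abstraction, and applied function symbol---proceed identically: the head constructor contributes $1$ to both sides, the variables of $\vec{x}$ are partitioned across the subterms by the occurrence function, and the induction hypothesis combines linearly. The abstraction case uses the fact that $\vec{x} \cap \vec{y} = \emptyset$, since $\vec{x} \subseteq \fv{s}$ avoids bound variables by our convention on $\alpha$-equivalence, so the defining clause $\subst{\la{\vec{y}} s'}{\vec{x}}{\vec{t}} = \la{\vec{y}} \subst{s'}{\vec{x}}{\vec{t}}$ applies and the head $\lambda$ contributes $1$ to both sides.

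There is no real obstacle here: the proof is a routine size computation, and the only delicate point is the bookkeeping of the partition $\vec{x} = \bigoplus \vec{x}_j$ in the branching cases. Linearity, which guarantees that each $x_i$ appears in exactly one subterm, is precisely what makes the arities $k_j$ sum to $k$ on the nose and avoids any over- or under-counting.
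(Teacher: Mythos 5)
Your proof is correct and follows essentially the same route as the paper: induction on the structure of $s$, with the variable case handled directly and the list case resolved by splitting $\vec{x}$ via the occurrence function, applying the induction hypothesis componentwise, and using that the lengths of the $\vec{x}_j$ sum to $k$. Your treatment is if anything slightly more explicit than the paper's (the $k=0$ base case and the $\alpha$-convention remark for abstraction), but the argument is the same.
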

\begin{proof}
By induction on $s . $ If $s = x $ then $ k =1 $ and $ \subst{s}{\vec{x}}{\vec{t}} = t $. Then $  \size{\subst{s}{\vec{x}}{\vec{t}}} = \size{t} =  \size{t} + \size{x} - 1 , $ since $\size{x} = 1. $ If $ s = \seqdots{s}{1}{n} ,$ we have $ \subst{s}{\vec{x}}{\vec{t}} = \seq{\subst{s_1}{\mathsf{occ}_{\vec{x}} (s_1)}{\vec{t}_1}, \dots, \subst{s_n}{\mathsf{occ}_{\vec{x}} (s_n)}{\vec{t}_n}} . $ By IH, $\size{\subst{s_i}{\mathsf{occ}_{\vec{x}} (s_i)}{\vec{t}_i} } = \size{s_i} + \size{\vec{t}_i} - \length{\mathsf{occ}_{\vec{x}}(s_i)} .  $ Since $ k = \sum \length{\mathsf{occ}_{\vec{x}}(s_i)}  $ we can conclude. The other cases follows a similar pattern.
\end{proof}

\begin{lemma}\label{aut:commutasub}
Let $ s, \vec{t}, q  $ be resource terms with $ \vec{x} \subseteq \fv{s}$ and $ \length{\vec{t}} = \length{\vec{x}}$. If $ \subst{s}{\vec{x}}{\vec{t}} \to_{\eta} p $ then there exist $ s', \vec{t}' $ s.t. $ s \to_{\eta} s', \vec{t} \to_{\eta} \vec{t}' ,  p = \subst{s'}{\vec{x}}{\vec{t}'}  .  $
\end{lemma}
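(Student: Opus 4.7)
The plan is to proceed by induction on the structure of $s$, analyzing where the $\eta$-redex in $\subst{s}{\vec{x}}{\vec{t}}$ is located with respect to the substitution. The guiding observation is that substitution grafts the $t_i$ onto the variable occurrences of the $x_i$ in $s$; consequently, every subterm of $\subst{s}{\vec{x}}{\vec{t}}$ either lies inside some $t_i$, or has the form $\subst{p}{\vec{y}}{\vec{u}}$ for a subterm $p$ of $s$ (with $\vec{y}, \vec{u}$ the appropriate restrictions of $\vec{x}, \vec{t}$). Moreover, membership in $\mathsf{LT}$ or $\mathsf{AT}$ is determined by the head constructor, so substitution at variables does not change whether a subterm is a list or an abstraction-under-substitutions, except when the subterm \emph{is} one of the $x_i$, whose head is then replaced by that of $t_i$.

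With this in mind, the base case $s = x$ splits into two subcases: if $x \in \vec{x}$ then $\vec{x} = (x)$, $\vec{t} = (t)$ and $\subst{s}{\vec{x}}{\vec{t}} = t$, so the $\eta$-step is exactly a step $t \to_\eta t'$, and we take $s' = s$, $\vec{t}' = (t')$; otherwise $\subst{s}{\vec{x}}{\vec{t}} = x = s$, the step is already an $\eta$-step on $s$, and we take $\vec{t}' = \vec{t}$. For compound $s$ (list, application, abstraction, explicit substitution, or signature symbol), assume the $\eta$-step happens at a contextual position inside a strict $\ctxe$ or $\mathsf{D}$; by tracing that context through the substitution, it either points into some $t_i$ (reduce $\vec{t}$ directly), or into a non-variable position of the $s$-skeleton (apply the inductive hypothesis to the corresponding subterm of $s$), or it bottoms out at a variable $x_i$ and the actual redex sits inside $t_i$ (again reduce $\vec{t}$).

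The decisive case is when the $\eta$-step occurs at the root of $\subst{s}{\vec{x}}{\vec{t}}$. If $s$ is not one of the $x_i$, the head constructor of $\subst{s}{\vec{x}}{\vec{t}}$ coincides with that of $s$, so $s \notin \mathsf{LT}$ (respectively $s \notin \mathsf{AT}$), and the typing of $s$ agrees with that of the substituted term; we may then perform the same root $\eta$-step on $s$ to obtain $s'$ with $p = \subst{s'}{\vec{x}}{\vec{t}}$. Otherwise $s = x_i$, in which case $\subst{s}{\vec{x}}{\vec{t}} = t_i$ and the root step is an $\eta$-step on $t_i$; here we reduce $\vec{t}$ and take $s' = s$.

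The main technical obstacle is the bookkeeping around the side conditions of $\eta$ and the context grammars. One must propagate the hypothesis $\subst{s}{\vec{x}}{\vec{t}} \notin \mathsf{LT}$ (resp.\ $\mathsf{AT}$) back to $s$ in the non-variable root case, which relies precisely on the fact that substitution at variables does not alter the head shape; and one must respect the restriction $\ctxe \neq \hole{\cdot}$ (resp.\ $\mathsf{D} \neq \hole{\cdot}$) appearing inside substitution bodies, which is what cleanly separates the contextual cases from the root cases and ensures that a contextual redex found in $\subst{s}{\vec{x}}{\vec{t}}$ lifts to a legal contextual redex either in $s$ or in the relevant $t_i$.
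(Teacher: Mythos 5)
Your proof is correct and follows essentially the same route as the paper's: induction on the structure of $s$, mimicking a root $\eta$-step of $\subst{s}{\vec{x}}{\vec{t}}$ directly on $s$ (with $\vec{t}'=\vec{t}$) and dispatching contextual steps to the inductive hypothesis or to a reduction of the relevant $t_i$. One small nuance: membership in $\mathsf{LT}$/$\mathsf{AT}$ is determined by the spine through explicit substitutions rather than by the head constructor alone (e.g.\ $x_i[\vec{y}:=w]$ can enter $\mathsf{LT}$ after substituting a list for $x_i$), but the implication you actually need --- that $\subst{s}{\vec{x}}{\vec{t}} \notin \mathsf{LT}$ forces $s \notin \mathsf{LT}$, and likewise for $\mathsf{AT}$ --- holds, so your argument goes through.
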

\begin{proof}
By induction on the structure of $s .  $ If $s = x $ then $\vec{t} = \seq{t} $ and $\subst{s}{\vec{x}}{\vec{t}} = t . $ Then we set $ s' = x $ and $ \vec{t}' = \seq{p} . $ 

If $ s = \la{\vec{y}} v $ then $\subst{s}{\vec{x}}{\vec{t}} = \la{\vec{y}} (\subst{v}{\vec{x}}{\vec{t}}) . $ The result is then a direct application of the IH. 

If $ s = v q $ then $\subst{s}{\vec{x}}{\vec{t}} =  \subst{v}{\mathsf{occ}_{\vec{x}(v)}}{\vec{t}_1} \subst{q}{\mathsf{occ}_{\vec{x}(q)}}{\vec{t}_2} . $ We have three possible cases. 
\begin{itemize}
\item If $  \subst{s}{\vec{x}}{\vec{t}} \to_{\eta} \la{\vec{y}} \subst{s}{\vec{x}}{\vec{t}} \vec{y}$ then we set $ s ' = \la{\vec{y}} (vq) \vec{y}$ and $\vec{t}' = \vec{t} . $
\item If $\subst{s}{\vec{x}}{\vec{t}} \to_{\eta} \vec{x} [\vec{x} := \subst{s}{\vec{x}}{\vec{t}}] $ then we set $ s' = \vec{x}[\vec{x} := vq] $ and $ \vec{t}' = \vec{t}. $
\item If the $\eta $ step is a contextual case the result is a direct consequence of the IH.
\end{itemize}
The cases of substitution and list follow a similar pattern to the one of the application case.
\end{proof}

\begin{proposition} If $ s \to_{\beta} t \to_{\eta} t' $ there exists $ s'  $ s.t. $ s \to_{\eta} s' $ and $ s' \to_{\beta}^\ast t' . $ 
\end{proposition}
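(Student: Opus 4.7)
The plan is induction on the $\beta$-step $s \to_\beta t$, with case analysis on the position of the $\eta$-redex contracted in $t \to_\eta t'$. The key tool is Lemma \ref{aut:commutasub}, which lifts any $\eta$-step on a term of the form $\subst{p}{\vec{x}}{\vec{u}}$ back to an $\eta$-step on $p$ or on one of the components of $\vec{u}$. Together with linearity (no duplication under substitution), this lets us commute any single $\beta$-step past any single $\eta$-step.

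In the root $\beta_1$ case $s = p[\vec{x} := \ctxl[\vec{u}]] \to_\beta \ctxl[\subst{p}{\vec{x}}{\vec{u}}] = t$, and similarly in the root $\beta_2$ case $s = \ctxl[\la{\vec{x}} p]\,\vec{u} \to_\beta \ctxl[\subst{p}{\vec{x}}{\vec{u}}] = t$, the contracted $\eta$-redex in $t$ lies in one of three regions: (i) strictly inside $\ctxl$ or in a $\vec{u}$-component not yet touched by the substitution, where it corresponds to a position already present in $s$; (ii) at the root of $\ctxl[\subst{p}{\vec{x}}{\vec{u}}]$, an $\eta$-expansion of the whole contractum which, by subject reduction (Theorem \ref{aut:subjred}), has the same type as $s$ and can thus be applied at the root of $s$ itself provided the side condition ($s \notin \mathsf{LT}$ for $\eta_1$, $s \notin \mathsf{AT}$ for $\eta_2$) holds, which is immediate from the shape of $s$ in both root cases; or (iii) strictly inside $\subst{p}{\vec{x}}{\vec{u}}$, where Lemma \ref{aut:commutasub} yields an $\eta$-step on $p$ or on $\vec{u}$. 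In this last sub-case both $p$ (under the binder) and the components of $\vec{u}$ sit in positions of $s$ that are valid $\ctxe$-holes (since $\ctxe$ goes under $\la{\vec{x}}\cdot$, into applications, into lists and into both sides of an explicit substitution), so the pulled-back $\eta$-step can be performed on $s$ first, producing some $s'$, and the same root $\beta$-step then reaches $t'$.

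The contextual cases of $\to_\beta$ are handled by the induction hypothesis: writing $s = \ctx[s_0]$ with $s_0 \to_\beta s_0'$, the $\eta$-redex of $\ctx[s_0']$ is either disjoint from $s_0'$, in which case it lies in an $\ctxe$-position of $s$ and commutes trivially, or lies inside $s_0'$, in which case applying the IH to $s_0 \to_\beta s_0' \to_\eta q$ gives $s_0 \to_\eta s_0''$ with $s_0'' \to_\beta^\ast q$, and this $\eta$-step lifts through $\ctx$ to $s \to_\eta \ctx[s_0'']$.

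The main obstacle is case (iii) of the root step: one must carefully verify that the side conditions of the $\eta$-rules transfer from the substituted term back to $p$ or $\vec{u}$ (the subterm being $\eta$-expanded is literally the same once we undo the substitution, so the $\mathsf{LT}/\mathsf{AT}$ check is preserved), and that pulling the $\eta$-redex back across a linear substitution leaves exactly one $\beta$-redex to contract. The use of $\to_\beta^\ast$ rather than $\to_\beta$ in the statement accommodates only the degenerate sub-cases where the $\eta$-expansion and $\beta$-contraction overlap vacuously and zero $\beta$-steps suffice; outside of those, the conclusion specializes to a single $\beta$-step, thanks to linearity.
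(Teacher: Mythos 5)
Your proposal is correct and follows essentially the same route as the paper: induction on the $\beta$-step, using Lemma \ref{aut:commutasub} to pull $\eta$-steps back across the linear substitution in the root cases, and handling the contextual cases by a position analysis on the $\eta$-redex (root expansion treated directly, the rest by the induction hypothesis). If anything, your treatment of the root cases is more explicit than the paper's, which simply cites the lemma; the only point worth tightening is that for the $\beta_1$ root case the side condition $s \notin \mathsf{LT}$ (resp.\ $\mathsf{AT}$) is not literally immediate from the shape of $s$ but follows because substitution preserves membership in $\mathsf{LT}$/$\mathsf{AT}$, so $t \notin \mathsf{LT}$ forces $s \notin \mathsf{LT}$.
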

\begin{proof}
By induction on the step $ s \to_\beta t . $ If $  s = p[ \vec{x} := \ctxl[ \vec{q} ] ] \to_{\beta 1} t = \ctxl[\subst{p}{\vec{x}}{\vec{q}}] $ the result is a corollary of Lemma \ref{aut:commutasub}. The same happens for the case $ s \to_{\beta_2} t . $ We do the contextual case of application. If $ s=  pq \to_\beta t = p'q $ with $ p \to_\beta p' $ and $ p'q \to_\eta t' $ we have three possible cases.
\begin{itemize}
\item If $ p'q \to_\eta  \la{\vec{y}} (p'q) \vec{y} .  $ In that case we set $ s'  = \la{\vec{y}} (pq) \vec{y}.$
\item If $p' q \to_\eta \vec{x}[\vec{x} := p'q] . $ In that case we set $ s' = \vec{x}[\vec{x}:= pq] .$
\item If the $\eta $ step is a contextual case the result is a direct consequence of the IH.
\end{itemize}
\end{proof}

\begin{proposition}\label{aut:commuta}
If $ s \to_{\beta}^\ast t \to_{\eta}^\ast t' $ there exists $ s'  $ s.t. $ s \to_{\eta}^\ast s' $ and $ s' \to_{\beta}^\ast t' . $ 
\end{proposition}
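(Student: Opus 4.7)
The plan is to lift the one-step commutation proposition, which asserts $\to_{\beta} \cdot \to_{\eta} \;\subseteq\; \to_{\eta} \cdot \to_{\beta}^{\ast}$, to the many-step version $\to_{\beta}^{\ast} \cdot \to_{\eta}^{\ast} \;\subseteq\; \to_{\eta}^{\ast} \cdot \to_{\beta}^{\ast}$. This is a standard postponement argument, but the catch is that the one-step lemma produces a sequence $\to_{\beta}^{\ast}$, not a single $\beta$-step, so naive induction on the number of $\eta$-steps is circular. I would address this by factoring the argument through an intermediate lemma.

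First I would establish the following auxiliary fact by induction on $m$: for every $m \in \mathbb{N}$, if $s \to_{\beta}^{m} t \to_{\eta} t_1$ then there exists $u$ with $s \to_{\eta} u \to_{\beta}^{\ast} t_1$. The base case $m = 0$ is immediate, taking $u = t_1$. For the inductive step, write $s \to_{\beta} s_0 \to_{\beta}^{m} t \to_{\eta} t_1$. The inductive hypothesis applied to $s_0 \to_{\beta}^{m} t \to_{\eta} t_1$ yields $u'$ with $s_0 \to_{\eta} u' \to_{\beta}^{\ast} t_1$. Now applying the previously established one-step commutation to $s \to_{\beta} s_0 \to_{\eta} u'$ gives $u$ with $s \to_{\eta} u \to_{\beta}^{\ast} u'$. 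Concatenating the two $\beta$-chains $u \to_{\beta}^{\ast} u' \to_{\beta}^{\ast} t_1$ concludes.

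Second, I would prove the main statement by induction on the number $n$ of $\eta$-steps of $t \to_{\eta}^{n} t'$, for an arbitrary $\beta$-reduction $s \to_{\beta}^{\ast} t$. The case $n = 0$ is trivial (take $s' = s$). For $n+1$, decompose $t \to_{\eta} t_1 \to_{\eta}^{n} t'$. The auxiliary lemma applied to $s \to_{\beta}^{\ast} t \to_{\eta} t_1$ produces $s_1$ with $s \to_{\eta} s_1 \to_{\beta}^{\ast} t_1$. The inductive hypothesis now applies to $s_1 \to_{\beta}^{\ast} t_1 \to_{\eta}^{n} t'$, giving $s'$ with $s_1 \to_{\eta}^{\ast} s' \to_{\beta}^{\ast} t'$. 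Prepending $s \to_{\eta} s_1$ yields $s \to_{\eta}^{\ast} s'$, which is the required witness.

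The principal obstacle is avoiding the apparent circularity that arises when the one-step commutation expands a single $\beta$-step into a $\beta$-chain: without the auxiliary lemma, each commutation step increases the length of the $\beta$-prefix and there is no well-founded measure to terminate the induction. The two-stage decomposition above resolves this by making the outer induction strictly on the $\eta$-count while letting the inner induction on the $\beta$-count absorb the new $\beta$-chains, which is exactly the classical Hindley-style lifting pattern.
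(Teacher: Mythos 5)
Your proof is correct and follows essentially the same route as the paper: the paper obtains the statement from the preceding single-step commutation proposition by induction on the length of the $\beta$-reduction, which is precisely your auxiliary lemma, and your outer induction on the number of $\eta$-steps simply makes explicit the lifting that the paper's one-sentence proof leaves implicit. No gap remains.
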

\begin{proof}
The proof is by induction on the length of the reduction $s \to_{\beta}^\ast , $ by applying the former proposition.
\end{proof}

Given $\gamma \vdash s  : \ty $ we define two sets of autonomous type judgments as follows$\mathsf{EST}_1 (s) = \{\delta \vdash p : 	\ty \mid p \in  \mathsf{ST}(s) \setminus \mathsf{LT}  \text{ s.t. } \ctxe[\delta \vdash p : \ty] = s  \text{ for some context } \ctxe  \} $ 
 and  $\mathsf{EST}_" (s) = \{\delta \vdash p : 	\ty \mid p \in  \mathsf{ST}(s) \setminus \mathsf{LT}  \text{ s.t. } \mathsf{D}[\delta \vdash p : \ty] = s  \text{ for some context } \mathsf{D}  \} .$ We set $ \eta_1 (s) =  \sum_{\delta \vdash p : \ty \in \mathsf{EST}(s)} \size{\ty}_1 $ and $ \eta_2 (s) =  \sum_{\delta \vdash p : \ty \in \mathsf{EST}_2(s)} \size{\ty}_2  .  $

 \begin{remark}
 The intuition behind the sets $\mathsf{EST}_1 (s) $ and $\mathsf{EST}_2(s) $ is that we are considering all types of subterms of $s $ on which we could perform the $\eta $-reduction. The restrictions on the shape of $ p$ is indeed directly derived from the ones on the $\eta $-reductions. 
 \end{remark}

 \begin{proposition} The following statements hold.
If $ s \to_\beta s' $ then $ \size{s'} < \size{s} $;  if $  s \to_{\eta 1 } s'$ then $ \eta _1(t) < \eta (s)_1 $;  if   $s \to_{\eta 2 } s'$ then $ \eta _2(t) < \eta (s)_2.$  \end{proposition}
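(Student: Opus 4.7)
The plan is to establish the three decreases independently by induction on the reduction step, mirroring the proof of Proposition~\ref{repdec} for the representable fragment. For the $\beta$ case, both root rules $s[\vec{x}:=\ctxl[\seqdots{t}{1}{k}]] \to_{\beta_1} \ctxl[\subst{s}{\vec{x}}{\vec{t}}]$ and $\ctxl[\la{\vec{x}} s]\seqdots{t}{1}{k} \to_{\beta_2} \ctxl[\subst{s}{\vec{x}}{\vec{t}}]$ are handled uniformly by Lemma~\ref{aut:sizesub}, which yields $\size{\subst{s}{\vec{x}}{\vec{t}}} = \size{s} + \sum_i \size{t_i} - k$. The redex carries, on top of the common contribution $\size{\ctxl} + \size{s} + \sum_i \size{t_i}$, two extra syntactic constructors eliminated by the contraction (either the outer explicit substitution plus the inner list for $\beta_1$, or the outer abstraction plus the outer application for $\beta_2$), so a direct computation gives a strict decrease of $k+2$ and $k+3$ respectively. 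Contextual steps follow by induction, since $\size{\cdot}$ is compositional with respect to contexts.

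For the two $\eta$ fragments, I would recycle the argument from the representable and symmetric closed calculi, now unified on the autonomous calculus. Consider first an $\eta_1$-root step $s \to_{\eta_1} \vec{x}[\vec{x}:=s]$ with $s:\tyl$ and $\tyl = a_1, \dots, a_k$. Three observations suffice: (i) the outer reduct $\vec{x}[\vec{x}:=s]$ belongs to $\mathsf{LT}$ and is excluded from $\mathsf{EST}_1$; (ii) the copy of $s$ now sitting on the right-hand side of the new substitution is unreachable by any valid $\ctxe$, the only candidate path $(-)[\vec{x}:=\hole]$ being explicitly forbidden by the clause $s[\vec{x}:=\ctxe]\,(\ctxe\neq\hole)$; (iii) the freshly bound variables $x_i$ are reachable and contribute $\sum_i \size{a_i}_1$, while strict subterms of $s$ keep their old contributions via paths of the form $\vec{x}[\vec{x}:=\ctxe']$ with $\ctxe'\neq\hole$. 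Summing, the net change is $-\size{\tyl}_1 + \sum_i \size{a_i}_1 = -1$. The $\eta_2$-root step $s \to_{\eta_2} \la{\vec{x}} s\vec{x}$ with $s:\tyl\multimap\ty$ is symmetric, using $\mathsf{D}$-contexts and $\mathsf{AT}$ in place of $\ctxe$ and $\mathsf{LT}$: the reduct lies in $\mathsf{AT}$ and is excluded; the occurrence of $s$ in function position is unreachable, the only candidate context $(-)\vec{x}$ being ruled out by the clause $\mathsf{D}t\,(\mathsf{D}\neq\hole)$; the new application $s\vec{x}$ contributes $\size{\ty}_2$ and the variables $x_i$ contribute $\sum_i \size{a_i}_2$, whereas the argument list $\vec{x}$ itself lies in $\mathsf{LT}$ and is excluded. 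The net change is $-\size{\tyl\multimap\ty}_2 + \size{\ty}_2 + \sum_i \size{a_i}_2 = -1$. The contextual extensions are handled by induction, using the compatibility of $\mathsf{EST}_i$ with their respective context classes.

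The main obstacle is the precise combinatorial bookkeeping around the $\eta$ root steps. The type sizes have to be extended naturally to the whole autonomous grammar (for instance $\size{\tyl\multimap\ty}_1 = \sum_i \size{a_i}_1 + \size{\ty}_1$ and $\size{\seqdots{\ty}{1}{k}}_2 = \sum_i \size{\ty_i}_2$) so that $\eta_1$ is invariant under $\eta_2$ steps and vice versa, which is what allows the induction on each reduction to be carried out in isolation. Symmetrically, one must ensure that both $\mathsf{EST}_1$ and $\mathsf{EST}_2$ discard $\mathsf{LT}$-terms (so that the $\eta_1$-wrapper and the argument list produced by $\eta_2$ are ignored) and $\mathsf{AT}$-terms (so that the $\eta_2$-wrapper is ignored); otherwise either the newly introduced wrapper or the newly exposed list would add back at least as much as is removed and spoil the decrease. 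Once these definitions are fixed and the absence of spurious reachability is verified, each root step contributes exactly $-1$ to the corresponding measure and the contextual closures are an immediate induction.
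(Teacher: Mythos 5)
Your proposal is correct and takes essentially the same route as the paper: induction on the reduction step, with Lemma~\ref{aut:sizesub} handling both $\beta$ root rules, and for $\eta_1/\eta_2$ a direct account of how $\mathsf{EST}_1$ and $\mathsf{EST}_2$ change at the root step followed by contextual closure under the induction hypothesis -- indeed your bookkeeping is more precise than the paper's, which omits the contributions of the fresh variables and of the new application node and whose definition of $\mathsf{EST}_2$ should exclude $\mathsf{AT}$ rather than $\mathsf{LT}$, a reading you implicitly adopt to get the exact net change of $-1$. The only superfluous point is the remark that the extended type sizes make $\eta_1$ invariant under $\eta_2$-steps and vice versa: this is neither needed (each of the three claims is proved by induction on its own reduction, so no cross-interaction arises) nor literally true, but it does not affect the argument for the stated proposition.
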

\begin{proof}
\begin{enumerate}
\item By induction on $  s \to_\beta s' ,$ exploiting Lemma \ref{aut:sizesub}. 
\item By induction on $s \to_{\eta 1} s' . $   Let $ \gamma \vdash s : \seqdots{\ty}{1}{k}  $ and $ s \to_{\eta 1}  \vec{ x} [\vec{x}= s]  $ where $ s \notin \mathsf{LT}. $  We have $\mathsf{EST}_1(\vec{ x} [\vec{x}= s]) = \mathsf{EST}_1 (s)  \setminus \{ s \} .$ We can then conclude, since \[  \eta_1 (s) =  \eta_1 ( \vec{ x} [\vec{x}= s]) + \size{\seqdots{\ty}{1}{k}}_1\] and $ \size{\seqdots{\ty}{1}{k}} \neq 0 $. Let $ s = \la{\vec{x}} t$ and $ s' = \la{\vec{x}} t'$ with $ t \to_{\eta 1} t'. $ By IH we have that $ \eta_1 (t') < \eta_1( t) . $ By subject reduction and definition of $ \eta_1 $ we can conclude. Let $ s =  t q  $ and $ s' = t' q $ with $ t \to_{\eta 1} t' .$ Again the result is a corollary of the IH and subject reduction, by observing that $ \eta_1 (tq)= \eta_1 (t) + \eta_1 (q)$. The same works for the substitution and list cases.  
\item By induction on $ s \to_{\eta 2} s' .$ Let $ \gamma \vdash s : \tyl \multimap \ty  $ and $ s \to_{\eta 2} \la{\vec{x}} s \vec{x} .  $ where $ s \notin \mathsf{LT} .$ We have $\mathsf{EST}_2(\la{\vec{x}} s \vec{x} ) = \mathsf{EST}_2 (s)  \setminus \{ s \} \sqcup \{ \vec{x}\} .$ We have that  $ \eta_2(s) = \eta_2 (\la{\vec{x}} s \vec{x}) + \size{\tyl \multimap \ty}_2 - \size{\tyl}_2 . $ Since $ \size{\tyl}_2  < \size{\tyl \multimap \ty}_2$ we can conclude.
We can then conclude. 
\end{enumerate}
\end{proof}

\begin{proposition}[Local Confluence]The reduction $ \toaut  $ is locally confluent.\end{proposition}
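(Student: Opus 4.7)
The plan is to establish local confluence by exhaustive case analysis on pairs of one-step reductions $s \to_1 s'$ and $s \to_2 s''$, proceeding by induction on the structure of $s$. I will exploit the fact that local confluence of $\torp$ and $\to_{\mathsf{sc}}$ has already been established in the preceding sections (Proposition \ref{sepsn} and its symmetric closed counterpart), together with the embedding of Proposition \ref{embred}, so that any critical pair whose two redexes come from the same pure sub-system can be closed by importing the corresponding diagram. The new work concerns the mixed pairs, in which a representable redex (either $\beta_1$ from an explicit substitution or $\eta_1$ from a tensor type) interacts with a symmetric closed redex (either $\beta_2$ from a $\lambda$-application or $\eta_2$ from an arrow type).

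The easy half of the case analysis is disjoint occurrence: whenever the two redexes live in disjoint subterms of $s$, one simply fires the other reduction on both branches and closes the diagram in one step. When the redexes nest, with one contextual step occurring strictly inside the pattern of a ground step, I will use Lemma \ref{aut:repeqsub}, which guarantees that linear substitution distributes over both $\toaut$ and $\streq$: this suffices to close all pairs of the form ($\beta_i$ at the root, arbitrary reduction inside the substituted term or inside the body). Since the types of $\eta_1$-redexes (tensors) and $\eta_2$-redexes (arrows) are disjoint, no root-vs-root overlap is possible between the two $\eta$-rules; and since $\la{\vec{x}}p \in \mathsf{AT}$ and $\seqdots{p}{1}{k} \in \mathsf{LT}$, the side conditions of the $\eta$-rules preclude an $\eta$-step from being performed at the root of a $\beta$-redex.

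The genuine critical pairs are those in which an $\eta$-expansion uncovers, or is uncovered by, a $\beta$-pattern. The representative examples are: a $\beta_1$-step whose substituted list has just been created by an $\eta_1$-expansion at one of its components; and a $\beta_2$-step whose abstraction has just been created by an $\eta_2$-expansion of a neutral term. In each such case the diagram is closed by first contracting the $\beta$-redex on both sides and then joining by a bounded number of $\eta$-steps replayed on the substituted copies, using Proposition \ref{aut:commuta} to postpone or hoist $\eta$-steps around $\beta$-steps. The action-at-distance shape of the $\beta_1$-rule (the explicit substitution context $\ctxl$) is what makes this closure possible without appealing to ad hoc commutation rules.

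The main obstacle, and the subtlety the proof must address carefully, is that $\eta$-expansion can manufacture redexes across an explicit substitution barrier: an expansion inside the right-hand side of a substitution could in principle create a list that triggers a new $\beta_1$-step, and similarly an expansion of the function component of an application could create a new $\beta_2$-step. Here the restricted contexts $\ctxe$ and $\mathsf{D}$, together with the $s \notin \mathsf{LT}$ and $s \notin \mathsf{AT}$ side conditions, must be checked to ensure that the putative divergences actually arise only in configurations that can be rejoined, exactly as in the analysis already carried out for the two pure sub-reductions. Once these cases are tabulated and closed individually, local confluence of $\toaut$ follows.
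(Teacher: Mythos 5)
Your overall skeleton is recognisably the right one --- case analysis on the relative position of the two redexes, a substitution lemma for the root-$\beta$ peaks, attention to the restricted contexts --- and it matches the paper's proof, which proceeds by induction on the step $s \toaut t_1$ with a case analysis on $s \toaut t_2$, closing the $\beta_1$/$\beta_2$ root cases with Lemma~\ref{aut:repeqsub} and the $\eta$ root cases by contextuality. However, two of your concrete steps do not go through as stated. First, you cannot close the ``same sub-system'' peaks by importing the diagrams for $\torp$ and $\to_{\mathsf{sc}}$ through Proposition~\ref{embred}: an autonomous term whose two redexes are both, say, explicit-substitution redexes is in general a \emph{mixed} term, whose subterms contain abstractions and applications, hence it is not in the image of the embedding and the pure-calculus diagrams simply do not apply to it. Those cases must be redone directly for autonomous terms, which is exactly the role of the autonomous substitution lemma (Lemma~\ref{aut:repeqsub}) in the paper's proof.

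Second, your inventory of overlaps is off in both directions. The configurations you flag as the ``genuine critical pairs'' --- an $\eta_1$-expansion turning the right-hand side of a substitution into a term of $\mathsf{LT}$, or an $\eta_2$-expansion of a function position manufacturing a $\beta_2$-redex --- are not local peaks: expanding the entire right-hand side of a substitution is forbidden by the clause $s[\vec{x}:=\ctxe]$ with $\ctxe\neq\hole{\cdot}$, and expanding the entire function position of an application is forbidden by the clause $\mathsf{D}t$ with $\mathsf{D}\neq\hole{\cdot}$; redex \emph{creation} strictly below those positions happens downstream of one of the two steps and is a matter for the postponement property (Proposition~\ref{aut:commuta}), not for joining a peak, so invoking~\ref{aut:commuta} here is neither needed nor to the point. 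Conversely, the root-vs-root overlaps you claim are precluded do occur: a $\beta$-redex of tensor or arrow type is in general neither in $\mathsf{LT}$ nor in $\mathsf{AT}$, so the whole redex may itself be $\eta$-expanded; these peaks are easy to join (fire the $\beta$-step under the expansion, and if the contractum violates the side condition close with a further $\beta$-step), and they are precisely what the paper dispatches ``by contextuality''. With the embedding shortcut replaced by the direct autonomous case analysis and the critical-pair table corrected along these lines, your argument collapses into the paper's proof.
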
\begin{proof} Let $ s \toaut t_1 $ and $ s \toaut t_2 . $ We prove the result by induction on the step $ s \toaut t_1 $ and by reasoning by cases on $ s \toaut t_2 . $  If $ s = s' [\vec{x} := \ctxl[\vec{t}]] $ with $ t_1 = \ctxl[\subst{s'}{\vec{x}}{\vec{t}}] $  we reason by cases on $ s \toaut t_2 . $ if $ t_2 =  s'' [\vec{x} := \ctxl[\vec{t}]] $ then $ \ctxl[\subst{s'}{\vec{x}}{\vec{t}}] \toaut^\ast \ctxl[\subst{s''}{\vec{x}}{\vec{t}} ] $  by Lemma \ref{aut:repeqsub}. If $t_2 = s [\vec{x} := \ctxl[\vec{t}']]  $ then $ \ctxl[\subst{s}{\vec{x}}{\vec{t}} ] \toaut^\ast ctxl[\subst{s}{\vec{x}}{\vec{t}}]  $ again by Lemma \ref{aut:repeqsub}. If $ s \toaut t_2 $ is an $ \eta$-rule, the result is immediate by contextuality.

If $  s = (\la{\vec{x}} p) \vec{q} $ and $ t_1 = \subst{p}{\vec{x}}{\vec{q}} $ we reason by cases on $ s \toaut t_2 . $ if $ \la{\vec{x}} p) \vec{q} \toaut \la{\vec{x}} p') \vec{q'},$ then the result is a corollary of Lemma \ref{aut:repeqsub}. If $ s \toaut t_2 $ is an $ \eta$-rule, the result is immediate by contextuality.

Let $\gamma \vdash s : \tyl \multimap \ty   $  with $ s \toaut \la{\vec{x}} s \vec{x} . $ Then if $ s \toaut s' , $ by contextuality of the reduction $\la{\vec{x}} s \vec{x} \toaut \la{\vec{x}} s' \vec{x} .$

Let $ \gamma \vdash s : \tyl $  with $ s \toaut \vec{x}[\vec{x} := s] ,$ we can again conclude by contextuality.
The contextual cases are direct consequences of the IH.
\end{proof}

\end{document}